\documentclass[pdflatex,sn-mathphys-num]{sn-jnl}

\usepackage{graphicx}
\usepackage{multirow}
\usepackage{amsmath,amssymb,amsfonts}
\usepackage{amsthm}
\usepackage{mathrsfs}
\usepackage[title]{appendix}
\usepackage{xcolor}
\usepackage{textcomp}
\usepackage{manyfoot}
\usepackage{booktabs}
\usepackage{algorithm}
\usepackage{algorithmicx}
\usepackage{algpseudocode}
\usepackage{listings}
\usepackage{cleveref}

\usepackage{textcomp}
\usepackage{mathtools}

\usepackage{caption}
\usepackage{subcaption}

\newcommand{\Z}{\mathbb{Z}}
\newcommand{\N}{\mathbb{N}}
\newcommand{\R}{\mathbb{R}}

\newcommand{\1}{\mathbb{1}}

\DeclareMathOperator{\diag}{diag}

\theoremstyle{thmstyleone}
\newtheorem{theorem}{Theorem}
\newtheorem{corollary}{Corollary}
\newtheorem{lemma}{Lemma}

\newtheorem{proposition}[theorem]{Proposition}

\theoremstyle{thmstyletwo}
\newtheorem{example}{Example}

\theoremstyle{thmstylethree}

\begin{document}

\title[Quantum Location]{Quantum Optimization in Loc(Q)ation Science: QUBO Formulations, Benchmark Problems, and a Computational Study}

\author*[1]{\fnm{Felix P.} \sur{Broesamle}}\email{felix.broesamle@kit.edu}

\author*[1]{\fnm{Stefan} \sur{Nickel}}\email{stefan.nickel@kit.edu}

\affil[1]{\orgdiv{Institute for Operations Research}, \orgname{Karlsruhe Institute of Technology}, 
\orgaddress{
\city{Karlsruhe}, \postcode{76131},
\country{Germany}}}

\abstract{Recent advances in quantum computing and the increasing availability of quantum hardware have substantially enhanced the practical relevance of quantum approaches to discrete optimization. Among these, the Quadratic Unconstrained Binary Optimization (QUBO) formulation provides a unifying modeling framework for a broad class of $\mathbf{NP}$-hard problems and is naturally suited to quantum computing and quantum-inspired algorithms. Location science, network design, and logistics represent core application domains of discrete optimization, combining high practical impact with substantial computational challenges. In this work, we develop QUBO formulations for several fundamental problems in these domains, including a nonlinear integer formulation of the Discrete Ordered Median Problem (DOMP). Beyond their modeling relevance, these QUBO formulations serve as representative benchmark problems for assessing quantum algorithms and quantum hardware. We further derive a tight bound for the penalty parameter ensuring equivalence between the QUBO formulation and its underlying integer program. Finally, we conduct a comprehensive computational study using QAOA, WS-QAOA, and classical heuristics for QUBO instances of the $p$-Median Problem and the Fixed-Charge Facility Location Problem (FCFLP), and introduce two effective warm-start strategies for WS-QAOA based on its linear programming relaxation.}

\keywords{Location Science, Quantum Computing, Quantum Algorithms, Quantum Optimization, Benchmark Problems, Discrete Optimization, Combinatorial Optimization, Quadratic Unconstrained Binary Optimization (QUBO)}

\maketitle

\section{Introduction}
\label{sec:Introduction}

Solving discrete optimization problems is a cornerstone of decision-making across numerous domains, particularly in location science, network design, and logistics. The optimization problems arising in these areas are often $\mathbf{NP}$-hard, making them computationally difficult to solve using classical exact methods; indeed the development of integer programming theory and methods has been foundational in this context, see e.g., \cite{NemhauserWolsey:IntegerCombinatorialOptimization}, \cite{Wolsey:IntegerProgramming}, \cite{Conforti2014}, \cite{DreznerHamacher:FacilityLocation}, \cite{NickelPuerto:LocationTheory} and \cite{LaporteNickelLocationScience}.
As a result, there is growing interest in exploring approaches that may offer advantages over classical deterministic approaches, e.g., heuristics and metaheuristics. One such promising paradigm is quantum computing, which leverages the principles of quantum mechanics to potentially accelerate the solution of combinatorial optimization problems. 
\newline
Many quantum optimization methods are based on the Quantum Adiabatic Theorem, see \cite{Born1928:QuantumAdiabaticTheorem} and \cite{Kato:QuantumAdiabaticTheorem}. This theorem asserts that a quantum system initialized in the ground state of a simple Hamiltonian will remain in the instantaneous ground state as the Hamiltonian is slowly evolved into a more complex one, provided the evolution is sufficiently slow and certain spectral-gap conditions hold. By encoding an optimization problem into the final Hamiltonian, the system  ideally transitions into the optimal solution state at the end of the adiabatic evolution. This principle forms the foundation of Adiabatic Quantum Computing (AQC), see e.g., \cite{farhi:AdiabaticQuantumOptimization}, and underlies the operational mechanism of contemporary quantum annealers. It also motivates the design of the Quantum Approximate Optimization Algorithm (QAOA) \cite{farhi2014QAOA}, which can be interpreted as a discretized analogue of the adiabatic evolution and is implementable on gate-based quantum computers.
\newline
A class of problems particularly well-suited for such quantum devices are those expressible as instances of the Quadratic Unconstrained Binary Optimization Problem (QUBO). The QUBO formulation enables a wide range of discrete optimization problems to be reformulated as energy-minimization tasks, thereby aligning naturally with the operational principles of both quantum annealers and QAOA. In \cite{LucasIsing}, QUBO formulations are presented for the $21$ $\mathbf{NP}$-complete problems from Karp's list \cite{Karp1972}.
\newline
However, despite rapid technological advances and increasing attention, a significant gap remains in the availability of well-structured, practically relevant benchmark problems for assessing the capabilities of quantum algorithms and hardware \cite{Abbas:ChallengesandOpportunities}. Benchmarks must not only reflect theoretical complexity but also capture the combinatorial structure observed in real-world applications.
\newline
In the domain of location science, network design and logistics, classical modeling and solution techniques are well established, see e.g., \cite{LaporteNickelLocationScience}; however, to the best of our knowledge, optimization problems arising in location science have not yet been investigated using their QUBO formulations and quantum optimization algorithms.
\newline
In this work, we aim to help to close this gap by providing QUBO formulations for a collection of fundamental and computationally challenging problems from location science, network design, and logistics. Specifically, we consider the $p$-Median Problem, the $p$-Center Problem, the Fixed-Charge Facility Location Problem (FCFLP), the Generalized Assignment Problem (GAP), the Discrete Ordered Median Problem (DOMP), and the Discrete Multi-Period Facility Location Problem. These problems are not only of high practical relevance but are also $\mathbf{NP}$-hard in general, making them well suited as benchmark problems for quantum optimization. 
\newline
The choice of the penalty parameter in a QUBO formulation, when constraints are incorporated into the objective function, is critical. To address this, in Section \ref{sec:QUBO_Formulations} we propose a penalty parameter that ensures equivalence between the QUBO and its underlying integer program.
\newline
Additionally, we conduct a comprehensive computational study in Section \ref{sec:Numerical_Experiments} to evaluate the performance of current quantum algorithms on the proposed instances. We consider both QAOA \cite{farhi2014QAOA} and WS-QAOA \cite{Egger2021warmstartingquantum}, where WS-QAOA is a variant of the standard QAOA incorporating warm-start initialization. For the warm-start, a common approach is to consider the continuous relaxation or the semidefinite programming (SDP) relaxation of the QUBO \cite{Egger2021warmstartingquantum}. However, since the continuous relaxation yields in general a nonconvex continuous optimization problem, computing a global minimum is $\mathbf{NP}$-hard; in practice, only a local minimum of the continuous relaxation can be obtained efficiently. Similarly, the SDP relaxation is not necessarily tight and may therefore fail to produce high-quality warm-start points in general. To overcome this limitation, we propose two linear programming (LP) based warm-start strategies for the FCFLP in Section \ref{sec:WarmStarting_L} (FCFLP) and \ref{sec:WarmStarting_C}. The numerical experiments show that these warm-start strategies yield high-quality starting points, outperforming both the continuous relaxation and the SDP-based warm-start strategies. Furthermore, the proposed warm-start approaches can also be applied to QUBOs derived from general integer programs. 
\newline
The results of the computational study provide insights into the current state and limitations of quantum optimization technology. Moreover, in the computational experiments we consider the aggregated and disaggregated formulation of the FCFLP, as we broadly ask how the polyhedral formulation of the original optimization problem impacts the solutions obtained by quantum optimization algorithms when using QUBO formulations.
\newline
\newline
Quantum optimization algorithms, such as QAOA, have attracted significant attention, but are currently limited by Noisy Intermediate-Scale Quantum (NISQ) devices \cite{PreskillNISQ, Pellow_Jarman_2024_Noise, Quek2024_Noise}, which suffer from a limited number of qubits, restricted connectivity, and significant noise, often dominating algorithmic performance. To isolate algorithmic behavior, we perform all quantum experiments using the Qiskit \texttt{AerSimulator} \cite{qiskit2024, IBMQuantum} in a noise-free configuration, which computes the exact evolution of quantum states under prescribed gate operations.
\newline
\newline
\textbf{Related works:} QUBO formulations have been widely studied as a unifying modeling framework for combinatorial optimization. A general introduction and systematic methodology for deriving QUBOs from integer programs, along with their application to various combinatorial problems, is provided in \cite{QUBO:Kochenberger_Tutorial}. An overview of industrial applications of quantum computing and quantum optimization is presented in \cite{QuantumApplications}. Several studies address financial applications, such as portfolio optimization \cite{QUBO:Egger_Finance, QUBO:Brandhofer_Portfolio, QUBO:Hodson_portfolio}. Reference \cite{QUBO:Bochkarev} reviews quantum technologies for discrete optimization and QUBO formulations for combinatorial problems. Further examples include QUBO formulations for the Job Shop Scheduling Problem \cite{QUBO:Scheduling}, the binary knapsack problem \cite{QUBO:BinaryKnapsack_Dwave}, and routing problems \cite{QUBO:Cattelan_Routing, QUBO:Routing_2, QUBO:Routing2021}. A comprehensive catalogue of QUBO and Ising formulations for classical $\mathbf{NP}$-hard problems is presented in \cite{LucasIsing}, specifically the $21$ $\mathbf{NP}$-hard problems listed by Karp \cite{Karp1972}.
\newline
\newline
\textbf{Our contribution:} In this work, we introduce QUBO formulations for classical optimization problems arising in location science, network design, and logistics, including the $p$-Median Problem, the $p$-Center Problem, the Fixed-Charge-Facility Location Problem (FCFLP), the Discrete Ordered Median Problem (DOMP), and the Discrete Multi-Period Facility Location Problem. In doing so, we extend the benchmark problems in the field of quantum optimization to include practically highly relevant problems, which are also $\mathbf{NP}$-hard and challenging to solve. Furthermore, we propose a penalty parameter selection in Theorem \ref{theorem:PenArbitrary}, ensuring that a global optimum of the QUBO formulation corresponds to a global optimum of the underlying integer optimization problem. In the computational study, we investigate the newly introduced QUBO formulations for the $p$-Median Problem and the FCFLP applying QAOA, WS-QAOA, and classical heuristics. Additionally, we introduce two promising LP-based warm-start strategies for the FCFLP for WS-QAOA, which are also applicable for QUBOs arising from general integer programs. Finally, by presenting the aggregated and disaggregated formulation of the FCFLP together with the corresponding QUBO formulations, as well as the numerical evaluation of both formulations, we aim to draw attention to the need for further research on how the polyhedral properties of integer programs influence the behavior of quantum algorithms.
\newline
\newline
The structure of this paper is as follows: in Section \ref{sec:Preliminaries}, we introduce the fundamental concepts of quantum optimization, including the quantum optimization algorithms QAOA and WS-QAOA, as well as the basic principles of location science. In Section \ref{sec:QUBO_Formulations}, we present QUBO formulations for the aforementioned optimization problems. Finally, in Section \ref{sec:Numerical_Experiments}, we provide comprehensive numerical experiments.


\section{Preliminaries}
\label{sec:Preliminaries}
In this section, we introduce the fundamental concepts of quantum adiabatic optimization, Ising Hamiltonians, and their equivalent QUBO formulations, which form the theoretical foundation for quantum optimization methods. We further present the gate-based quantum algorithms QAOA and WS-QAOA, which are employed in the numerical experiments in Section \ref{sec:Numerical_Experiments}. Finally, we provide a brief overview of location science.
\subsection{Quantum Adiabatic Optimization, Ising Hamiltonian, and QUBOs}
\label{sec:Preliminaries_IsingHamiltonian_QUBO}
Quantum Adiabatic Optimization leverages the adiabatic theorem of quantum mechanics \cite{Born1928:QuantumAdiabaticTheorem, Kato:QuantumAdiabaticTheorem} to solve combinatorial optimization problems \cite{farhi:AdiabaticQuantumOptimization}. The process initializes a quantum system in the ground state of a simple Hamiltonian $H_0$, then gradually and slowly evolves it to the problem (cost) Hamiltonian $H_P$, whose ground state encodes the optimal solution \cite{LucasIsing}. The time-dependent Hamiltonian is typically defined as 
\begin{align*}
	H(t) = \left(1 - \frac{t}{T} \right) H_0 + \frac{t}{T}H_P,\quad \quad  t \in [0,T],
\end{align*}
where $T$ denotes the total evolution time. Provided the evolution is slow enough and spectral gaps do not close, the state of the system remains in the instantaneous ground state throughout the process. Measuring the system at $t = T$ yields an optimal solution with high probability, or even with probability $1$, if $T$ is large enough.
\newline
\newline
$H_P$ is chosen as the quantum analogue of a classical \textit{Ising Hamiltonian},
\begin{align*}
	H_P(z_1,...,z_n) = - \sum_{i <j} J_{ij}z_iz_j - \sum_{i}h_iz_i,
\end{align*}
with \textit{spin variable} $z_i \in \{-1,+1\}$, and real-valued pairwise couplings $J_{ij}$ and local fields $h_i$. In the quantum setting, the spin variables $z_i$ are replaced by Pauli-$Z$ operators acting on the $i$-th qubit, respectively. The initial Hamiltonian $H_0 = -h_0\sum_i X_i$ induces a transverse field, producing a uniform superposition over all computational basis states, where $X_i$ denotes the Pauli-$X$ operator acting on the $i$-th qubit.
\newline
\newline
Minimizing the classical \textit{Ising energy} is equivalent to solving a \textit{Quadratic Unconstrained Binary Optimization (QUBO)} problem via the substitution $z_i = 2x_i - 1$ with $x_i \in \{0,1\}$. This yields
\begin{align}
	\min_{x \in \{0,1\}^n} \quad x^TQx + q^Tx + c,  \label{QUBO}
\end{align}
with suitable $Q \in \R^{n \times n}$, $q \in \R^n$, and $c \in \R$, providing a discrete optimization problem naturally compatible with quantum optimization algorithms based on the quantum adiabatic theorem.
\subsection{Quantum Approximate Optimization Algorithm (QAOA)}
\label{sec:Preliminaires_QAOA}
The \textit{Quantum Approximate Optimization Algorithm (QAOA)}, originally proposed by \cite{farhi2014QAOA}, is a \textit{quantum-classical hybrid algorithm} for solving combinatorial optimization problems. Conceptually, QAOA is inspired by quantum adiabatic optimization, as it discretizes the continuous adiabatic evolution into a sequence of alternating unitaries. In contrast to quantum annealing, however, QAOA is specifically designed for gate-based, universal quantum computers.
\newline
QAOA depends on two sets of \textit{real-valued parameters},
\begin{align*}
	\gamma = (\gamma_1,...,\gamma_p), \quad \text{ and } \quad \beta = (\beta_1,...,\beta_p),
\end{align*}
where the \textit{depth parameter (number of layers)} $p \in \N$ controls both the expressive power of the variational ansatz and the number of alternating operator layers. The algorithm starts from the uniform superposition state $\lvert + \rangle^{\otimes n}$ and alternates between applying the \textit{cost (problem) Hamiltonian} $H_C$ and a \textit{mixing Hamiltonian} $H_M$, yielding the variational quantum state
\begin{align*}
	\lvert \psi(\beta, \gamma) \rangle = e^{-i \beta_pH_M}e^{-i\gamma_p H_C}\cdots e^{-i\beta_1H_M}e^{-i\gamma_1H_C}\lvert + \rangle^{\otimes n}.
\end{align*}
The cost (problem) Hamiltonian $H_C$ encodes the objective function of the underlying QUBO problem
\begin{align*}
	H_C \lvert x \rangle = C(x) \lvert x \rangle, \quad \quad \forall x \in \{0,1\}^n,
\end{align*}
where $C(x)$ denotes the \textit{QUBO cost function}. The mixing Hamiltonian $H_M$ is defined as
\begin{align*}
	H_M = \sum_{i = 1}^n X_i, 
\end{align*}
where $X_i$ denotes the Pauli-$X$ operator acting on the $i$-th qubit. 
\newline
For a fixed parameter pair $(\beta, \gamma)$, the quantum circuit implementing the above unitary generates a distribution over computational basis states, i.e., over points in $\{0,1\}^n$. Sampling from this distribution yields candidate solutions to the QUBO. The parameters are optimized classically to minimize the expected value
\begin{align*}
	\langle \psi(\beta, \gamma) \rvert H_C \lvert \psi(\beta, \gamma) \rangle,
\end{align*}
which serves as the variational estimate of the problem's objective function.
\begin{proposition}[\cite{QUBO:Bochkarev}, Remark 1]
	\label{prop:Physical_Qubits}
	An arbitrary QUBO instance with $N$ variables that is solved with QAOA requires $N$ physical qubits. While transpilation might affect the ouput quality, it does not necessarily require additional physical qubits.
\end{proposition}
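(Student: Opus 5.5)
The plan is to read the statement as two claims and prove each from the definition of QAOA in Section \ref{sec:Preliminaires_QAOA}. First, that $N$ qubits suffice: encode each binary variable $x_i$ in one qubit, so the register spans $\{\lvert x\rangle : x \in \{0,1\}^N\}$, a Hilbert space of dimension $2^N$. Then show the whole circuit acts on these $N$ qubits without ancillas:
\begin{itemize}
\item The initial state $\lvert + \rangle^{\otimes N}$ is prepared by one Hadamard per qubit.
\item The mixer $e^{-i\beta H_M} = \prod_i e^{-i\beta X_i}$ factors into single-qubit $R_X$ rotations, since the $X_i$ commute.
\item For the cost unitary, substitute $x_i = (1 - Z_i)/2$ in $C(x) = x^TQx + q^Tx + c$. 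This writes $H_C$ as a sum of $I$, $Z_i$ and $Z_iZ_j$ terms, all mutually commuting. Hence $e^{-i\gamma H_C}$ factors exactly (no Trotter error) into a global phase, single-qubit $R_Z$ gates, and two-qubit $R_{ZZ}$ gates.
\item Each $R_{ZZ}$ decomposes as CNOT--$R_Z$--CNOT on the same two qubits.
\end{itemize}
Measurement in the computational basis reads off $x$ directly. So the logical circuit uses exactly $N$ qubits. Necessity of $N$ is immediate, since fewer qubits cannot represent all $2^N$ basis states that the sampling must be able to return.

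Second, transpilation does not necessarily require extra physical qubits. I would argue by construction for an arbitrary coupling map that is a connected graph on at least $N$ physical qubits. Fix any injective layout of the $N$ logical qubits onto physical qubits. Every $R_{ZZ}$ between non-adjacent qubits is realized by a chain of SWAP gates along a path in the coupling graph, which brings the two states adjacent, applies the gate, and optionally swaps back. SWAPs permute states among the already-allocated physical qubits, so no ancillas are introduced. Keeping track of the final permutation relabels the measured bitstring.

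The resulting circuit is longer, which in the NISQ setting increases noise exposure and may degrade output quality. This justifies the qualifying phrase ``might affect the output quality'' without contradicting the qubit count.

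The main obstacle is making the second claim precise, because ``does not necessarily'' is an existence statement that depends on the hardware model. I would fix the assumptions explicitly: a connected coupling map with at least $N$ qubits and a native gate set containing CNOT and arbitrary single-qubit rotations. Under these assumptions the SWAP-routing argument gives the result. Where helpful I would cite \cite{QUBO:Bochkarev}, Remark 1, for the connection to standard transpiler routing.
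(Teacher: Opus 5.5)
The paper gives no proof of this proposition. It is imported from Remark~1 of \cite{QUBO:Bochkarev}, where it mainly contrasts QAOA with quantum annealing, whose minor embedding needs chains of extra physical qubits. Your proposal supplies what the paper omits, a self-contained constructive argument, and its approach is sound.

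\begin{itemize}
\item The gate-level decomposition settles the count of $N$. It consists of Hadamards, $R_X$, $R_Z$ and $R_{ZZ}=\mathrm{CNOT}\cdot R_Z\cdot \mathrm{CNOT}$, and it is exact because the $Z$-terms commute.
\item SWAP routing handles the transpilation claim.
\item Fixing the hardware assumptions explicitly is the only way to give ``does not necessarily'' a precise meaning.
\item The argument carries over verbatim to WS-QAOA, whose initial state and mixer are products of single-qubit $\hat{R}_Y$/$\hat{R}_Z$ rotations. The paper silently relies on this, since Lemma~\ref{lemma:p-median_qubits} and its analogues invoke the proposition for WS-QAOA as well.
\end{itemize}

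One step fails as written. You fix \emph{any} injective layout, route along a path in the coupling graph, and assert that SWAPs only permute states among the already-allocated physical qubits. Suppose the device has more than $N$ qubits and the image of your layout does not induce a connected subgraph. Then some pairs of allocated qubits are joined only by paths through idle physical qubits. Swapping along such a path moves a logical qubit onto a previously unused physical qubit, so the circuit's footprint exceeds $N$, and ``no ancillas are introduced'' does not follow.

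The repair is short.
\begin{itemize}
\item Choose the layout onto a set $S$ of $N$ physical qubits that induces a connected subgraph. For example, take the first $N$ vertices visited by a breadth-first search; each of them is adjacent to an earlier one.
\item Route only along paths inside the subgraph induced by $S$.
\item SWAPs along such paths map $S$ to itself. Hence every later two-qubit gate again acts on two qubits of $S$ joined by a path in that subgraph.
\end{itemize}
With this change the whole routed circuit lives on exactly $N$ physical qubits, and your existence claim goes through under your stated assumptions.
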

\subsection{Warm-Start Quantum Approximate Optimization Algorithm (WS-QAOA)}
\label{sec:Preliminaries_WS_QAOA}
Warm-Start QAOA (WS-QAOA), introduced by \cite{Egger2021warmstartingquantum}, is a variant of QAOA that incorporates problem-specific information into both the \textit{initial state} and the \textit{mixing Hamiltonian}. In contrast to the standard QAOA, which starts from the uniform superposition $\lvert + \rangle^{\otimes n}$, WS-QAOA initializes the quantum circuit using a \textit{warm-start point} $x^0 \in \R^n$.
\newline
The WS-QAOA circuit is initialized in the state
\begin{align*}
	\lvert \phi^0 \rangle = \bigotimes_{i = 0}^{n-1} \hat{R}_{Y}(\theta_i) \lvert 0 \rangle
\end{align*}
where $\hat{R}_Y(\theta_i)$ denotes a rotation around the $Y$-axis of qubit $i$ by the angle $\theta_i = 2 \arcsin \left( \sqrt{x_i^0}  \right)$ and $x_i^0 \in [0,1]$ is the $i$-th component of the warm-starting point $x^0$. 
The \textit{warm-start mixer Hamiltonian} is defined as 
\begin{align*}
	\hat{H}_M^{(ws)} =\sum_{i = 0}^{n-1} \hat{H}_{M,i}^{(ws)},
\end{align*} with individual components
\begin{align*}
	\hat{H}_{M, i}^{(ws)} &= \begin{pmatrix}
		2c_i^* - 1 & -2 \sqrt{c_i^*(1-c_i^*)} \\
		-2 \sqrt{c_i^* (1 - c_i^*)} & 1 - 2c_i^*
	\end{pmatrix} \\
	& = -\sin(\theta_i)\hat{X} - \cos(\theta_i) \hat{Z},
\end{align*}
for $i = 0,...,n-1$. $\hat{H}_{M, i}^{(ws)}$ can be implemented using the single-qubit rotations $\hat{R}_Y(\theta_i)\hat{R}_Z(-2\beta)\hat{R}_Y(-\theta_i)$.
\newline
While WS-QAOA can improve convergence and solution quality by biasing the search toward promising regions of the solution space, it may encounter \textit{reachability issues} when the warm-start point $x^0$ has integral components, which differ from the global optimal solution $x^*$ of \eqref{QUBO}, i.e., when $x_i^0 = 0$ and $x_i^* = 1$, or vice versa.
\newline
A practical strategy to mitigate this issue is to construct a modified warm-start point $\bar{x}^0$ by projecting $x^0$ onto the $\ell_{\infty}$-norm ball centered at $(\frac{1}{2},...,\frac{1}{2}) \in \R^n$ with radius $r = 0.5 - \varepsilon$ for $\varepsilon \in (0, 0.5]$. This projection ensures that all components of $\bar{x}^0$ satisfy $\bar{x}^0_i \in [\varepsilon, 1-\varepsilon]$ for all $i = 1,...,n$.

\subsection{Location Science}
\label{sec:Preliminaries_Location_Science}
Location science is a well-established research field concerned with determining the optimal placement of facilities to serve a given set of demand points. Since its origins in the classical works \cite{vonThunen1842} and \cite{Weber1909}, the field has evolved into a core area of operations research, combining elements of mathematics, economics, geography and computer science \cite{LaporteNickelLocationScience, DreznerHamacher:FacilityLocation, NickelPuerto:LocationTheory, Eiselt2011}. Over the past decades, location science has developed a rich theoretical foundation and a wide range of modeling and solution approaches, motivated by both methodological advances and practical applications in diverse domains \cite{LocationScience:Introduction}. 
\newline
Discrete location problems, in particular, have received extensive attention due to their practical relevance in logistics, transportation, network design and public service planning. These problems typically involve selecting a subset of candidate sites for facility placement while minimizing or maximizing a given objective function that represents costs, distances, or service quality \cite{LocationScience:Introduction}.
\newline
In the discrete setting, a finite set of candidate sites is available for facility placement, and a finite set of clients requires service from these facilities. Let $I = \{1,...,n\}$ denote the set of potential sites, which, without loss of generality, $I$ is assumed to coincide with the set of clients $J$. The objective is typically to decide which sites to open and how to assign clients to open facilities according to a specified cost or distance metric.


\section{Optimization Problems arising in Location Science as QUBOs}
\label{sec:QUBO_Formulations}
In this section, we present QUBO formulations for the $p$-Median Problem (Section \ref{sec:QUBO_p_median}), the $p$-Center Problem (Section \ref{sec:QUBO_p_center}), the Fixed-Charge Facility Location Problem (FCFLP, Section \ref{sec:QUBO_FCFLP}), the Generalized Assignment Problem (GAP, Section \ref{sec:QUBO_GAP}), the Discrete Ordered Median Problem (DOMP, Section \ref{sec:QUBO_DOMP}), and the Discrete Multi-Period Facility Location Problem (Section \ref{sec:QUBO_DMPFLP}). We consider the general setting described above in Section \ref{sec:Preliminaries_Location_Science}. Before deriving the QUBO formulations, Section \ref{sec:QUBO_Penalty} establishes a result that yields a tight penalty parameter ensuring equivalence between an integer program and its QUBO formulation.
\newline
\newline
We follow standard practice in the literature, see, e.g., \cite{LucasIsing}, and present each QUBO formulation \eqref{QUBO} in terms of the corresponding cost Hamiltonian written in binary variables. The corresponding upper triangular (or symmetric) QUBO matrix can then be obtained by computing the linear and quadratic coefficients of the Hamiltonian.
\subsection{Formulating QUBOs and Choice of Penalty Parameter}
\label{sec:QUBO_Penalty}
We consider general integer programs (IP) of the form
\begin{subequations}\label{IP}
	\begin{align}
		\min_x \quad & \tilde{c}^Tx ,\\
		\mbox{s.t.} \quad  & \tilde{A}x = \tilde{b}, \\
		&Bx \leq d, \\
		&l \leq x \leq u, \\
		& x \in \Z^{\tilde{n}},
	\end{align}
\end{subequations}
where $\tilde{c} \in \R^{\tilde{n}}, \tilde{A} \in \Z^{m_1 \times \tilde{n}}, B \in \Z^{m_2 \times \tilde{n}}, \tilde{b} \in \Z^{m_1}, d \in \Z^{m_2},$ and $l, u \in \Z^{\tilde{n}}$.
\newline
A common first step in constructing a QUBO formulation \eqref{QUBO} is to reformulate \eqref{IP} into an equivalent binary program of the form
\begin{subequations}  \label{BP}
	\begin{align}
		\min_{x} \quad & c^Tx  \\
		\mbox{s.t.} \quad & Ax = b, \\
		& g_i(x) \leq 0, & \forall i = 1,...,r, \\
		& x \in \{0,1\}^n,
	\end{align}
\end{subequations}
where $c \in \R^n, A \in \Z^{m \times n},b \in \Z^m$, and $g_i: \{0,1\}^n \rightarrow \Z$, are simple integer-valued constraint functions, for which direct penalty-terms exist, see Table \ref{tab:penalty_terms}.
\newline
Inequality constraints in \eqref{IP} are first transformed into equality constraints by introducing integer slack variables or kept as simple constraints $g_i$. Subsequently, each integer variable is represented using a binary encoding. 
\newline
Now, to obtain a QUBO formulation, the constraints of \eqref{BP} are removed and their violation in the objective function penalized. The general linear equality constraints $Ax = b$ are incorporated into the objective via row-wise quadratic penality terms, i.e,
\begin{align*}
	(b_i - A_ix)^2, \quad \quad i = 1,...,m,
\end{align*} 
where $A_i$ denotes the $i$-th row of $A$ and any violation of constraint $i$ yields at least a value of $1$. 
For each constraint $g_i(x) \leq 0$, we associate a penalty function $p_i(x)$, defined such that any violation produces an integer-valued penalty of at least one.
See Table \ref{tab:penalty_terms} for the penalty-terms.
\begin{table}[ht]
	\centering
	\caption{Table of simple constraints and corresponding penalty terms}
	\label{tab:penalty_terms}
	\begin{tabular}{@{}l r@{}}
		\toprule
		Constraint & Penalty term \\
		\midrule
		$x_i + x_j \leq 1$ & $P \cdot x_ix_j$ \\
		$x_i \leq x_j$ & $P \cdot x_i(1 - x_j) $ \\
		$x_1 + \cdots + x_n \leq 1$ & $P \cdot \sum_{i = 1}^{n-1}\sum_{j = i+1}^nx_ix_j$ \\
		\bottomrule
	\end{tabular}
\end{table}
\newline
We now present the penalized quadratic unconstrained binary optimization (QUBO) formulation of \eqref{BP}, expressed here as an alternative to the classical matrix-based representation \eqref{QUBO}:
\begin{align}
	\min_{x \in \{0,1\}^n} f(x) = c^Tx + P\cdot \sum_{i=1}^{r}p_i(x)   + P \cdot \sum_{i = 1}^m(b_i - A_ix)^2, \label{QUBO-PEN}
\end{align}
where $P > 0$ is a penalty parameter.
\newline
\newline
The choice of the penalty parameter $P$ is critical for both theoretical and practical performance, analogously to classical penalty methods in nonlinear programming, see, e.g., \cite{Bazaraa}. $P$ must be sufficiently large to ensure that any global minimizer of \eqref{QUBO-PEN} is feasible for \eqref{BP}, yet not so large as to induce numerical instability or an ill-conditioned energy landscape.
\newline
\newline
We next provide tight conditions on the penalty parameter that guarantee equivalence between the QUBO formulation \eqref{QUBO-PEN} and the binary program \eqref{BP}, and therefore the integer program \eqref{IP}.
\begin{theorem}
	\label{theorem:PenArbitrary}
	Consider the optimization problem  \eqref{IP}, its equivalent formulation \eqref{BP}, and the associated QUBO formulation \eqref{QUBO-PEN}, and assume that \eqref{IP} is feasible.
	Let $c \in \R^n$ denote the cost vector of \eqref{BP}. If the penalty parameter in \eqref{QUBO-PEN} satisfies $P > \sum_{i=1}^n\lvert c_i \rvert$, then every global minimizer of the optimization problem \eqref{QUBO-PEN} is feasible for \eqref{IP} and is also a global minimizer of \eqref{IP}.
\end{theorem}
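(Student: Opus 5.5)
The argument is a direct comparison of objective values over the finite set $\{0,1\}^n$. Since \eqref{IP} is feasible, so is \eqref{BP} by the assumed equivalence of the two formulations. Hence there is a feasible binary point $\hat x$, and every feasible point $x$ satisfies $f(x) = c^Tx$, because all penalty terms vanish exactly on feasible points. The goal is to show that every infeasible binary point has strictly larger QUBO value than every feasible one. Once that holds, every global minimizer of \eqref{QUBO-PEN} is feasible. On the feasible set $f$ coincides with $c^Tx$, so such a minimizer also minimizes $c^Tx$ over the feasible set of \eqref{BP}, and therefore corresponds to a global minimizer of \eqref{IP}.

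The key step is a uniform bound on the objective over the cube. For any $x,y \in \{0,1\}^n$ we have
\[
c^Tx - c^Ty = \sum_{i=1}^n c_i (x_i - y_i),
\]
and each $\lvert x_i - y_i\rvert \le 1$. Therefore $\lvert c^Tx - c^Ty \rvert \le \sum_{i=1}^n \lvert c_i\rvert$. Writing $x^-=\sum_{c_i<0}c_i$ and $x^+=\sum_{c_i>0}c_i$, this says $c^Tx$ always lies in the interval $[x^-,x^+]$, whose length is $\sum_i\lvert c_i\rvert$.

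Next I use integrality of the penalties. Each equality penalty $(b_i - A_ix)^2$ is the square of an integer, because $A$ and $b$ are integral and $x$ is binary. It is therefore either $0$ or at least $1$. Each $p_i$ is by construction a nonnegative integer-valued function that is $0$ on points satisfying $g_i(x)\le 0$ and at least $1$ on violating points; this holds for all entries of Table \ref{tab:penalty_terms}, with the factor $P$ pulled out. Consequently, if $x$ is infeasible, the total penalty is at least $P$, and
\[
f(x) \ge c^Tx + P > c^Tx + \sum_i \lvert c_i\rvert \ge c^T y
\]
for every feasible $y$, by the bound above. In particular $f(x) > c^Ty^* = \min f$ over feasible points. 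So $x$ cannot be a global minimizer of \eqref{QUBO-PEN}.

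The main obstacle is bookkeeping rather than mathematics. One point is that in Table \ref{tab:penalty_terms} the factor $P$ is written inside the terms, while \eqref{QUBO-PEN} factors it out; I will read $p_i$ without $P$ and state explicitly that $p_i \ge 0$ everywhere, $p_i\in\Z$, and $p_i(x)=0$ iff $g_i(x)\le 0$. The other point is that the transfer back to \eqref{IP} relies on the binary reformulation (slack variables and binary encodings) being a bijection-preserving equivalence on the feasible sets with equal objective values. I will cite this as the assumed equivalence of \eqref{IP} and \eqref{BP}. Slack and encoding variables carry zero cost, so $c$ is just $\tilde c$ composed with the encoding.
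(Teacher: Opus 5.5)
Your proposal is correct and follows essentially the same argument as the paper. The paper bounds $c^Ty^* \ge -C^-$ and $c^Tx \le C^+$ for any feasible $x$, which gives $f(y^*) \ge -C^- + P > C^+ \ge c^Tx$ for a supposedly infeasible minimizer $y^*$; this is your bound $\lvert c^Tx - c^Ty\rvert \le \sum_i \lvert c_i\rvert$ split into its two one-sided halves. Your explicit requirement that each $p_i$ be nonnegative and vanish exactly on feasible points is used only implicitly in the paper, so stating it is a small improvement rather than a different route.
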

\begin{proof}
	Let $P > \sum_{i=1}^n\lvert c_i \rvert$. We show the statement for \eqref{BP}; the result for \eqref{IP} then follows immediately from their equivalence.
	Let $\emptyset \neq X =\{x \in \{0,1\}^n: Ax = b, g_i(x)\leq 0,i=1,...,r\}$ denote the feasible set of \eqref{BP}. We set $C^- = \sum_{i=1:c_i<0}^n \lvert c_i\rvert$ and $C^+ = \sum_{i=1:c_i> 0}^n c_i$. Then, for any $x \in \{0,1\}^n$, we have $c^Tx \in \{-C^- ,...,C^+\}$.
	\newline
	Let $y^*$ be a global minimizer of \eqref{QUBO-PEN}. We show by contradiction that $y^*$ is feasible for \eqref{BP}. Therefore, suppose that $y^*$ is infeasible for \eqref{BP}. Then, it holds 
	\begin{align}
		\sum_{i=1}^r p_i(y^*) + \sum_{i = 1}^m(b_i - A_iy^*)^2 \geq 1, \label{PenActive}
	\end{align} 
	and we obtain
	\begin{align*}
		f(y^*) &= c^Ty^* + P\cdot \sum_{i=1}^r p_i(y^*) + P \cdot \sum_{i = 1}^m(b_i - A_iy^*)^2 \\ & \stackrel{\eqref{PenActive}}{\geq} c^Ty^*  + P \stackrel{c^Ty^* \in \{-C^- ,...,C^+\}}{\geq}-C^- 
		+ P \\ &> c^Tx, \quad\quad   \forall x \in X.
	\end{align*}
	This is a contradiction to the global optimality of $y^*$. Now, as $y^*$ is feasible, we conclude by the global optimality of $y^*$:
	\begin{align*}
		f(y^*) = c^Ty^* + P \cdot 0 = c^Ty^* = \min_{x \in X}c^Tx.
	\end{align*}
\end{proof}
\begin{corollary}
	\label{theorem:PenPos}
	Consider the optimization problem \eqref{IP}, its equivalent formulation \eqref{BP}, and the associated QUBO formulation \eqref{QUBO-PEN}, and assume that \eqref{IP} is feasible.
	Let $c \in \R^n$, with $c \geq 0$, denote the cost vector of \eqref{BP}. If the penalty parameter in \eqref{QUBO-PEN} satisfies $P > \sum_{i = 1}^nc_i$, then every global minimizer of the optimization problem \eqref{QUBO-PEN} is feasible for \eqref{IP} and is also a global minimizer of \eqref{IP}.
\end{corollary}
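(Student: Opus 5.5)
The corollary is a direct specialization of Theorem \ref{theorem:PenArbitrary}, and I plan to derive it that way. If $c \geq 0$ componentwise, then $\lvert c_i \rvert = c_i$ for every $i = 1,\dots,n$. Hence $\sum_{i=1}^n \lvert c_i \rvert = \sum_{i=1}^n c_i$, and the hypothesis $P > \sum_{i=1}^n c_i$ is exactly the hypothesis $P > \sum_{i=1}^n \lvert c_i\rvert$ of Theorem \ref{theorem:PenArbitrary}. The remaining assumptions are identical: \eqref{IP} is feasible, and \eqref{BP} and \eqref{QUBO-PEN} are as before. So the theorem applies verbatim, and every global minimizer of \eqref{QUBO-PEN} is feasible and optimal for \eqref{IP}.

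For a self-contained argument, I would instead retrace the theorem's proof with the sharper data available when $c \geq 0$. In the notation of that proof, $C^- = 0$ and $C^+ = \sum_i c_i$, so $c^Tx \in [0, C^+]$ for all binary $x$.

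\begin{itemize}
\item \emph{Infeasible points.} Take a global minimizer $y^*$ of \eqref{QUBO-PEN} and suppose it is infeasible. Since all penalty terms are integer-valued, they sum to at least $1$. Using $c^Ty^* \geq 0$, this gives $f(y^*) \geq P > \sum_i c_i \geq c^Tx = f(x)$ for any feasible $x \in X \neq \emptyset$. This contradicts the minimality of $y^*$.
\item \emph{Optimality.} Once $y^*$ is known to be feasible, its penalty terms vanish. So $f(y^*) = c^Ty^*$, while $f = c^T x$ on $X$. Minimality of $y^*$ over all of $\{0,1\}^n$ then gives $c^Ty^* = \min_{x\in X} c^Tx$.
\item \emph{Transfer to \eqref{IP}.} Feasibility and optimality pass from \eqref{BP} to \eqref{IP} through the assumed equivalence of the two formulations.
\end{itemize}

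There is no real obstacle here. The only point needing care is the same as in the theorem: each violated constraint must contribute an integer penalty of at least $1$. This holds because $A$, $b$ and the $g_i$ are integral, and the penalty terms in Table \ref{tab:penalty_terms} are integer-valued on binary points.
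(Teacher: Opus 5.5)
Your proposal is correct and matches the paper's own proof, which is the one-line observation that the corollary follows from Theorem~\ref{theorem:PenArbitrary}, because $c \geq 0$ gives $\sum_{i=1}^n \lvert c_i\rvert = \sum_{i=1}^n c_i$. Your optional self-contained version is that theorem's proof specialized to $C^- = 0$. Like the paper, it tacitly relies on each penalty term being nonnegative on binary points, so that a single violation cannot be offset by the other terms.
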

\begin{proof}
	This follows directly from Theorem \ref{theorem:PenArbitrary}.
\end{proof}
For certain structured instances of \eqref{IP} and \eqref{BP}, smaller penalty parameters may suffice, particularly if additional problem-specific information, preprocessing, or relaxation-based estimates are available. 
We note that, for several combinatorial problems, \cite{QUBO:Bochkarev} provides a problem-specific selection of a tight penalty parameter. A similar proof strategy was employed in \cite{QUBO:Routing2021} to verify that the penalty parameter from Theorem \ref{theorem:PenArbitrary} is valid for a vehicle routing problem with time windows.
\newline
Henceforth, we assume that all considered optimization problems are feasible.

\subsection{p-Median Problem}
\label{sec:QUBO_p_median}
The $p$-Median Problem is a classical optimization problem in location science, first formalized in \cite{HakimiPMedianProblem1}. The objective is to select $p$ facility locations, called \textit{medians}, from a set of candidate sites to minimize the total distance or cost between demand points and their assigned facilities, see also \cite{LocationScience:Chapter2pMedian}. Let $d_j > 0, j \in J$ denote the demand of each client, and $c_{ij} \geq 0$ the cost of serving client $j \in J$ from facility $i \in I$. From now on, we assume $I = J = \{1,...,n\}$, so that there are $n$ potential facility locations and $p < n$.
\newline
To formulate the $p$-Median Problem, we introduce the following binary variables:
\begin{itemize}
	\itemsep0em
	\item $y_i = 1$, if a facility is placed at site $i \in I$ and 0 otherwise. 
	\item $x_{ij} = 1$, if client $j  \in J$ is assigned to a facility located at $i \in I$ and $0$ otherwise.
\end{itemize}
The $p$-Median Problem is formulated as the following optimization problem:
\begin{subequations}\label{p-Median-problem}
	\begin{align}
		\displaystyle\min_{x, y} \quad & \sum_{i \in I}\sum_{j \in J} d_j c_{ij}x_{ij} \\
		\mbox{s.t.} \quad & \sum_{i \in I} x_{ij} = 1, & \forall j \in J, \\
		& x_{ij} \leq y_i,   & \forall  i \in I, j \in J, \\
		& \sum_{i \in I} y_i = p,&  \\
		& x_{ij} \in \{0,1\}, & \forall  i \in I, j \in J,\\
		& y_i \in \{0,1\}, & \forall  i \in I.
	\end{align}
\end{subequations}
Here, the first condition ensures that every client is assigned to exactly one facility. The second constraint links the allocation and facility variables, ensuring that a client can only be assigned to an open facility. The third constraint ensures the choice of exactly $p$ facilities. 
\begin{example}
	\label{exp:p_median_4_2_data}
	We consider the $p$-Median Problem with $n = 4$ potential facility locations. The input data are given by the assignment cost matrix $C \in \R^{4 \times 4}$ and the demand vector $d \in \R^4$:
	\begin{align*}
		C = \begin{pmatrix}
			2 & 11 & 13 & 6 \\
			14 & 0 & 15 & 11 \\
			5 & 14 & 1 & 6 \\
			5 & 12 & 15 & 2
		\end{pmatrix}, \quad \quad d = \begin{pmatrix}
			4 \\ 4 \\ 13 \\ 11
		\end{pmatrix}
	\end{align*}
	For $p = 2$, the optimal solution opens the facilities $2$ and $3$, i.e., $y_2 = y_3 = 1$, while $y_1 = y_4 = 0$. The corresponding client-facility assignments are $x_{22} = x_{31} = x_{33} = x_{34} = 1$, with all remaining $x_{ij} = 0$. The resulting optimal objective value is $99$.
\end{example}
\textbf{QUBO formulation}: The cost Hamiltonian for \eqref{p-Median-problem} is given by
\begin{subequations} \label{p-Median-QUBO}
	\begin{align}
		H  = & \sum_{i \in I}\sum_{j \in J} d_j c_{ij}x_{ij} \\
		& + P\sum_{j \in J}\left(1 - \sum_{i \in I}x_{ij}\right)^2 \\
		& + P \sum_{i \in I} \sum_{j \in J} x_{ij}(1 - y_i) \\
		&+ P \left(p - \sum_{i \in I}y_i \right)^2
	\end{align}
\end{subequations}
The $p$-Median Problem has the appealing property that no auxiliary variables or constraints are required to express it as a QUBO \eqref{QUBO}.
\begin{lemma}
	\label{lemma:p-median_qubits}
	The $p$-Median Problem \eqref{p-Median-problem} can be formulated as a QUBO with $n^{p\text{-Median}} = \lvert I \rvert \lvert J \rvert + \lvert I \rvert$ binary variables. Therefore, the QUBO formulation \eqref{p-Median-QUBO} can be solved on a gate-based Quantum Computer with $n^{p\text{-Median}}$ physical qubits via QAOA or WS-QAOA.
\end{lemma}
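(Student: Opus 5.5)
The plan is to count the variables in the Hamiltonian \eqref{p-Median-QUBO} directly and then apply Proposition \ref{prop:Physical_Qubits}. First I check that \eqref{p-Median-QUBO} is a genuine QUBO in the sense of \eqref{QUBO}, i.e.\ a polynomial of degree at most two in binary variables with no extra variables. The four terms are handled one at a time:
\begin{itemize}
	\item the objective $\sum_{i,j} d_j c_{ij}x_{ij}$ is linear;
	\item each assignment penalty $\bigl(1-\sum_{i}x_{ij}\bigr)^2$ expands into constants, linear terms, and products $x_{ij}x_{kj}$;
	\item each linking term $x_{ij}(1-y_i)=x_{ij}-x_{ij}y_i$ is quadratic, matching the second row of Table \ref{tab:penalty_terms};
	\item the cardinality term $\bigl(p-\sum_i y_i\bigr)^2$ expands into products $y_iy_k$, linear terms, and a constant.
\end{itemize}
Using $x^2=x$ on binaries, all squared variables become linear terms. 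The whole expression therefore has the form $x^TQx+q^Tx+c$.

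Next comes the correctness of the reformulation, so that solving the QUBO actually solves \eqref{p-Median-problem}. Here I invoke Corollary \ref{theorem:PenPos}. The costs $d_jc_{ij}$ are nonnegative because $d_j>0$ and $c_{ij}\ge 0$. Each penalty is integer-valued and at least $1$ whenever its constraint is violated:
\begin{itemize}
	\item for the equalities, $b_i-A_ix$ is integral, so a violation gives a square of at least $1$;
	\item for the linking inequalities, $x_{ij}(1-y_i)=1$ exactly when $x_{ij}>y_i$.
\end{itemize}
Hence choosing $P>\sum_{i,j}d_jc_{ij}$ makes every global minimizer of \eqref{p-Median-QUBO} optimal for \eqref{p-Median-problem}. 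Feasibility is given, since $p<n$ and we may open any $p$ sites and assign every client to one of them.

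The count itself is immediate. The only variables appearing are $x_{ij}$ for $(i,j)\in I\times J$ and $y_i$ for $i\in I$. No slack or binary-encoding variables are needed, because the only inequalities, $x_{ij}\le y_i$, are simple constraints with direct penalty terms. This gives $\lvert I\rvert\lvert J\rvert+\lvert I\rvert$ binary variables, which equals $n^2+n$ under $I=J$. The qubit statement then follows verbatim from Proposition \ref{prop:Physical_Qubits}, which maps an $N$-variable QUBO to $N$ physical qubits for QAOA. For WS-QAOA the same holds: the warm-start initial state and mixer act with single-qubit rotations on the same register and add no ancillas.

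There is no real obstacle here. The only point needing care is confirming that no auxiliary variables are introduced, which is exactly the observation that $x_{ij}\le y_i$ admits the direct quadratic penalty from Table \ref{tab:penalty_terms} rather than a slack encoding.
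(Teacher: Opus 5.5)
Your proposal is correct and matches the paper's proof: count the variables $x_{ij}$ and $y_i$ appearing in the Hamiltonian \eqref{p-Median-QUBO}, then invoke Proposition~\ref{prop:Physical_Qubits}. Your extra checks are sound but go beyond what the paper's proof of this lemma contains. These are that the expression is genuinely quadratic, that the penalties are correct via Corollary~\ref{theorem:PenPos} (the paper handles this separately in Corollary~\ref{korollar:p_median_penalty}), and that the WS-QAOA rotations need no ancillas.
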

\begin{proof}
	In the QUBO formulation \eqref{p-Median-QUBO}, only the binary variables $x_{ij}$ for $i \in I, j \in J$, and $y_i$ for $i \in I$, are used, giving a total of $n^{p\text{-Median}} = \lvert I \rvert \lvert J \rvert + \lvert I \rvert$ binary variables. The second part of the statement follows directly from Proposition \ref{prop:Physical_Qubits}.
\end{proof}
\begin{corollary}
	\label{korollar:p_median_penalty}
	Let the penalty parameter in \eqref{p-Median-QUBO} be $P = \sum_{j \in J} \sum_{i \in I} d_jc_{ij} + 1$. Then, any global minimizer of \eqref{p-Median-QUBO} yields a global minimizer of \eqref{p-Median-problem}.
\end{corollary}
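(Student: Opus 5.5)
The plan is to obtain the statement as a direct instance of Corollary \ref{theorem:PenPos}, which is itself a special case of Theorem \ref{theorem:PenArbitrary}.

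\emph{Step 1: cast \eqref{p-Median-problem} in the form \eqref{BP}.} The variables $(x,y) \in \{0,1\}^{|I||J|+|I|}$ are already binary. The assignment constraints $\sum_{i\in I} x_{ij}=1$ and the cardinality constraint $\sum_{i \in I} y_i = p$ are linear equalities $Ax=b$ with integer data. The linking constraints $x_{ij}\le y_i$ are simple constraints $g(x,y)\le 0$.

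\emph{Step 2: check that \eqref{p-Median-QUBO} is exactly the associated \eqref{QUBO-PEN}.}
\begin{itemize}
\item The squared terms $\bigl(1-\sum_{i\in I} x_{ij}\bigr)^2$ and $\bigl(p-\sum_{i\in I} y_i\bigr)^2$ are the row-wise quadratic penalties. Because $A$ and $b$ are integral, every violated equality contributes at least $1$.
\item By Table \ref{tab:penalty_terms}, the term $x_{ij}(1-y_i)$ is the standard penalty for $x_{ij}\le y_i$. It equals $1$ exactly when $x_{ij}=1$ and $y_i=0$, and $0$ otherwise.
\end{itemize}
So every infeasible binary point incurs a total penalty of at least $1$ before multiplication by $P$. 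This is precisely hypothesis \eqref{PenActive} used in the proof of Theorem \ref{theorem:PenArbitrary}.

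\emph{Step 3: apply Corollary \ref{theorem:PenPos}.} The cost vector has entries $d_jc_{ij}$ on the $x_{ij}$ variables and $0$ on the $y_i$ variables. All entries are nonnegative, since $d_j>0$ and $c_{ij}\ge 0$. Hence $\sum|c| = \sum_{j\in J}\sum_{i\in I} d_jc_{ij}$, and the choice $P=\sum_{j\in J}\sum_{i\in I} d_jc_{ij}+1$ strictly exceeds it.

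Feasibility of \eqref{p-Median-problem} holds because $p<n$ (assuming $p\ge 1$): open any $p$ sites and assign every client to one of them. This matches the paper's standing feasibility assumption. The corollary then says every global minimizer of \eqref{p-Median-QUBO} is feasible and optimal for \eqref{p-Median-problem}.

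The only point needing care is Step 2, namely confirming that each penalty term is integer-valued and at least $1$ upon violation. Once that is checked, the rest is bookkeeping.
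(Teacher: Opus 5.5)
Your proposal is correct and takes the same route as the paper: the paper notes that \eqref{p-Median-problem} is already of the form \eqref{BP}, with nonnegative objective $\sum_{j \in J}\sum_{i \in I} d_j c_{ij} x_{ij}$, and then invokes Corollary~\ref{theorem:PenPos}. Your explicit checks fill in details the paper leaves implicit, namely that each penalty term in \eqref{p-Median-QUBO} is integer-valued and at least $1$ under violation, and that the problem is feasible for $1 \le p < n$.
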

\begin{proof}
	The $p$-Median Problem \eqref{p-Median-problem} is already formulated in the form \eqref{BP}, with objective function $\sum_{j \in J} \sum_{i \in I} d_jc_{ij}x_{ij}$. Consequently, the statement follows from Corollary \ref{theorem:PenPos}.
\end{proof}
\begin{example}
	We consider Example \ref{exp:p_median_4_2_data}. The penalty parameter is given by $P = 1100$. The upper-triangular QUBO matrix is given as follows:
	\begin{align*}
		Q = 
		\resizebox{0.8\linewidth}{!}{$\left[\begin{array}{*{20}c} 
				8 & 0 & 0 & 0 & 2200 & 0 & 0 & 0 & 2200 & 0 & 0 & 0 & 2200 & 0 & 0 & 0 & -1100 & 0 & 0 & 0 \\
				0 & 44 & 0 & 0 & 0 & 2200 & 0 & 0 & 0 & 2200 & 0 & 0 & 0 & 2200 & 0 & 0 & -1100 & 0 & 0 & 0 \\
				0 & 0 & 169 & 0 & 0 & 0 & 2200 & 0 & 0 & 0 & 2200 & 0 & 0 & 0 & 2200 & 0 & -1100 & 0 & 0 & 0 \\
				0 & 0 & 0 & 66 & 0 & 0 & 0 & 2200 & 0 & 0 & 0 & 2200 & 0 & 0 & 0 & 2200 & -1100 & 0 & 0 & 0 \\
				0 & 0 & 0 & 0 & 56 & 0 & 0 & 0 & 2200 & 0 & 0 & 0 & 2200 & 0 & 0 & 0 & 0 & -1100 & 0 & 0 \\
				0 & 0 & 0 & 0 & 0 & 0 & 0 & 0 & 0 & 2200 & 0 & 0 & 0 & 2200 & 0 & 0 & 0 & -1100 & 0 & 0 \\
				0 & 0 & 0 & 0 & 0 & 0 & 195 & 0 & 0 & 0 & 2200 & 0 & 0 & 0 & 2200 & 0 & 0 & -1100 & 0 & 0 \\
				0 & 0 & 0 & 0 & 0 & 0 & 0 & 121 & 0 & 0 & 0 & 2200 & 0 & 0 & 0 & 2200 & 0 & -1100 & 0 & 0 \\
				0 & 0 & 0 & 0 & 0 & 0 & 0 & 0 & 20 & 0 & 0 & 0 & 2200 & 0 & 0 & 0 & 0 & 0 & -1100 & 0 \\
				0 & 0 & 0 & 0 & 0 & 0 & 0 & 0 & 0 & 56 & 0 & 0 & 0 & 2200 & 0 & 0 & 0 & 0 & -1100 & 0 \\
				0 & 0 & 0 & 0 & 0 & 0 & 0 & 0 & 0 & 0 & 13 & 0 & 0 & 0 & 2200 & 0 & 0 & 0 & -1100 & 0 \\
				0 & 0 & 0 & 0 & 0 & 0 & 0 & 0 & 0 & 0 & 0 & 66 & 0 & 0 & 0 & 2200 & 0 & 0 & -1100 & 0 \\
				0 & 0 & 0 & 0 & 0 & 0 & 0 & 0 & 0 & 0 & 0 & 0 & 20 & 0 & 0 & 0 & 0 & 0 & 0 & -1100 \\
				0 & 0 & 0 & 0 & 0 & 0 & 0 & 0 & 0 & 0 & 0 & 0 & 0 & 48 & 0 & 0 & 0 & 0 & 0 & -1100 \\
				0 & 0 & 0 & 0 & 0 & 0 & 0 & 0 & 0 & 0 & 0 & 0 & 0 & 0 & 195 & 0 & 0 & 0 & 0 & -1100 \\
				0 & 0 & 0 & 0 & 0 & 0 & 0 & 0 & 0 & 0 & 0 & 0 & 0 & 0 & 0 & 22 & 0 & 0 & 0 & -1100 \\
				0 & 0 & 0 & 0 & 0 & 0 & 0 & 0 & 0 & 0 & 0 & 0 & 0 & 0 & 0 & 0 & -3300 & 2200 & 2200 & 2200 \\
				0 & 0 & 0 & 0 & 0 & 0 & 0 & 0 & 0 & 0 & 0 & 0 & 0 & 0 & 0 & 0 & 0 & -3300 & 2200 & 2200 \\
				0 & 0 & 0 & 0 & 0 & 0 & 0 & 0 & 0 & 0 & 0 & 0 & 0 & 0 & 0 & 0 & 0 & 0 & -3300 & 2200 \\
				0 & 0 & 0 & 0 & 0 & 0 & 0 & 0 & 0 & 0 & 0 & 0 & 0 & 0 & 0 & 0 & 0 & 0 & 0 & -3300
			\end{array}\right] $}
	\end{align*}
	The offset for this QUBO problem is $c = 8800$.
\end{example}
\subsection{p-Center Problem}
\label{sec:QUBO_p_center}
Minimizing the total or average distance between clients and facilities, as in the $p$-Median Problem, may not be an appropriate criterion for all types of facility location decisions. Such measures tend to favor clients clustered in population centers, often at the expense of those who are spatially dispersed. In contexts where equitable service is desired, a criterion focusing on the worst-served clients is more appropriate, leading to the $p$-Center Problem \cite{HakimiPMedianProblem1, HakimiPMedianProblem2, LocationScience:Chapter3pCenter}.
\newline
Let $d_{ij} \geq 0, i \in I, j \in J$, denote the distance (or cost) from client $j \in J$ to facility $i \in I$. We assume, without loss of generality, that $d_{ij} \in \N_0$ for all $i \in I, j \in J$.
\newline
For the $p$-Center Problem, we introduce the binary variables:
\begin{itemize}
	\itemsep0em
	\item $y_i = 1$, if a facility is placed at site $i \in I$ and 0 otherwise.
	\item $x_{ij} = 1$, if client $j \in J$ is assigned to a facility located at $i \in I$, and $0$ otherwise.
\end{itemize}
The $p$-Center Problem can be formulated as
\begin{subequations}\label{p-Center-Problem}
	\begin{align}
		\displaystyle\min_{x,y,z} \quad & z  \\
		\mbox{s.t.} \quad & \sum_{i \in I} x_{ij} = 1, & \forall  j \in J, \\
		& \sum_{i \in I} d_{ij}x_{ij} \leq z, & \forall j \in J, \\
		& x_{ij} \leq y_i, & \forall  i \in I, j \in J, \\
		& \sum_{i \in I} y_i \leq p, \\
		& x_{ij} \in \{0,1\}, & \forall i \in I, j \in J, \\
		& y_i \in \{0,1\} , & \forall  i \in I, \\
		& z \in \Z.
	\end{align}
\end{subequations}
The objective function, together with the second constraint, ensures that the objective value $z$ is greater than or equal to the maximum distance between any clients and facilities they are assigned to. The first constraint guarantees that each client is assigned to exactly one facility, while the third constraint links the allocation and facility variables, ensuring that a client can only be assigned to an open facility. The fourth constraint limits the number of facilities that can be opened to at most $p$.
\newline
\newline
\textbf{QUBO formulation}: Let $d_{max} = \max_{i, j} d_{ij}$ and $d_{min} = \min_{i, j} d_{ij}$. To transform the inequality constraint $\sum_{i\in I} d_{ij}x_{ij} \leq z$ for each $j \in J$, into an equality constraint, we introduce an integral slack variable $s_j \in \{0,...,d_{max}\}$ for each $j \in J$, yielding $\sum_{i\in I} d_{ij}x_{ij} + s_j = z$. Next, we replace the integral variable $z$ by binary variables. We introduce binary variables $z_k \in \{0,1\}, k \in \{1,...,\lceil \log_2( d_{max} - d_{min}  + 1) \rceil \}$ and express $z$ as 
\begin{align*}
	z = d_{min} + \sum_{k = 1}^{\lceil \log_2( d_{max} -  d_{min} + 1) \rceil}2^{k-1}z_k,
\end{align*}
since $z \geq d_{min}$ for feasible points. Each $s_j$ is represented using binary variables $s_{jk} \in \{0,1\}, k \in \{1,...,\lceil \log_2(d_{max} + 1) \rceil\}$ for each $j \in J$, via $s_j = \sum_{k = 1}^{\lceil \log_2( d_{max} + 1)\rceil}2^{k-1}s_{jk}$. To express the constraint $\sum_{i \in I} y_i \leq p$ as an equality constraint, we introduce an integer slack variable $u \in \{0,...,p\}$. This slack variable is then encoded in binary form using $u_k \in \{0,1\}, k \in \{1,...,\lceil \log_2(p+1) \rceil\}$, such that $u = \sum_{k = 1}^{\lceil \log_2(p + 1) \rceil} 2^{k-1}u_k$.
\newline
\newline
We set $\ell^z = \lceil \log_2( d_{max} - d_{min}  + 1) \rceil$ and $\ell^s = \lceil \log_2( d_{max}  + 1) \rceil$.
Combining all components yields the following cost Hamiltonian for \eqref{p-Center-Problem}:
\begin{subequations} \label{p-Center-QUBO}
	\begin{align}
		H & = d_{min} +
		\sum_{k = 1}^{\ell^z}2^{k-1}z_k  \\
		&+ P \sum_{j \in J}\left(1 - \sum_{i \in I} x_{ij} \right)^2 \\
		&+  P \sum_{j \in J} \Bigg( d_{min} + \sum_{k = 1}^{\ell^z}2^{k-1}z_k - \sum_{k = 1}^{\ell^s}2^{k-1}s_{jk} - \sum_{i \in I}d_{ij}x_{ij} \Bigg)^2
		\\ &  + P \sum_{i \in I} \sum_{j \in J}x_{ij}(1 - y_{i}) \\
		& + P \left(p - \sum_{i \in I}y_i - \sum_{k = 1}^{\lceil \log_2(p + 1) \rceil} 2^{k-1}u_k\right)^2.
	\end{align}
\end{subequations}
\begin{lemma}
	The $p$-Center Problem \eqref{p-Center-Problem} can be formulated as a QUBO with $n^{p\text{-Center}} = \lvert I \rvert \lvert J \rvert + \lvert I \rvert + \lceil  \log_2( d_{max} -  d_{min} + 1) \rceil + \lvert J \rvert \lceil  \log_2( d_{max} + 1) \rceil + \lceil  \log_2(p + 1) \rceil$ binary variables. Therefore, the QUBO formulation \eqref{p-Center-QUBO} can be solved on a gate-based Quantum Computer with $n^{p\text{-Center}}$ physical qubits via QAOA or WS-QAOA.
\end{lemma}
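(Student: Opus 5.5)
The argument mirrors the proof of Lemma \ref{lemma:p-median_qubits}: a direct count of the binary variables in \eqref{p-Center-QUBO}, followed by Proposition \ref{prop:Physical_Qubits}. I would go through the Hamiltonian term by term and list each family of binary variables together with its cardinality.

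The assignment variables $x_{ij}$, $i \in I$, $j \in J$, contribute $\lvert I \rvert \lvert J \rvert$ variables, and the location variables $y_i$, $i \in I$, contribute $\lvert I \rvert$. The encoding of the integer objective variable $z$ uses the $\ell^z = \lceil \log_2(d_{max}-d_{min}+1)\rceil$ binaries $z_k$. For each client $j \in J$, the slack $s_j$ uses $\ell^s = \lceil \log_2(d_{max}+1)\rceil$ binaries $s_{jk}$, which gives $\lvert J \rvert \ell^s$ in total. The slack $u$ of the cardinality constraint uses $\lceil \log_2(p+1)\rceil$ binaries $u_k$.

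The only point needing care is that each encoding range is adequate, so that no feasible point of \eqref{p-Center-Problem} is lost and these counts really suffice.
\begin{itemize}
\item For $z$, the relevant range is $[d_{min}, d_{max}]$. This holds because $z \geq \sum_i d_{ij}x_{ij} \geq d_{min}$ at feasible points, and an optimal $z$ never exceeds $d_{max}$.
\item For $s_j = z - \sum_i d_{ij}x_{ij}$, the range is $\{0,\dots,d_{max}\}$, since $z \leq d_{max}$ and the assigned distance is nonnegative.
\item For $u$, the range is $\{0,\dots,p\}$.
\end{itemize}
Each such range is covered by the stated number of bits, since $k$ bits represent $\{0,\dots,2^k-1\}$. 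The encodings may represent values beyond these ranges, but that only enlarges the unconstrained space and does not increase the variable count. No further auxiliary variables appear, because the remaining constraints are penalized directly via the terms from Table \ref{tab:penalty_terms} and the quadratic equality penalties.

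Summing the five families yields exactly $n^{p\text{-Center}}$. The claim about physical qubits for QAOA or WS-QAOA then follows immediately from Proposition \ref{prop:Physical_Qubits}. The main obstacle, such as it is, is the range justification for $z$ and $s_j$ above; the count itself is routine.
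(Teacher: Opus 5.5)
Your proposal is correct and matches the paper's proof: you enumerate the families $x_{ij}$, $y_i$, $z_k$, $s_{jk}$, $u_k$, sum their cardinalities to get $n^{p\text{-Center}}$, and invoke Proposition \ref{prop:Physical_Qubits}. The range-adequacy check you add is not in the paper's proof and is not needed for the count itself; strictly, it shows that optimal points, not all feasible ones, survive the encoding, because feasible points with $z > d_{min}+2^{\ell^z}-1$ are dropped, and this is harmless since $z$ is being minimized.
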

\begin{proof}
	In the QUBO formulation \eqref{p-Center-QUBO}, the binary variables $x_{ij}$ for $i \in I, j \in J$, $y_i$ for $i \in I$, and the auxiliary binary variables $z_k$ for $k = 1,...,\lceil \log_2( d_{max} - d_{min}  + 1) \rceil$, $s_{jk}$ for $j \in J, k = 1,...,\lceil \log_2( d_{max} + 1) \rceil$, and $u_k$ for $k = 1,...,\lceil \log_2(p+1) \rceil$ are used. Consequently, the total number of binary variables is $n^{p\text{-Center}} = \lvert I \rvert \lvert J \rvert + \lvert I \rvert + \lceil  \log_2( d_{max} -  d_{min} + 1) \rceil + \lvert J \rvert \lceil  \log_2( d_{max} + 1) \rceil + \lceil  \log_2(p + 1) \rceil$, and the second part of the statement follows from Proposition \ref{prop:Physical_Qubits}.
\end{proof}
\begin{corollary}
	\label{korollar:p_center_penalty}
	Let the penalty parameter in \eqref{p-Center-QUBO} be $P = \sum_{i = 1}^{\lceil \log_2(d_{max} - d_{min} + 1)\rceil}2^{k-1} + 1$. Then, any global minimizer of \eqref{p-Center-QUBO} yields a global minimizer of \eqref{p-Center-Problem}.
\end{corollary}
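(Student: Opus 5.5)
The plan is to reduce the statement to Corollary \ref{theorem:PenPos}, exactly as was done for the $p$-Median Problem in Corollary \ref{korollar:p_median_penalty}. First I make explicit the binary program of the form \eqref{BP} that underlies \eqref{p-Center-QUBO}. Its variables are $x_{ij}$, $y_i$, $z_k$, $s_{jk}$ and $u_k$. Its constraints are:
\begin{itemize}
\item the assignment equalities $\sum_{i\in I}x_{ij}=1$;
\item the slack equalities $d_{min}+\sum_{k}2^{k-1}z_k-\sum_k 2^{k-1}s_{jk}-\sum_{i}d_{ij}x_{ij}=0$ for all $j\in J$;
\item the linking constraints $x_{ij}\le y_i$, handled through the penalty $x_{ij}(1-y_i)$ from Table \ref{tab:penalty_terms};
\item the cardinality equality $\sum_i y_i+\sum_k 2^{k-1}u_k=p$.
\end{itemize}
Each equality has integer data and appears squared in \eqref{p-Center-QUBO}, and each linking penalty is integer valued. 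Hence any violated constraint contributes at least $1$ to the penalty part, which is exactly the hypothesis used in Theorem \ref{theorem:PenArbitrary}.

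Next I identify the linear cost vector $c$ of this binary program. The objective is $z=d_{min}+\sum_{k=1}^{\ell^z}2^{k-1}z_k$. The constant $d_{min}$ is only an offset and changes neither minimizers nor the difference between feasible and infeasible objective values. So $c$ has entries $2^{k-1}$ on the $z_k$ and $0$ on all other variables. In particular $c\ge 0$ and $\sum_i c_i=\sum_{k=1}^{\ell^z}2^{k-1}$.

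Since the stated $P$ equals this sum plus $1$ (reading the summation index in the statement as $k$), it strictly exceeds $\sum_i c_i$, and Corollary \ref{theorem:PenPos} applies directly. I will add one remark that the constant offset is harmless: in the contradiction step of Theorem \ref{theorem:PenArbitrary}, both sides of the inequality shift by $d_{min}$, so the argument is unchanged.

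The remaining issue, which I expect to be the main obstacle, is the equivalence between the binary program and \eqref{p-Center-Problem}. Two points need checking.
\begin{itemize}
\item The encodings must not cut off the optimum. Any optimal $z$ equals some $d_{ij}\in[d_{min},d_{max}]$, so it is representable with $\ell^z$ bits. The slack $s_j=z-\sum_i d_{ij}x_{ij}$ lies in $[0,d_{max}]$, so it is representable with $\ell^s$ bits. The slack $u=p-\sum_i y_i$ lies in $[0,p]$ by the same reasoning.
\item The encodings must not create spurious feasible points. Overshooting encodings only allow values of $z$ above $d_{max}$ or other larger slacks. Every feasible point of the binary program still satisfies $z\ge\max_j\sum_i d_{ij}x_{ij}$ and $\sum_i y_i\le p$, so it maps to a feasible point of \eqref{p-Center-Problem} with the same objective value.
\end{itemize}
With both directions established, minimizers correspond, and the claim follows.
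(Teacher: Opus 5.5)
Your proposal is correct and follows the paper's own proof. Both identify the cost vector of the underlying binary program as the coefficients $2^{k-1}$ on the $z_k$ (with $d_{min}$ a constant offset), note that $c\ge 0$ and $P=\sum_{k=1}^{\ell^z}2^{k-1}+1>\sum_i c_i$, and invoke Corollary~\ref{theorem:PenPos}. Your extra checks are sound and fill in what the paper leaves implicit: the penalties are integer-valued, the offset is harmless, and the encodings of $z$, $s_j$ and $u$ neither cut off the optimum nor create spurious feasible points.
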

\begin{proof}
	For the $p$-Center Problem, the objective function of its equivalent formulation of the form \eqref{BP}, used to construct the QUBO \eqref{p-Center-QUBO}, is $\sum_{k = 1}^{\lceil \log_2( d_{max} - d_{min}  + 1) \rceil}2^{k-1}z_k$. Consequently, the statement follows from Corollary \ref{theorem:PenPos}.
\end{proof}


\subsection{Fixed-Charge Facility Location Problem}
\label{sec:QUBO_FCFLP}
The Fixed-Charge Facility Location Problem (FCFLP) is a classical problem in location science and operations research, in which two decisions must be made:
\begin{enumerate}
	\item \textit{location decisions}, determinig which facilities $i \in I$ to open, and
	\item \textit{allocation decisions}, specifying how client demands $j \in J$ are served from the opened facilities. 
\end{enumerate}
We assume $I = J = \{1,...,n\}$. Each facility opening incurs a fixed (setup) cost, while serving client demands generates variable assignment costs. The objective is to determine a cost-minimizing selection of facilities to open and an assignment of clients to those facilities that satisfies all demand and capacity constraints \cite{Conforti2014, LocationScience:ChapterFCFLP}. The model parameters are as follows: $q_i > 0$ denotes the capacity of facility $i \in I$, $d_j > 0$ is the demand of client $j \in J$, $f_i$ is the fixed cost of opening facility $i$, and $c_{ij} \geq 0$ is the cost of serving all demand of client $j$ from facility $i$. We assume $q_i, d_j \in \N$ for all $i \in I, j \in J$.
\newline
For the formulation of the Fixed-Charge Facility Location Problem we introduce the binary decision variables:
\begin{itemize}
	\item $y_i = 1$, if a facility is open at location $i \in I$, and $0$ otherwise.
	\item $x_{ij} = 1$, if client $j \in J$ is served by facility $i \in I$, and $0$ otherwise.
\end{itemize}
The Fixed-Charge Facility Location Problem is then formulated as the following optimization problem:
\begin{subequations}\label{FCFLP-1}
	\begin{align}
		\displaystyle\min_{x,y} \quad & \sum_{i \in I}f_iy_i  + \sum_{i \in I} \sum_{j \in J}c_{ij}x_{ij}   \\
		\mbox{s.t.} \quad & \sum_{i \in I} x_{ij} = 1, & \forall j \in J, \\
		& \sum_{j \in J} d_jx_{ij} \leq q_iy_i  & \forall  i \in I, \label{FCFLP-1-A}\\
		& x_{ij} \in \{0,1\}, & \forall i \in I, j \in J, \\
		& y_i \in \{0,1\}, & \forall i \in I.
	\end{align}
\end{subequations}
The objective function consists of two components: the first term represents the total fixed cost of opening facilities, and the second term accounts for the total assignment (service) cost of allocating clients to open facilites. The constraints \eqref{FCFLP-1-A} play a double role: they ensure that no facility exceeds its capacity and simultaneously couples the location and allocation decisions. The assignment constraints guarantee that each client is served by exactly one open facility.
\newline
An alternative but equivalent formulation of \eqref{FCFLP-1} is given by
\begin{subequations}\label{FCFLP-2} 
	\begin{align}
		\displaystyle\min_{x,y} \quad & \sum_{i \in I}f_iy_i + \sum_{i \in I} \sum_{j \in J}c_{ij}x_{ij} \\
		\mbox{s.t.} \quad & \sum_{i \in I} x_{ij} = 1, & \forall j \in J, \\
		& \sum_{j \in J} d_jx_{ij} \leq q_i & \forall  i \in I, \label{FCFLP-2-A} \\
		& x_{ij} \leq y_i, & \forall i \in I, j \in J,  \label{FCFLP-2-B}\\
		& x_{ij} \in \{0,1\}, & \forall i \in I, j \in J, \\
		& y_j \in \{0,1\}, & \forall i \in I.
	\end{align}
\end{subequations}
In this formulation, the constraint \eqref{FCFLP-1-A} is replaced by the two constraints \eqref{FCFLP-2-A} and \eqref{FCFLP-2-B}. Constraint \eqref{FCFLP-2-A} ensures that the maximum capacity of a facility is not exceeded, while constraint \eqref{FCFLP-2-B} couples the variables $x$ and $y$, thereby ensuring that clients can only be assigned to open facilities.
\newline
We refer to \eqref{FCFLP-1} as the \textit{aggregated} formulation and to \eqref{FCFLP-2} as the \textit{disaggregated} formulation.
\begin{example}
	\label{exp:FCFLP_3}
	We consider the FCFLP with $\lvert I \rvert = \lvert J \rvert = n = 3$. The input data are given by the assignment cost matrix $C \in \R^{3 \times 3}$, the demand vector $d \in \R^3$, the fixed cost vector $f \in \R^3$, and the maximum capacity vector $q \in \R^3$:
	\begin{align*}
		C = \begin{pmatrix}
			0 & 4 & 4 \\
			9 & 0 & 2 \\
			5 & 5 & 0 
		\end{pmatrix},
		d = \begin{pmatrix}
			3 \\ 8 \\10
		\end{pmatrix}
		, 
		f = \begin{pmatrix}
			25 \\ 9 \\ 17
		\end{pmatrix}, 
		q = \begin{pmatrix}
			12 \\ 10 \\ 10
		\end{pmatrix}
	\end{align*}
	The optimal solution opens the facilities $1$ and $2$, i.e., $y_1 = y_2 = 1$, while $y_3 = 0$. The corresponding client-facility assignments are $x_{11} = x_{12} = x_{23} = 1$, with all remaining $x_{ij} = 0$. The resulting optimal objective value is $40$.
\end{example}
The disaggregated formulation \eqref{FCFLP-2} is theoretically stronger than the aggregated one \eqref{FCFLP-1}, as the constraint set \eqref{FCFLP-2-A} $+$ \eqref{FCFLP-2-B} dominates \eqref{FCFLP-1-A} in the LP-relaxation. 
\newline
Consequently, as discussed by \cite{Conforti2014}, the disaggregated formulation typically yields tighter bounds in a branch-and-cut algorithm. However, modern state-of-the-art solvers can automatically detect and generate disaggregated constraints on the fly whenever they are violated by the current feasible solution. Therefore, in practice, the aggregated formulation is often preferable, as it is more compact and leads to significantly smaller and faster to solve linear relaxations \cite{Conforti2014}.
\newline
This is an interesting observation, and it raises the question of whether a similar behavior can be observed for QUBO formulations, namely, whether one formulation performs bettern than the other. Therefore, in the numerical experiments in Section \ref{sec:Numerical_Experiments}, we consider and compare both the aggregated and the disaggregated formulations.
\newline
\newline
\textbf{QUBO formulation}: To transform the inequality constraints \eqref{FCFLP-1-A} and \eqref{FCFLP-2-A} into equality constraints, we introduce integer slack variables $u_i \in \{0,...,q_i\}$ for each $i \in I$. This yields the reformulated constraints $\sum_{j \in J} d_jx_{ij} + u_i = q_iy_i$ and $\sum_{j \in J} d_jx_{ij} + u_i =  q_i$ for each $i \in I$. For the binary encoding of each slack variable $u_i$, we define $k_i = \lceil \log_2(q_i + 1) \rceil$ and introduce binary variables $z_{ik} \in \{0,1\}$ for $k \in \{1,...,k_i\}$. The integer variable $u_i$ is then represented as $u_i = \sum_{k = 1}^{k_i}2^{k-1}z_{ik}$.
\newline
\newline
This yields the following cost Hamiltonians: for \eqref{FCFLP-1}, it is given by
\begin{subequations}\label{FCFLP1-QUBO}
	\begin{align}
		H =& \sum_{i \in I} f_iy_i + \sum_{i \in I} \sum_{j \in J}c_{ij}x_{ij}    \\
		&+ P \sum_{j \in J}\left( 1 - \sum_{i \in I}x_{ij} \right)^2\\
		& + P \sum_{i \in I} \left(q_iy_i - \sum_{j \in J}d_jx_{ij} - \sum_{k = 1}^{k_i} 2^{k-1}z_{ik} \right)^2,
	\end{align}
\end{subequations}
and for \eqref{FCFLP-2}, it is given by
\begin{subequations}\label{FCFLP2-QUBO}
	\begin{align}
		H =& \sum_{i \in I} f_iy_i + \sum_{i \in I} \sum_{j \in J}c_{ij}x_{ij}  \\
		&+ P \sum_{j \in J}\left( 1 - \sum_{i \in I}x_{ij} \right)^2 \\
		& + P \sum_{i \in I} \left(q_i - \sum_{j \in J}d_jx_{ij} - \sum_{k = 1}^{k_i} 2^{k-1}z_{ik} \right)^2\\
		& + P \sum_{i \in I} \sum_{j \in J}x_{ij}(1 - y_{i}).
	\end{align}
\end{subequations}
\begin{lemma}
	\label{lemma:fcflp_qubits}
	The FCFLP can be formulated as a QUBO with $n^{FCFLP} = \lvert I \rvert \lvert J \rvert + \lvert I \rvert + \sum_{i \in I}\lceil \log_2(q_i + 1) \rceil$ binary variables. Therefore, the QUBO formulations \eqref{FCFLP1-QUBO} and \eqref{FCFLP2-QUBO} can be solved on a gate-based Quantum Computer with $n^{FCFLP}$ physical qubits via QAOA or WS-QAOA.
\end{lemma}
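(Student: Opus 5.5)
The proof proceeds by counting, exactly as for Lemma \ref{lemma:p-median_qubits}: list every binary variable that appears in \eqref{FCFLP1-QUBO} and in \eqref{FCFLP2-QUBO}, check that both Hamiltonians use the same variable set, and then invoke Proposition \ref{prop:Physical_Qubits} for the hardware statement.

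First, both Hamiltonians contain the assignment variables $x_{ij}$ for $i \in I$, $j \in J$, which gives $\lvert I \rvert \lvert J \rvert$ variables. They also contain the location variables $y_i$ for $i \in I$, which gives $\lvert I \rvert$ variables. In \eqref{FCFLP1-QUBO} the $y_i$ occur in the objective and in the capacity penalty $q_iy_i - \dots$. In \eqref{FCFLP2-QUBO} they occur in the objective and in the linking penalty $x_{ij}(1-y_i)$.

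Second, the only auxiliary variables are the slack bits $z_{ik}$. For each $i \in I$ there is exactly one slack $u_i \in \{0,\dots,q_i\}$, encoded with $k_i = \lceil \log_2(q_i+1) \rceil$ bits. This encoding is sufficient because $k_i$ bits represent every integer in $\{0,\dots,2^{k_i}-1\} \supseteq \{0,\dots,q_i\}$. The slack bits contribute $\sum_{i \in I} \lceil \log_2(q_i+1) \rceil$ variables.

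The point needing the most care is verifying that the disaggregated formulation adds no further slack variables. The linking constraints \eqref{FCFLP-2-B} are of the form $x_i \le x_j$ from Table \ref{tab:penalty_terms}, so they are penalized directly by $P\,x_{ij}(1-y_i)$ without auxiliary bits. The assignment equalities are likewise penalized directly. Hence both formulations use the same $\lvert I \rvert\lvert J \rvert + \lvert I \rvert + \sum_{i\in I}\lceil \log_2(q_i+1)\rceil = n^{FCFLP}$ variables.

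Finally, Proposition \ref{prop:Physical_Qubits} states that a QUBO with $N$ variables can be solved by QAOA with $N$ physical qubits. The same count holds for WS-QAOA, which changes only the initial state and the mixer and not the number of qubits. This gives the second claim of the lemma.
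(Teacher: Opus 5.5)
Your proposal is correct and follows the same route as the paper. You count the $\lvert I\rvert\lvert J\rvert$ assignment variables, the $\lvert I\rvert$ location variables and the $\sum_{i\in I}\lceil\log_2(q_i+1)\rceil$ slack bits common to both Hamiltonians, then invoke Proposition~\ref{prop:Physical_Qubits}. Your extra checks — that the binary encoding suffices, that the linking constraints add no auxiliary bits, and that WS-QAOA keeps the same qubit count — are sound, and the paper leaves them implicit.
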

\begin{proof}
	Both QUBO formulations, \eqref{FCFLP1-QUBO} and \eqref{FCFLP2-QUBO}, use the binary variables $x_{ij}$ for $i \in I, j \in J$, $y_i$ for $i \in I$, and the auxiliary binary variables $z_{ik} \in \{0,1\}$ for $i \in I$, $k = 1,...,\lceil \log_2(q_i + 1) \rceil$. Consequently, the total number of binary variables is  $n^{FCFLP} = \lvert I \rvert \lvert J \rvert + \lvert I \rvert + \sum_{i \in I}\lceil \log_2(q_i + 1) \rceil$, and the second part of the statement follows from Proposition \ref{prop:Physical_Qubits}.
\end{proof}
\begin{corollary}
	\label{korollar:fcflp}
	Let the penalty parameter in \eqref{FCFLP1-QUBO}, \eqref{FCFLP2-QUBO} be $P = \sum_{i \in I}f_i + \sum_{i \in I}\sum_{j \in J}cc_{ij} + 1$. Then, any global minimizer of \eqref{FCFLP1-QUBO}, \eqref{FCFLP2-QUBO} yields a global minimizer of \eqref{FCFLP-1}, \eqref{FCFLP-2}.
\end{corollary}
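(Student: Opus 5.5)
The plan is to reduce the statement to Theorem \ref{theorem:PenArbitrary}, or in the nonnegative case to Corollary \ref{theorem:PenPos}. This mirrors the arguments in Corollaries \ref{korollar:p_median_penalty} and \ref{korollar:p_center_penalty}. We read $cc_{ij}$ in the statement as the typo $c_{ij}$.

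\textbf{Step 1: put each formulation in the form \eqref{BP}.} After introducing the slack variables $u_i=\sum_{k=1}^{k_i}2^{k-1}z_{ik}$, the variable vector is $(x,y,z)\in\{0,1\}^{n^{FCFLP}}$. The cost vector has entries $f_i$ on $y_i$, $c_{ij}$ on $x_{ij}$, and $0$ on every slack bit $z_{ik}$. For \eqref{FCFLP-1}, the equality system $Ax=b$ consists of the assignment rows $\sum_i x_{ij}=1$ and the rows $q_iy_i-\sum_j d_jx_{ij}-u_i=0$. All coefficients are integers because $q_i,d_j\in\N$. For \eqref{FCFLP-2}, the second family becomes $\sum_j d_jx_{ij}+u_i=q_i$. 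The linking constraints $x_{ij}\le y_i$ are simple constraints $g$ with the penalty $x_{ij}(1-y_i)$ from Table \ref{tab:penalty_terms}. That penalty is integer-valued and equals at least $1$ whenever the constraint is violated. So \eqref{FCFLP1-QUBO} and \eqref{FCFLP2-QUBO} are exactly the penalized problem \eqref{QUBO-PEN}.

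\textbf{Step 2: check equivalence of the binary programs with the original ones.} A point $(x,y)$ feasible for the original problem has slack $q_iy_i-\sum_jd_jx_{ij}$ (resp.\ $q_i-\sum_jd_jx_{ij}$) lying in $\{0,\dots,q_i\}$. That range is representable with $k_i=\lceil\log_2(q_i+1)\rceil$ bits. Conversely, any binary-feasible point projects onto a feasible $(x,y)$, because the slack is nonnegative. Since the slack bits carry zero cost, the objective values agree, and feasibility is inherited from the standing assumption.

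\textbf{Step 3: bound the penalty.} Assuming $f_i\ge 0$, the cost vector is nonnegative. Then $\sum_\ell |c_\ell| = \sum_i f_i+\sum_{i,j}c_{ij} < P$, and Corollary \ref{theorem:PenPos} applies. If the fixed costs $f_i$ may be negative, which the model does not explicitly forbid, we instead invoke Theorem \ref{theorem:PenArbitrary}. That requires $P>\sum_i|f_i|+\sum_{i,j}c_{ij}$, so we either assume $f_i\ge 0$ or interpret $f_i$ as $|f_i|$ in $P$.

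\textbf{Main obstacle.} The delicate point is the integrality underlying \eqref{PenActive}: every violated constraint must contribute a penalty of at least $1$. This requires integer coefficients, which holds because $q_i,d_j\in\N$ and the slack encoding uses integer powers of two. It also requires that every feasible slack value be reachable by the encoding. Once that is checked, the result follows directly from the earlier theorem.
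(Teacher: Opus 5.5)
Your proposal is correct and follows the paper's route: the paper writes down the (BP) objective $\sum_i f_iy_i+\sum_{i,j}c_{ij}x_{ij}$, in which the slack bits carry zero cost, and invokes Corollary~\ref{theorem:PenPos}. Your additional checks are sound refinements of that one-line argument: the slack encoding, the integrality of the penalties, and the observation that the paper tacitly needs $f_i\ge 0$ (otherwise you fall back to Theorem~\ref{theorem:PenArbitrary} with $|f_i|$).
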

\begin{proof}
	For the FCFLP, the objective function of the equivalent formulations of the form \eqref{BP}, used to construct both QUBOs \eqref{FCFLP1-QUBO} and \eqref{FCFLP2-QUBO}, is $\sum_{i \in I}f_iy_i + \sum_{i \in I}\sum_{j \in J}c_{ij}x_{ij}$.
	Thus, the statement is a consequence of Corollary \ref{theorem:PenPos}.
\end{proof}
\begin{example}
	We consider Example \ref{exp:FCFLP_3}. The penalty parameter is given by $P = 81$. First, we consider the QUBO formulation for \eqref{FCFLP-1}. We first introduce the slack variables. It holds $\lceil \log_2(12+1) \rceil = \lceil \log_2(10+1) \rceil = \lceil \log_2(10+1) \rceil = 4$. Therefore, we introduce the variables $s_{11},...,s_{14},s_{21},...,s_{24},s_{31},...,s_{34}$. We then consider the flattened vector $(x,y,s)$ for the QUBO formulation. 
	The QUBO matrix for \eqref{FCFLP-1} is given by:
	\begin{align*}
		Q_1 =
		\resizebox{0.87\linewidth}{!}{$
			\left[
			\begin{array}{*{24}c}
				648 & 3888 & 4860 & 162 & 0 & 0 & 162 & 0 & 0 & -5832 & 0 & 0 & 486 & 972 & 1944 & 3888 & 0 & 0 & 0 & 0 & 0 & 0 & 0 & 0 \\
				0 & 5107 & 12960 & 0 & 162 & 0 & 0 & 162 & 0 & -15552 & 0 & 0 & 1296 & 2592 & 5184 & 10368 & 0 & 0 & 0 & 0 & 0 & 0 & 0 & 0 \\
				0 & 0 & 8023 & 0 & 0 & 162 & 0 & 0 & 162 & -19440 & 0 & 0 & 1620 & 3240 & 6480 & 12960 & 0 & 0 & 0 & 0 & 0 & 0 & 0 & 0 \\
				0 & 0 & 0 & 657 & 3888 & 4860 & 162 & 0 & 0 & 0 & -4860 & 0 & 0 & 0 & 0 & 0 & 486 & 972 & 1944 & 3888 & 0 & 0 & 0 & 0 \\
				0 & 0 & 0 & 0 & 5103 & 12960 & 0 & 162 & 0 & 0 & -12960 & 0 & 0 & 0 & 0 & 0 & 1296 & 2592 & 5184 & 10368 & 0 & 0 & 0 & 0 \\
				0 & 0 & 0 & 0 & 0 & 8021 & 0 & 0 & 162 & 0 & -16200 & 0 & 0 & 0 & 0 & 0 & 1620 & 3240 & 6480 & 12960 & 0 & 0 & 0 & 0 \\
				0 & 0 & 0 & 0 & 0 & 0 & 653 & 3888 & 4860 & 0 & 0 & -4860 & 0 & 0 & 0 & 0 & 0 & 0 & 0 & 0 & 486 & 972 & 1944 & 3888 \\
				0 & 0 & 0 & 0 & 0 & 0 & 0 & 5108 & 12960 & 0 & 0 & -12960 & 0 & 0 & 0 & 0 & 0 & 0 & 0 & 0 & 1296 & 2592 & 5184 & 10368 \\
				0 & 0 & 0 & 0 & 0 & 0 & 0 & 0 & 8019 & 0 & 0 & -16200 & 0 & 0 & 0 & 0 & 0 & 0 & 0 & 0 & 1620 & 3240 & 6480 & 12960 \\
				0 & 0 & 0 & 0 & 0 & 0 & 0 & 0 & 0 & 11689 & 0 & 0 & -1944 & -3888 & -7776 & -15552 & 0 & 0 & 0 & 0 & 0 & 0 & 0 & 0 \\
				0 & 0 & 0 & 0 & 0 & 0 & 0 & 0 & 0 & 0 & 8109 & 0 & 0 & 0 & 0 & 0 & -1620 & -3240 & -6480 & -12960 & 0 & 0 & 0 & 0 \\
				0 & 0 & 0 & 0 & 0 & 0 & 0 & 0 & 0 & 0 & 0 & 8117 & 0 & 0 & 0 & 0 & 0 & 0 & 0 & 0 & -1620 & -3240 & -6480 & -12960 \\
				0 & 0 & 0 & 0 & 0 & 0 & 0 & 0 & 0 & 0 & 0 & 0 & 81 & 324 & 648 & 1296 & 0 & 0 & 0 & 0 & 0 & 0 & 0 & 0 \\
				0 & 0 & 0 & 0 & 0 & 0 & 0 & 0 & 0 & 0 & 0 & 0 & 0 & 324 & 1296 & 2592 & 0 & 0 & 0 & 0 & 0 & 0 & 0 & 0 \\
				0 & 0 & 0 & 0 & 0 & 0 & 0 & 0 & 0 & 0 & 0 & 0 & 0 & 0 & 1296 & 5184 & 0 & 0 & 0 & 0 & 0 & 0 & 0 & 0 \\
				0 & 0 & 0 & 0 & 0 & 0 & 0 & 0 & 0 & 0 & 0 & 0 & 0 & 0 & 0 & 5184 & 0 & 0 & 0 & 0 & 0 & 0 & 0 & 0 \\
				0 & 0 & 0 & 0 & 0 & 0 & 0 & 0 & 0 & 0 & 0 & 0 & 0 & 0 & 0 & 0 & 81 & 324 & 648 & 1296 & 0 & 0 & 0 & 0 \\
				0 & 0 & 0 & 0 & 0 & 0 & 0 & 0 & 0 & 0 & 0 & 0 & 0 & 0 & 0 & 0 & 0 & 324 & 1296 & 2592 & 0 & 0 & 0 & 0 \\
				0 & 0 & 0 & 0 & 0 & 0 & 0 & 0 & 0 & 0 & 0 & 0 & 0 & 0 & 0 & 0 & 0 & 0 & 1296 & 5184 & 0 & 0 & 0 & 0 \\
				0 & 0 & 0 & 0 & 0 & 0 & 0 & 0 & 0 & 0 & 0 & 0 & 0 & 0 & 0 & 0 & 0 & 0 & 0 & 5184 & 0 & 0 & 0 & 0 \\
				0 & 0 & 0 & 0 & 0 & 0 & 0 & 0 & 0 & 0 & 0 & 0 & 0 & 0 & 0 & 0 & 0 & 0 & 0 & 0 & 81 & 324 & 648 & 1296 \\
				0 & 0 & 0 & 0 & 0 & 0 & 0 & 0 & 0 & 0 & 0 & 0 & 0 & 0 & 0 & 0 & 0 & 0 & 0 & 0 & 0 & 324 & 1296 & 2592 \\
				0 & 0 & 0 & 0 & 0 & 0 & 0 & 0 & 0 & 0 & 0 & 0 & 0 & 0 & 0 & 0 & 0 & 0 & 0 & 0 & 0 & 0 & 1296 & 5184 \\
				0 & 0 & 0 & 0 & 0 & 0 & 0 & 0 & 0 & 0 & 0 & 0 & 0 & 0 & 0 & 0 & 0 & 0 & 0 & 0 & 0 & 0 & 0 & 5184 \\
			\end{array}
			\right]
			$}
	\end{align*}
	The offset for the QUBO formulation \eqref{FCFLP1-QUBO} is $c = 243$.
	The QUBO matrix for \eqref{FCFLP-2} is given by:
	\begin{align*}
		Q_2 =
		\resizebox{0.87\linewidth}{!}{$
			\left[
			\begin{array}{*{24}c}
				-5103 & 3888 & 4860 & 162 & 0 & 0 & 162 & 0 & 0 & -81 & 0 & 0 & 486 & 972 & 1944 & 3888 & 0 & 0 & 0 & 0 & 0 & 0 & 0 & 0 \\
				0 & -10364 & 12960 & 0 & 162 & 0 & 0 & 162 & 0 & -81 & 0 & 0 & 1296 & 2592 & 5184 & 10368 & 0 & 0 & 0 & 0 & 0 & 0 & 0 & 0 \\
				0 & 0 & -11336 & 0 & 0 & 162 & 0 & 0 & 162 & -81 & 0 & 0 & 1620 & 3240 & 6480 & 12960 & 0 & 0 & 0 & 0 & 0 & 0 & 0 & 0 \\
				0 & 0 & 0 & -4122 & 3888 & 4860 & 162 & 0 & 0 & 0 & -81 & 0 & 0 & 0 & 0 & 0 & 486 & 972 & 1944 & 3888 & 0 & 0 & 0 & 0 \\
				0 & 0 & 0 & 0 & -7776 & 12960 & 0 & 162 & 0 & 0 & -81 & 0 & 0 & 0 & 0 & 0 & 1296 & 2592 & 5184 & 10368 & 0 & 0 & 0 & 0 \\
				0 & 0 & 0 & 0 & 0 & -8098 & 0 & 0 & 162 & 0 & -81 & 0 & 0 & 0 & 0 & 0 & 1620 & 3240 & 6480 & 12960 & 0 & 0 & 0 & 0 \\
				0 & 0 & 0 & 0 & 0 & 0 & -4126 & 3888 & 4860 & 0 & 0 & -81 & 0 & 0 & 0 & 0 & 0 & 0 & 0 & 0 & 486 & 972 & 1944 & 3888 \\
				0 & 0 & 0 & 0 & 0 & 0 & 0 & -7771 & 12960 & 0 & 0 & -81 & 0 & 0 & 0 & 0 & 0 & 0 & 0 & 0 & 1296 & 2592 & 5184 & 10368 \\
				0 & 0 & 0 & 0 & 0 & 0 & 0 & 0 & -8100 & 0 & 0 & -81 & 0 & 0 & 0 & 0 & 0 & 0 & 0 & 0 & 1620 & 3240 & 6480 & 12960 \\
				0 & 0 & 0 & 0 & 0 & 0 & 0 & 0 & 0 & 25 & 0 & 0 & 0 & 0 & 0 & 0 & 0 & 0 & 0 & 0 & 0 & 0 & 0 & 0 \\
				0 & 0 & 0 & 0 & 0 & 0 & 0 & 0 & 0 & 0 & 9 & 0 & 0 & 0 & 0 & 0 & 0 & 0 & 0 & 0 & 0 & 0 & 0 & 0 \\
				0 & 0 & 0 & 0 & 0 & 0 & 0 & 0 & 0 & 0 & 0 & 17 & 0 & 0 & 0 & 0 & 0 & 0 & 0 & 0 & 0 & 0 & 0 & 0 \\
				0 & 0 & 0 & 0 & 0 & 0 & 0 & 0 & 0 & 0 & 0 & 0 & -1863 & 324 & 648 & 1296 & 0 & 0 & 0 & 0 & 0 & 0 & 0 & 0 \\
				0 & 0 & 0 & 0 & 0 & 0 & 0 & 0 & 0 & 0 & 0 & 0 & 0 & -3564 & 1296 & 2592 & 0 & 0 & 0 & 0 & 0 & 0 & 0 & 0 \\
				0 & 0 & 0 & 0 & 0 & 0 & 0 & 0 & 0 & 0 & 0 & 0 & 0 & 0 & -6480 & 5184 & 0 & 0 & 0 & 0 & 0 & 0 & 0 & 0 \\
				0 & 0 & 0 & 0 & 0 & 0 & 0 & 0 & 0 & 0 & 0 & 0 & 0 & 0 & 0 & -10368 & 0 & 0 & 0 & 0 & 0 & 0 & 0 & 0 \\
				0 & 0 & 0 & 0 & 0 & 0 & 0 & 0 & 0 & 0 & 0 & 0 & 0 & 0 & 0 & 0 & -1539 & 324 & 648 & 1296 & 0 & 0 & 0 & 0 \\
				0 & 0 & 0 & 0 & 0 & 0 & 0 & 0 & 0 & 0 & 0 & 0 & 0 & 0 & 0 & 0 & 0 & -2916 & 1296 & 2592 & 0 & 0 & 0 & 0 \\
				0 & 0 & 0 & 0 & 0 & 0 & 0 & 0 & 0 & 0 & 0 & 0 & 0 & 0 & 0 & 0 & 0 & 0 & -5184 & 5184 & 0 & 0 & 0 & 0 \\
				0 & 0 & 0 & 0 & 0 & 0 & 0 & 0 & 0 & 0 & 0 & 0 & 0 & 0 & 0 & 0 & 0 & 0 & 0 & -7776 & 0 & 0 & 0 & 0 \\
				0 & 0 & 0 & 0 & 0 & 0 & 0 & 0 & 0 & 0 & 0 & 0 & 0 & 0 & 0 & 0 & 0 & 0 & 0 & 0 & -1539 & 324 & 648 & 1296 \\
				0 & 0 & 0 & 0 & 0 & 0 & 0 & 0 & 0 & 0 & 0 & 0 & 0 & 0 & 0 & 0 & 0 & 0 & 0 & 0 & 0 & -2916 & 1296 & 2592 \\
				0 & 0 & 0 & 0 & 0 & 0 & 0 & 0 & 0 & 0 & 0 & 0 & 0 & 0 & 0 & 0 & 0 & 0 & 0 & 0 & 0 & 0 & -5184 & 5184 \\
				0 & 0 & 0 & 0 & 0 & 0 & 0 & 0 & 0 & 0 & 0 & 0 & 0 & 0 & 0 & 0 & 0 & 0 & 0 & 0 & 0 & 0 & 0 & -7776
			\end{array}
			\right]
			$}
	\end{align*}
	The offset for the QUBO formulation \eqref{FCFLP2-QUBO} is $c = 28107$.
	We observe that the QUBO matrix for the aggregated formulation \eqref{FCFLP-1} $Q_1$ and the disaggregated formulation \eqref{FCFLP-2} $Q_2$ look quite different. Notably, the matrix $Q_1$ contains fewer nonzero entries than $Q_2$, since in the QUBO formulation \eqref{FCFLP1-QUBO}, the variables $y$ and $z$ appear together in a single penalty term, whereas in the QUBO formulation \eqref{FCFLP2-QUBO} they do not.
\end{example}


\subsection{Generalized Assignment Problem}
\label{sec:QUBO_GAP}
We consider the same general setting as in the previous section. However, in this case, the \textit{location decisions} are assumed to be fixed, that is, the set of open facilities $S \subseteq I$ is given in advance. The remaining task is to determine the optimal allocation of clients to these pre-established facilities, subject to capacity constraints. This problem is known as the Generalized Assignment Problem (GAP), a well-known combinatorial optimization problem that remains $\mathbf{NP}$-hard \cite{Fisher:GAP, LocationScience:ChapterFCFLP}.
\newline
The GAP can be formulated as the following optimization problem:
\begin{subequations}\label{GAP}
	\begin{align}
		\displaystyle\min_{x} \quad & \sum_{i \in S} \sum_{j \in J}c_{ij}x_{ij}   \\
		\mbox{s.t.} \quad & \sum_{i \in S}x_{ij} = 1, & \forall j \in J, \\
		& \sum_{j \in J} d_jx_{ij} \leq q_i, & \forall i \in S, \\
		& x_{ij} \in \{0,1\}, & \forall  i \in S, j \in J.
	\end{align}
\end{subequations}
The objective function minimizes the total assignment cost of clients to facilities, where $c_{ij}$ denotes the cost of serving client $j \in J$ from facility $i \in S$. The first constraint ensures that each client is assigned to exactly one facility, while the second enforces the capacity limits $q_i$ of the facilities.
\newline
\newline
\textbf{QUBO formulation}: To transform \eqref{GAP} into a QUBO, we follow the same procedure as for the FCFLP. Specifically, we introduce integer slack variables $u_i \in \{0,...,q_i\}$ for each $i \in S$ to convert the capacity constraints into equality constraints of the form: $\sum_{j \in J}d_jx_{ij} + u_i = q_i$, for all $i \in S$. Each integer slack variable $u_i$ is represented in binary form as $u_i = \sum_{k = 1}^{\ell_i}2^{k-1}z_{ik}$, where $\ell_i = \lceil \log_2(q_i + 1) \rceil $ and $z_{ik} \in \{0,1\}$.
\newline
\newline
This yields the following cost Hamiltonian for \eqref{GAP}:
\begin{subequations}\label{GAP-QUBO}
	\begin{align}
		H =& \sum_{i \in S} \sum_{j \in J} c_{ij} x_{ij}  \\
		& + P \sum_{j \in J}\left(1 - \sum_{i \in S}x_{ij}\right)^2 \\
		& + P \sum_{i \in S} \left(q_i - \sum_{j \in J}d_jx_{ij} - \sum_{k = 1}^{\ell_i} 2^{k-1}z_{ik} \right)^2.
	\end{align}
\end{subequations}
\begin{lemma}
	The GAP can be formulated as a QUBO with $n^{GAP} = \lvert S \rvert \lvert J \rvert + \sum_{i \in S} \lceil \log_2(q_i + 1) \rceil$ binary variables. Therefore, the QUBO formulation \eqref{GAP-QUBO} can be solved on a gate-based Quantum Computer with $n^{GAP}$ physical qubits via QAOA or WS-QAOA.
\end{lemma}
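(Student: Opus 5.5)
The argument is a direct variable count followed by an appeal to Proposition \ref{prop:Physical_Qubits}, mirroring the proofs of Lemma \ref{lemma:p-median_qubits} and Lemma \ref{lemma:fcflp_qubits}.

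First, I will list the binary variables that occur in the cost Hamiltonian \eqref{GAP-QUBO}. There are two families.
\begin{enumerate}
\item The assignment variables $x_{ij}$, for $i \in S$ and $j \in J$. Only the pre-opened facilities $S$ appear here, and there are no location variables $y_i$, since the location decisions are fixed. This family contributes $\lvert S \rvert \lvert J \rvert$ variables.
\item The auxiliary slack variables $z_{ik}$, for $i \in S$ and $k = 1,\dots,\ell_i$ with $\ell_i = \lceil \log_2(q_i+1) \rceil$. These come from the binary encoding of $u_i \in \{0,\dots,q_i\}$.
\end{enumerate}

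Next, I will justify that $\ell_i$ bits suffice for the second family. Every value in $\{0,\dots,q_i\}$ is representable as $\sum_{k=1}^{\ell_i} 2^{k-1} z_{ik}$ because $2^{\ell_i} - 1 \geq q_i$. Any encoded value exceeding $q_i$ would make the slack infeasible, but such values are simply penalized. So the encoding preserves the feasible set of \eqref{GAP}, and the QUBO \eqref{GAP-QUBO} is a valid formulation. Summing over $i \in S$, the slack variables contribute $\sum_{i \in S} \lceil \log_2(q_i+1) \rceil$ variables.

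Adding the two counts gives $n^{GAP} = \lvert S \rvert \lvert J \rvert + \sum_{i \in S} \lceil \log_2(q_i + 1) \rceil$. The qubit statement then follows directly from Proposition \ref{prop:Physical_Qubits}: a QUBO with $N$ variables solved by QAOA, or by WS-QAOA, which uses the same register, needs $N$ physical qubits.

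There is no real obstacle. The only point requiring care is confirming that no other variables enter \eqref{GAP-QUBO}, in particular that no $y_i$ are present, and that the bit length $\ell_i$ is sufficient.
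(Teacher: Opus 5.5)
Your proposal is correct and takes essentially the same route as the paper's proof: count the $\lvert S\rvert\lvert J\rvert$ assignment variables $x_{ij}$ and the $\sum_{i\in S}\lceil\log_2(q_i+1)\rceil$ slack bits $z_{ik}$, then invoke Proposition \ref{prop:Physical_Qubits}. Your extra justification that $\ell_i$ bits suffice, and that encoded slack values above $q_i$ are automatically penalized, is a harmless addition that the paper leaves implicit.
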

\begin{proof}
	The QUBO formulation \eqref{GAP-QUBO} uses the binary variables $x_{ij}$ for $i \in S, j \in J$, and the auxiliary binary variables $z_{ik}$ for $i \in S, k = 1,...,\lceil \log_2(q_i + 1) \rceil$. This gives a total of $n^{GAP} = \lvert S \rvert \lvert J \rvert + \sum_{i \in S} \lceil \log_2(q_i + 1) \rceil$ binary variables, and the second part of the statement follows from Proposition \ref{prop:Physical_Qubits}.
\end{proof}
\begin{corollary}
	\label{korollar:gap}
	Let the penalty parameter in \eqref{GAP-QUBO} be $P = \sum_{i \in S}\sum_{j \in J}c_{ij} + 1$. Then, any global minimizer of \eqref{GAP-QUBO} yields a global minimizer of \eqref{GAP}.
\end{corollary}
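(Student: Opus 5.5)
The plan is to reduce Corollary \ref{korollar:gap} to Corollary \ref{theorem:PenPos}, exactly as was done for the $p$-Median Problem, the $p$-Center Problem and the FCFLP.

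First, I will verify that the slack-augmented problem underlying \eqref{GAP-QUBO} has the form \eqref{BP}. Its variables are $x_{ij}$ for $i \in S$, $j \in J$, together with the binary slack variables $z_{ik}$. Its equality constraints are $\sum_{i\in S}x_{ij}=1$ for $j \in J$ and $\sum_{j\in J}d_jx_{ij}+\sum_{k=1}^{\ell_i}2^{k-1}z_{ik}=q_i$ for $i\in S$. These have integer coefficients because $d_j, q_i \in \N$. There are no constraints of the type $g_i$.

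The cost vector $c$ of this \eqref{BP} has entries $c_{ij}\ge 0$ on the $x$-variables and $0$ on the $z$-variables. So $c \geq 0$, and $\sum_i c_i = \sum_{i\in S}\sum_{j\in J}c_{ij}$.

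The penalty terms in \eqref{GAP-QUBO} are precisely the row-wise squared residuals $(b_i - A_ix)^2$ from \eqref{QUBO-PEN}. Hence \eqref{GAP-QUBO} is the QUBO \eqref{QUBO-PEN} associated with this \eqref{BP}.

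Second, I will confirm that this \eqref{BP} is equivalent to \eqref{GAP}. Any feasible $x$ of \eqref{GAP} has slack $u_i = q_i - \sum_j d_jx_{ij} \in \{0,\dots,q_i\}$. Since $\ell_i=\lceil\log_2(q_i+1)\rceil$, every integer in $\{0,\dots,2^{\ell_i}-1\} \supseteq \{0,\dots,q_i\}$ has a binary encoding with $\ell_i$ bits, so $u_i$ can be represented by the $z_{ik}$. Conversely, any feasible point of \eqref{BP} has $\sum_j d_jx_{ij} = q_i - u_i \le q_i$, since the encoded $u_i$ is nonnegative. The objectives coincide on $x$, so optimal solutions correspond. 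Feasibility of \eqref{GAP} is assumed globally in the paper.

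Finally, $P = \sum_{i\in S}\sum_{j\in J}c_{ij}+1 > \sum_i c_i$. Corollary \ref{theorem:PenPos} then gives that every global minimizer of \eqref{GAP-QUBO} is feasible and optimal for the \eqref{BP}. Its $x$-part is therefore a global minimizer of \eqref{GAP}.

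The only step needing any care is the equivalence check. The point is that the encoded slack can exceed $q_i$, but this never produces spurious feasible points, because nonnegativity of $u_i$ is all that the inequality requires. Everything else is a direct invocation of Corollary \ref{theorem:PenPos}.
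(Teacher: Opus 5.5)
Your proposal is correct and takes the same route as the paper. The paper identifies the objective $\sum_{i\in S}\sum_{j\in J}c_{ij}x_{ij}$ of the slack-augmented binary program underlying \eqref{GAP-QUBO} and invokes Corollary~\ref{theorem:PenPos}; your extra checks (integrality of $d_j,q_i$ so that violated squared residuals are at least $1$, $c\ge 0$, and equivalence of the binary-encoded formulation with \eqref{GAP}) are sound details the paper leaves implicit.
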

\begin{proof}
	For the GAP, the objective function of its equivalent formulation of the form \eqref{BP}, used to construct the QUBO \eqref{GAP-QUBO}, is $\sum_{i \in S}\sum_{j \in J}c_{ij}x_{ij}$.
	Hence, the statement is a consequence of Corollary \ref{theorem:PenPos}.
\end{proof}

\subsection{Discrete Ordered Median Problem}
\label{sec:QUBO_DOMP}
The Discrete Ordered Median Problem (DOMP) constitutes a powerful and unifying framework in discrete location analysis, encompassing a wide range of classical models such as the $p$-median, $p$-center, and uncapacitated facility location probelms. Originally introduced in \cite{Nickel:DOMP}, the DOMP extends these formulations by incorporating rank-dependent weights into the objective function, thereby enabling an easy transition between different location criteria. In contrast to traditional formulations that minimize either the total or the maximum service cost, the DOMP minimizes an ordered weighted average of allocation costs, where each cost is weighted according to its rank in the ordered list of client-facility distances \cite{LocationScience:ChapterOMP}.
\newline
This rank-based structure captures decision-maker's preferences between efficiency and equity, enabling a controlled trade-off between minimizing average service costs and reducing disparities in service quality among clients. Such modeling flexibility makes the DOMP particularly well-suited for strategic planning contexts, such as logistics network design, public facility location, and emergency service systems, where both performance and fairness considerations are essential \cite{Nickel:DOMPExact, Nickel:DOMPHeuristic}.
\newline
From a mathematical perspective, the DOMP introduces a complex combinatorial structure arising from the embedded sorting operation in the objective function. This interdependence between location, allocation and ranking renders the problem signficantly more difficult to solve in practice than classical discrete location models. Early formulations combined integer location and assignment constraints with additional permutation variables to enforce the cost ordering, resulting in integer nonlinear programs. In the following, we consider one of the foundational formulations from this early line of research from \cite{Nickel:DOMPExact}.
\newline
\newline
Formally, let $x_j \in \{0,1\}$ indicate whether a facility is located at $j \in J$, and let $y_{ij} \in \{0,1\}$ indicate whether the demand of client $i \in I$ is served by a facility at site $j$. Again, we assume that $I$ and $J$ coincide, and we assume $I = J = \{1,...,M\}$. To capture the rank-dependent cost structure of the DOMP, we also introduce the binary variable $s_{ij} \in \{0,1\}$, which equals $1$ if the cost of supplying client $j$ is the $i$-th smallest among all clients, and $0$ otherwise. These variables encode the ordering of clients by increasing service cost.
\newline
The classical $N$-Median Problem is modeled using the standard constraints: each facility is either open or closed, exactly $N$ facilities are selected, each client is assigned to exactly one facility, and assignment is only possible to open facilities:
\begin{align*}
	\sum_{j = 1}^M x_j &= N, \\ \sum_{j = 1}^My_{ij} &= 1, &\forall i \in I, \\
	y_{ij} &\leq x_j,& \forall i \in I, j \in J.
\end{align*}
The ranking variables $s_{ij}$ are constrained to define a valid permutation of the clients:
\begin{align*}
	\sum_{i = 1}^M s_{ij} = 1, \forall j \in J,\quad \quad \sum_{j = 1}^M s_{ij} = 1, \forall i \in I.
\end{align*}
Moreover, to ensure the clients are ordered by increasing allocation cost, the following monotonicity constraints are imposed for $i = 1,...,M-1$: 
\begin{align*}
	\sum_{k = 1}^Ms_{ik} \left( \sum_{j = 1}^M y_{kj}c_{kj}\right) 
	\leq \sum_{k = 1}^M s_{i+1,k} \left( \sum_{j = 1}^M y_{kj}c_{kj} \right).
\end{align*}
The DOMP objective weighs the clients' service costs $c \geq 0$ according to the chosen rank-dependent coefficients $\lambda_i$:
\begin{align*}
	\sum_{i = 1}^M \sum_{k = 1}^M s_{ik} \left( \sum_{j = 1}^M y_{kj}c_{kj}\right) \lambda_i.
\end{align*}
We note that choosing the parameter $\lambda = (1,1,...,1)$ makes the DOMP equivalent to the $N$-Median Problem, whereas $\lambda = (0,...,0,1)$ makes the DOMP equivalent to the $N$-Center Problem.
\newline
\newline
Together, these constraints and the objective define the DOMP as a nonlinear integer program with a nontrivial combinatorial structure arising from the embedded sorting of service costs. The DOMP is now given by:
\begin{subequations} \label{DOMP}
	\begin{align}
		\min_{x,y,s} \quad & \sum_{i = 1}^M \sum_{k = 1}^M s_{ik} \left( \sum_{j = 1}^M y_{kj}c_{kj}\right) \lambda_i  \\
		\mbox{s.t.} \quad & \sum_{j = 1}^Mx_j = N, \\
		& \sum_{j = 1}^M y_{ij} = 1,\quad \quad  \forall i = 1,...,M, \\
		& y_{ij} \leq x_j, \quad \quad \forall i,j = 1,...,M, \\
		& \sum_{i = 1}^M s_{ij} = 1,\quad \quad \forall j = 1,...,M, \\
		& \sum_{j = 1}^M s_{ij} = 1, \quad \quad\forall i = 1,...,M, \\
		& \sum_{k = 1}s_{ik} \left( \sum_{j = 1}^M y_{kj}c_{kj} \right)\\
		&  \leq \sum_{k = 1}^M s_{i+1, k} \left( \sum_{j = 1}^M y_{kj}c_{kj} \right), \quad \forall i = 1,....,M-1,\\
		& x \in \{0,1\}^M, y \in \{0,1\}^{M^2}, s \in \{0,1\}^{M^2}.
	\end{align}
\end{subequations}
\textbf{QUBO formulation}: Let $SC$ denote the sum of all entries of the cost matrix $c$, i.e., $SC = \sum_{i = 1}^M \sum_{j = 1}^M c_{ij}$. To transform the monotonicity constraints into equality constraints, we introduce slack variables $v_i \in \{0,...,SC\}$ for each $i = 1,...,M-1$.
To obtain a binary representation, each slack variable $v_i$ is encoded using $\lceil \log_2(SC + 1) \rceil$ binary variables $v_{ik} \in \{0,1\}$, with $v_i = \sum_{k=1}^{\lceil \log_2(SC + 1) \rceil}2^{k-1}v_{ik}$.
\newline
In a general \textit{PUBO (polynomial unconstrained binary optimization) formulation}, one could directly impose the penalty term for the monotonicity constraints:
	\begin{align*}
		P \sum_{i = 1}^{M-1} \Bigg(
		\sum_{k = 1}^{M}s_{i+1, k}\left(  \sum_{j = 1}^M y_{kj}c_{kj} \right)
		- \sum_{k = 1}^{M}s_{ik}\left(  \sum_{j = 1}^M y_{kj}c_{kj} \right)- \sum_{k = 1}^{\lceil \log_2(SC + 1) \rceil}2^{k-1}v_{ik}
		\Bigg)^2.
	\end{align*}
However, this is a polynomial of degree four and therefore not directly suitable for a QUBO formulation.
\newline
To reduce the degree to quadratic, we introduce auxiliary variables $\{0,1\} \ni u_{ikj} = s_{ik}y_{kj}$ for all $i,k,j = 1,...,M$.
The coupling $u_{ikj} = s_{ik}y_{kj}$ could be enforced via the penalty term: 
\begin{align*}
	P \sum_{i = 1}^M \sum_{k = 1}^M \sum_{j = 1}^M (u_{ikj} - s_{ik}y_{kj})^2,
\end{align*}
which is again quartic in its current form. Exploiting that all variables are binary variables, we replace the constraint $u_{ikj} = s_{ik}y_{kj}$ with the McCormick inequalities \cite{McCormick1976}, which are tight in this case and given by:
\begin{align*}
	u_{ikj} \leq s_{ik}, \quad u_{ikj} \leq y_{kj}, \quad u_{ikj} \geq s_{ik} + y_{kj} - 1.
\end{align*} Finally, the inequality $u_{ikj} \geq s_{ik} + y_{kj} - 1$ is transformed into an equality using a discrete slack variable $w_{ikj} \in \{0,1,2\}$. Binary encoding is achieved via $w_{ikj,1}, w_{ikj,2} \in \{0,1\}$ with $w_{ikj} = w_{ikj,1} + 2 w_{ikj,2}$. The factor $2$ reduces symmetry, which may improve the performance of quantum optimization algorithms.
\newline
\newline
Altogether, this yields the cost Hamiltonian:
\begin{subequations} \label{DOMP-QUBO} 
	\begin{align}
		H &= \sum_{i = 1}^M \sum_{k = 1}^M\sum_{j = 1}^M\lambda_i c_{kj} u_{ikj}  \\
		&+ P \left(\sum_{j =1}^Mx_j - N \right)^2+ P \sum_{i = 1}^M \left(\sum_{j = 1}^M y_{ij} -1\right)^2 \\
		& + P \sum_{i = 1}^M\sum_{j = 1}^M y_{ij}(1 - x_j) + P  \sum_{j = 1}^M\left( \sum_{i = 1}^M s_{ij} - 1\right)^2 \\
		& + P  \sum_{i = 1}^M \left(\sum_{j = 1}^Ms_{ij} - 1\right)^2 \\
		& + P \sum_{i = 1}^M\sum_{k = 1}^M \sum_{j = 1}^Mu_{ikj}(1 - s_{ik}) \\
		& + P  \sum_{i = 1}^M\sum_{k = 1}^M \sum_{j = 1}^Mu_{ikj}(1 - y_{kj})\\
		& + P \sum_{i = 1}^M\sum_{k=1}^M\sum_{j = 1}^M(-u_{ikj}+s_{ik}+y_{kj}-1+w_{ikj,1}+2w_{ikj,2}  )^2 \\
		& + P \sum_{i = 1}^{M-1} \left(\sum_{k = 1}^M \sum_{j = 1}^M c_{kj}(u_{ikj} - u_{i+1,kj}) + \sum_{k = 1}^{\ell_{SC}}2^{k-1}v_{ik} \right)^2,
	\end{align}
\end{subequations}
where $\ell_{SC} = \lceil \log_2(SC+1)\rceil$. 
\begin{lemma}
	The DOMP can be formulated as a QUBO with $n^{DOMP} = M + 2M^2 + 3M^3 + (M - 1) \lceil\log_2(SC + 1) \rceil$ binary variables, where $SC = \sum_{i = 1}^M \sum_{j = 1}^M c_{ij}$. Therefore, the QUBO formulation \eqref{DOMP-QUBO} can be solved on a gate-based Quantum Computer with $n^{DOMP}$ physical qubits via QAOA or WS-QAOA.
\end{lemma}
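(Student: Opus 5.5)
The plan mirrors the proofs of the analogous lemmas for the $p$-Median, $p$-Center, FCFLP and GAP formulations: list every binary variable occurring in the cost Hamiltonian \eqref{DOMP-QUBO}, count each family, add the counts, and then invoke Proposition \ref{prop:Physical_Qubits} for the statement about physical qubits. The main work is to check that no further variables are hidden in the construction and that every family is indexed over the range I claim.

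First I go through the variable families in the order they were introduced.
\begin{itemize}
	\item The location variables $x_j$, $j=1,\dots,M$, contribute $M$ variables.
	\item The allocation variables $y_{ij}$ and the ranking variables $s_{ij}$, $i,j=1,\dots,M$, contribute $M^2$ each, so $2M^2$ together.
	\item The linearization variables $u_{ikj}=s_{ik}y_{kj}$, $i,k,j=1,\dots,M$, contribute $M^3$ variables.
	\item The slack for the McCormick inequality $u_{ikj}\geq s_{ik}+y_{kj}-1$ is $w_{ikj}\in\{0,1,2\}$. It is encoded by the two binaries $w_{ikj,1},w_{ikj,2}$, which gives $2M^3$ variables.
	\item The monotonicity constraints are indexed by $i=1,\dots,M-1$. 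Each has a slack $v_i\in\{0,\dots,SC\}$ encoded with $\ell_{SC}=\lceil\log_2(SC+1)\rceil$ binaries $v_{ik}$, giving $(M-1)\lceil\log_2(SC+1)\rceil$ variables.
\end{itemize}

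Summing yields $M+2M^2+M^3+2M^3+(M-1)\lceil\log_2(SC+1)\rceil = M+2M^2+3M^3+(M-1)\lceil\log_2(SC+1)\rceil = n^{DOMP}$.

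The one point that needs care is confirming that the remaining constraints are penalized without any new variables.
\begin{itemize}
	\item The McCormick upper bounds $u_{ikj}\leq s_{ik}$ and $u_{ikj}\leq y_{kj}$ enter through the slack-free terms $u_{ikj}(1-s_{ik})$ and $u_{ikj}(1-y_{kj})$ from Table \ref{tab:penalty_terms}.
	\item The linking constraints $y_{ij}\leq x_j$ likewise enter slack-free.
	\item The cardinality, assignment and permutation equalities are penalized as plain squares.
	\item After substituting $u_{ikj}$, the degree-four monotonicity penalty becomes quadratic in $u$ and $v$ alone.
\end{itemize}
So \eqref{DOMP-QUBO} is indeed quadratic in exactly the listed variables, and the count above is complete.

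Finally, since \eqref{DOMP-QUBO} is a QUBO in $n^{DOMP}$ binary variables, Proposition \ref{prop:Physical_Qubits} gives that it can be solved with QAOA, or with WS-QAOA (which uses the same register), on $n^{DOMP}$ physical qubits.
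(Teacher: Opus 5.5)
Your proposal is correct and follows the paper's proof essentially verbatim: enumerate the families $x_j$, $y_{ij}$, $s_{ij}$, $u_{ikj}$, $w_{ikj,1},w_{ikj,2}$ and $v_{ik}$, add their counts to get $M+2M^2+3M^3+(M-1)\lceil\log_2(SC+1)\rceil$, and invoke Proposition~\ref{prop:Physical_Qubits}. Your extra check that every penalty term in \eqref{DOMP-QUBO} is already quadratic in exactly these variables is accurate; the paper leaves it implicit.
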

\begin{proof}
	For the QUBO formulation \eqref{DOMP-QUBO}, in addition to the binary variables $x_j$ for $j = 1,...,M$, $y_{ij}$ for $i,j = 1,...,M$, and $s_{ij}$ for $i,j = 1,...,M$, the auxiliary binary variables $v_{ik}$ for $i = 1,...,M-1, k = 1,..., \lceil \log_2(SC + 1) \rceil$, with $SC = \sum_{i = 1}^M \sum_{j = 1}^M c_{ij}$, $u_{ikj}$ for $i,k,j = 1,...,M$, and $w_{ikjt}$ for $i,k,j = 1,...,M, t \in \{1,2\}$ are used. This results in a total of $n^{DOMP} = M + 2M^2 + 3M^3 + (M - 1) \lceil\log_2(SC + 1) \rceil$ binary variables, and the second part of the statement is a consequence of Proposition \ref{prop:Physical_Qubits}.
\end{proof}
\begin{corollary}
	\label{korollar:domp}
	Let the penalty parameter in \eqref{DOMP-QUBO} be $P = \sum_{i = 1}^M\sum_{k = 1}^M\sum_{j = 1}^M\lambda_ic_{kj} + 1$. Then, any global minimizer of \eqref{DOMP-QUBO} yields a global minimizer of \eqref{DOMP}.
\end{corollary}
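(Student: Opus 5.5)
The plan is to exhibit the linearized binary program whose penalized form is exactly \eqref{DOMP-QUBO}, and then apply Corollary \ref{theorem:PenPos} to it. This mirrors the earlier corollaries. The one extra point is that the equivalence between \eqref{DOMP} and this binary program is not a mere rewriting, because of the linearization $u_{ikj}=s_{ik}y_{kj}$. That equivalence therefore has to be checked before Theorem \ref{theorem:PenArbitrary} and its corollary can be used.

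\textbf{Step 1: the linearized program.} The binary program, in the form \eqref{BP}, has variables $x,y,s,u,w_{\cdot,1},w_{\cdot,2},v_{ik}$. Its objective is $\sum_{i,k,j}\lambda_i c_{kj}u_{ikj}$. Its constraints are:
\begin{itemize}
\item the $N$-median constraints;
\item the two permutation constraints on $s$;
\item the McCormick constraints $u_{ikj}\le s_{ik}$ and $u_{ikj}\le y_{kj}$, which are simple constraints $g$ from Table \ref{tab:penalty_terms};
\item the equality $-u_{ikj}+s_{ik}+y_{kj}-1+w_{ikj,1}+2w_{ikj,2}=0$;
\item the monotonicity equalities with slacks $v_i=\sum_k 2^{k-1}v_{ik}$.
\end{itemize}
All equality rows have integer coefficients, since $c_{kj}$ are assumed integral just as the distances were for the $p$-Center Problem; otherwise the "violation $\ge 1$" argument fails. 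Each term of \eqref{DOMP-QUBO} is then exactly $P$ times a row-wise squared residual or a table penalty term. So \eqref{DOMP-QUBO} is \eqref{QUBO-PEN} for this program.

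\textbf{Step 2: equivalence with \eqref{DOMP}.} For binary variables, $u\le s$, $u\le y$ and $u\ge s+y-1$ force $u_{ikj}=s_{ik}y_{kj}$. The slack $w\in\{0,1,2,3\}$ can always realize $u-s-y+1\in\{0,1\}$. Under this substitution the objective and monotonicity constraints become those of \eqref{DOMP}. Moreover $0\le$ each difference of ordered costs $\le SC$, so $v_i\in\{0,\dots,SC\}$ is representable with $\ell_{SC}$ bits; the range of $\ell_{SC}$ bits may exceed $SC$, but it is still nonnegative, so no spurious feasibility arises. Hence feasible points and objective values correspond, and the binary program is feasible because \eqref{DOMP} is assumed feasible.

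\textbf{Step 3: apply the corollary.} The cost vector has entries $\lambda_i c_{kj}$ on the $u$-variables and $0$ elsewhere. With $c\ge0$ and $\lambda\ge 0$ (the rank weights are nonnegative as in the examples), the corollary's hypothesis $c\ge 0$ holds. The chosen $P$ exceeds the sum of cost entries, so Corollary \ref{theorem:PenPos} gives that every global minimizer of \eqref{DOMP-QUBO} is optimal for the binary program, and hence, projecting to $(x,y,s)$, optimal for \eqref{DOMP}. If $\lambda$ may have negative entries, I would instead invoke Theorem \ref{theorem:PenArbitrary} with $\sum|\lambda_i|c_{kj}$; the stated $P$ then only suffices under $\lambda\ge0$, which I will state as the implicit assumption.

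The main obstacle is Step 2. It requires carefully confirming that the McCormick-plus-slack encoding and the slack range reproduce exactly the feasible set of \eqref{DOMP}, and that integrality of the data makes every violated penalty at least $1$.
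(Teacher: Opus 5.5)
Your proposal is correct and follows the same route as the paper. The paper's proof simply identifies $\sum_{i,k,j}\lambda_i c_{kj}u_{ikj}$ as the objective of the linearized binary program of the form \eqref{BP} and invokes Corollary \ref{theorem:PenPos}. Your extra checks are valid and make explicit what the one-line proof leaves unstated: that the McCormick constraints with slack $w_{ikj,1}+2w_{ikj,2}$, together with the $\ell_{SC}$-bit slacks $v_i$, reproduce exactly the feasible set and objective of \eqref{DOMP}, and that the argument implicitly requires $c_{kj}\in\Z$ and $\lambda\ge 0$.
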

\begin{proof}
	For the DOMP, the objective function of its equivalent formulation of the form \eqref{BP}, used to construct the QUBO \eqref{DOMP-QUBO}, is $\sum_{i = 1}^M \sum_{k = 1}^M\sum_{j = 1}^M\lambda_i c_{kj} u_{ikj}$. Hence, the statement follows from Proposition \ref{theorem:PenPos}.
\end{proof}
\subsection{Discrete Multi-Period Facility Location Problem}
\label{sec:QUBO_DMPFLP}
We consider the $p$-Median Problem, and extend it to a multi-period setting \cite{LocationScience:ChapterMultiPeriod}. Consider a set of demand nodes $J$ that must be served over a finite planning horizon $T$. Facilities can be located at a subset of these nodes, denoted by $I \subseteq J$. In each period, exactly $p$ facilities are located at a subset of these nodes, denoted by $I \subseteq J$. The objective is to determine in every period where to locate the facilities and how to assign demand nodes to them so that the total cost of operation, opening, and closing is minimized.
\newline
The model parameters are defined as follows. $c_{ijt}$ denotes the cost of allocating demand node $j \in J$ to facility $i \in I$ in period $t \in T$. $g_{it}$ and $h_{it}$ represent, respectively, the opening and closing cost of facility $i \in I$ in period $t \in T$. $m_t$ specifies the maximum number of facilities that can be opened in each period $t \in T$. When $I = J$, the model reduces to the classical multi-period $p$-Median Problem. We introduce the following decision variables:
\begin{itemize}
	\item $x_{ijt} = 1$, if demand node $j \in J$ is served by facility $i \in I$ in period $t \in T$, and $0$ otherwise. In particular, $x_{iit} = 1$ indicates that a facility operates at site $i$ in period $t$.
	\item $z'_{it} \in \{0,1\}$: $1$, if a facility is opened at location $i \in I$ in period $t \in T$, and $0$ otherwise.
	\item $z''_{it} \in \{0,1\}$: $1$, if a facility is closed at location $i \in I$ in period $t \in T$, and $0$ otherwise.
\end{itemize}
The Discrete Multi-Period Facility Location ($p$-Median) Problem (DMPFLP), introduced by \cite{WesolowskyTruscottMultiPeriod}, is given by the following optimization problem:
\begin{subequations}\label{DMPFLP} 
	\begin{align}
		\displaystyle\min_{x, z', z''} \quad & \sum_{t \in T} \sum_{i \in I} \sum_{j \in J} c_{ijt}x_{ijt} + \sum_{t \in T} \sum_{i \in I} g_{it}z_{it}' + \sum_{t \in T} \sum_{i \in I} h_{it} z_{it}'' \\
		\mbox{s.t.} \quad & \sum_{i \in I} x_{ijt} = 1, \quad \quad \forall t \in T, j \in J, \\
		& \sum_{j \in J} x_{ijt} \leq \lvert J \rvert x_{iit},  \quad \quad  t \in T, i \in I, \\
		& \sum_{i \in I} x_{iit} = p, \quad \quad  \forall t \in T, \\
		& \sum_{i \in I}z_{it}' \leq m_t, \quad \quad  \forall t \in T, \\
		& x_{iit} - x_{ii,t-1} + z_{i, t-1}'' - z_{it}' = 0, \,  \forall t \in T \backslash \{1\}, i \in I, \\
		& x_{ijt} \in \{0,1\},  \quad \quad  \forall t \in T, i \in I, j \in J, \\
		& z_{it}', z_{it}'' \in \{0,1\}, \quad \quad  \forall t \in T, i \in I.
	\end{align}
\end{subequations}
In this formulation, facilities can be opened or closed at the beginning and end of each period, respectively. The parameter $m_t$ limits the number of facilities that can be opened during period $t$, while the binary variables $z'_{it}$ and $z''_{it}$ indicate opening and closing decisions.
\newline
This formulation allows facilities to be opened and closed multiple times across the planning horizon, a feature that may be unrealistic in many applications. We refer to \cite{LocationScience:ChapterMultiPeriod} for further modeling extensions.
\newline
\newline
\textbf{QUBO-formulation}:
To reformulate the problem as a QUBO, we first transform the second constraint into equality constraints. We introduce integer slack variables $y_{it} \in \{0, ..., \lvert J \rvert \}$ for all $t \in T, i \in I$. Each slack variable $y_{it}$ is expressed through a logarithmic binary encoding, $y_{it} = \sum_{k = 1}^{\lceil \log_2(\lvert J \rvert + 1) \rceil}2^{k-1}y_{itk}$, and $y_{itk} \in \{0,1\}$ for $k = 1,..., \lceil \log_2(\lvert J \rvert + 1) \rceil$. Similarly, we introduce an integer slack variable $u_t \in \{0,...,m_t\}$ for each $t \in T$ to transform the fourth constraint into an equality constraint. Each slack varaible $u_t$ is again represented via logarithmic encoding, $u_t = \sum_{k = 1}^{\lceil \log_2(m_t + 1) \rceil} 2^{k-1}u_{tk},$ with $u_{tk} \in \{0,1\}$, for $k = 1,...,\lceil \log_2(m_t + 1) \rceil$.
\newline
\newline
Altogether, these transformations yield the following cost Hamiltonian, which represents the QUBO formulation of \eqref{DMPFLP}:
\begin{subequations}\label{DMPFLP-QUBO}
	\begin{align}
		H = & \sum_{t \in T} \sum_{i \in I} \sum_{j \in J} c_{ijt}x_{ijt} + \sum_{t \in T} \sum_{i \in I} g_{it}z_{it}' + \sum_{t \in T} \sum_{i \in I} h_{it} z_{it}'' \\
		& + P \sum_{t \in T} \sum_{j \in J} \left(1 - \sum_{i \in I}x_{ijt} \right)^2 \\
		& + P  \sum_{t \in T} \sum_{i \in I} \left(\lvert J \rvert x_{iit} - \sum_{j \in J} x_{ijt} - \sum_{k = 1}^{\lceil \log_2(\lvert J \rvert + 1) \rceil}2^{k-1}y_{itk} \right)^2 \\
		& + P  \sum_{t \in T} \left(p - \sum_{i \in I}x_{iit} \right)^2 \\
		& + P  \sum_{t \in T} \left(m_t - \sum_{k = 1}^{\lceil \log_2(m_t + 1) \rceil} 2^{k-1}u_{tk} - \sum_{i \in I}z_{it}' \right)^2 \\
		& + P \sum_{t \in T\backslash \{1\}}\sum_{i \in I} (x_{iit} - x_{ii,t-1} + z_{i, t-1}'' - z_{it}')^2.
	\end{align}
\end{subequations}
\begin{lemma}
	The DMPFLP can be formulated as a QUBO with $n^{DMPFLP} = \lvert I \rvert \lvert J \rvert \lvert T \rvert + 2\lvert I \rvert \lvert T \rvert + \lvert I \rvert \lvert T \rvert \lceil \log_2(\lvert J \rvert + 1) \rceil + \sum_{t \in T}\lceil \log_2(m_t + 1) \rceil $ binary variables. Therefore, the QUBO formulation \eqref{DMPFLP-QUBO} can be solved on a gate-based Quantum Computer with $n^{DMPFLP}$ physical qubits via QAOA or WS-QAOA.
\end{lemma}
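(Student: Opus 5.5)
The proof follows the same counting pattern as the analogous lemmas for the $p$-Median, $p$-Center, FCFLP, GAP and DOMP formulations. I will list every binary variable appearing in the cost Hamiltonian \eqref{DMPFLP-QUBO}, group the variables by type, and count each group. The qubit statement then follows from Proposition \ref{prop:Physical_Qubits}.

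\textbf{Original decision variables.} The assignment variables $x_{ijt}$ are indexed by $i \in I$, $j \in J$, $t \in T$, which gives $\lvert I \rvert \lvert J \rvert \lvert T \rvert$ variables. The variables $x_{iit}$ appear separately in the penalty terms, but they are not new: they are among the $x_{ijt}$ and must not be counted twice. The opening and closing variables $z'_{it}$ and $z''_{it}$, for $i \in I$, $t \in T$, contribute $2\lvert I \rvert \lvert T \rvert$ variables.

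\textbf{Auxiliary slack variables.} The linking constraint $\sum_{j} x_{ijt} \leq \lvert J \rvert x_{iit}$ is imposed for each pair $(i,t)$. Its slack $y_{it} \in \{0,\dots,\lvert J \rvert\}$ is encoded by $y_{itk}$ with $k = 1,\dots,\lceil \log_2(\lvert J \rvert + 1) \rceil$, giving $\lvert I \rvert \lvert T \rvert \lceil \log_2(\lvert J \rvert + 1) \rceil$ variables. The opening-limit constraint is imposed for each $t \in T$. Its slack $u_t \in \{0,\dots,m_t\}$ is encoded by $u_{tk}$ with $k = 1,\dots,\lceil \log_2(m_t+1) \rceil$, giving $\sum_{t \in T} \lceil \log_2(m_t+1) \rceil$ variables.

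\textbf{Constraints needing no auxiliaries.} The remaining constraints are the assignment equalities, the $p$-facility equalities and the flow-balance equalities $x_{iit} - x_{ii,t-1} + z''_{i,t-1} - z'_{it} = 0$. All of them are already linear equalities in binary variables, so they are penalized directly by squared terms and introduce no new variables.

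Summing the four groups gives exactly $n^{DMPFLP}$. Proposition \ref{prop:Physical_Qubits} then shows that $n^{DMPFLP}$ physical qubits suffice to run QAOA or WS-QAOA on \eqref{DMPFLP-QUBO}. The argument has no real obstacle; the only point needing care is the one noted above, namely not counting the diagonal variables $x_{iit}$ a second time.
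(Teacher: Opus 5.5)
Your proposal is correct and takes the same approach as the paper. Both enumerate the binary variables appearing in \eqref{DMPFLP-QUBO} ($x_{ijt}$, $z'_{it}$, $z''_{it}$, and the slack-encoding bits $y_{itk}$ and $u_{tk}$), count each group, and invoke Proposition \ref{prop:Physical_Qubits}; your remark that the diagonal variables $x_{iit}$ are not additional variables is a useful clarification the paper leaves implicit.
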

\begin{proof}
	For the QUBO formulation \eqref{DMPFLP-QUBO}, the binary variables $x_{ijt}$ for $t \in T, i \in I, j \in J$, $z_{it}'$ for $t \in T, i \in I$, and $z_{it}'' \in \{0,1\}$ for $t \in T, i \in $, together with the auxiliary binary variables $y_{itk}$ for $t \in T, i \in I, k = 1,...,\lceil \log_2(\lvert J \rvert + 1) \rceil$ and $u_{tk}$ for $t \in T, k = 1,..., \lceil \log_2(m_t + 1) \rceil$ are used. The second part of the statement follows from Proposition \ref{prop:Physical_Qubits}.
\end{proof}
\begin{corollary}
	\label{korollar:dmpflp}
	Let the penalty parameter in \eqref{DMPFLP-QUBO} be $P = \sum_{t \in T} \sum_{i \in I}\sum_{j \in J}c_{ijt} + \sum_{t \in T} \sum_{i \in I}g_{it} + \sum_{t \in T} \sum_{i \in I}h_{it} + 1$. Then, any global minimizer of \eqref{DMPFLP-QUBO} yields a global minimizer of \eqref{DMPFLP}.
\end{corollary}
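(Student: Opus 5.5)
The plan is to reduce the statement to Corollary \ref{theorem:PenPos}, exactly as was done for the $p$-Median Problem, the FCFLP, the GAP and the DOMP. The first step is to identify the equivalent binary program of the form \eqref{BP} that underlies the Hamiltonian \eqref{DMPFLP-QUBO}. Its variables are $x_{ijt}$, $z'_{it}$, $z''_{it}$ together with the slack bits $y_{itk}$ and $u_{tk}$. Its linear equality constraints are:
\begin{itemize}
	\item the assignment constraints;
	\item the linking constraints $\lvert J\rvert x_{iit} - \sum_{j} x_{ijt} - \sum_k 2^{k-1}y_{itk} = 0$;
	\item the cardinality constraints $\sum_i x_{iit} = p$;
	\item the opening-limit constraints $m_t - \sum_k 2^{k-1}u_{tk} - \sum_i z'_{it} = 0$;
	\item the flow-balance constraints $x_{iit} - x_{ii,t-1} + z''_{i,t-1} - z'_{it} = 0$.
\end{itemize}
There are no simple constraints $g_i$. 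All coefficients are integers, so $A \in \Z^{m\times n}$ and $b \in \Z^m$. This means every violated row contributes a squared integer residual of at least $1$, as required in \eqref{QUBO-PEN}. The penalty part of \eqref{DMPFLP-QUBO} is then precisely $P\sum_i (b_i - A_i x)^2$ for this system.

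Next I would check that this binary program is equivalent to \eqref{DMPFLP}. Given a feasible point of \eqref{DMPFLP}, the slacks $\lvert J\rvert x_{iit} - \sum_j x_{ijt} \in \{0,\dots,\lvert J\rvert\}$ and $m_t - \sum_i z'_{it} \in \{0,\dots,m_t\}$ are representable by the logarithmic encodings with $\lceil\log_2(\lvert J\rvert+1)\rceil$ and $\lceil \log_2(m_t+1)\rceil$ bits, respectively. Conversely, a nonnegative encoded slack restores the inequalities. The objective carries no cost on the slack bits.

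The cost vector of the binary program therefore has entries $c_{ijt}$, $g_{it}$, $h_{it}$ and zeros. With the usual nonnegativity assumption on these costs, we have $c\ge 0$, and $\sum_i c_i$ equals the stated sum. The chosen $P$ is this sum plus one, which strictly exceeds it, so Corollary \ref{theorem:PenPos} applies. It follows that every global minimizer of \eqref{DMPFLP-QUBO} is feasible and optimal for the binary program. Projecting onto $(x,z',z'')$ then gives a global minimizer of \eqref{DMPFLP}.

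The only delicate point is the sign of the costs, since the statement sums the $c_{ijt}$, $g_{it}$, $h_{it}$ without absolute values. If opening or closing costs could be negative, I would instead invoke Theorem \ref{theorem:PenArbitrary} with absolute values. Under the paper's standing assumption of nonnegative costs, Corollary \ref{theorem:PenPos} suffices. Feasibility of \eqref{DMPFLP} is assumed throughout, which is the remaining hypothesis of Corollary \ref{theorem:PenPos}.
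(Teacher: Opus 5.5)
Your proposal is correct and follows the same route as the paper. The paper's proof simply identifies the objective $\sum_{t\in T}\sum_{i\in I}\sum_{j\in J} c_{ijt}x_{ijt} + \sum_{t\in T}\sum_{i\in I} g_{it}z'_{it} + \sum_{t\in T}\sum_{i\in I} h_{it}z''_{it}$ of the underlying binary program and invokes Corollary~\ref{theorem:PenPos}. Your extra checks (integrality of the equality system, representability of the slacks, projection back to $(x,z',z'')$) are sound additions, and so is your remark that $c_{ijt}, g_{it}, h_{it} \geq 0$ is needed --- the paper uses this assumption implicitly but never states it for this problem.
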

\begin{proof}
	For the DMPFLP, the objective function of the equivalent formulation of the form  \eqref{BP}, used to construct the QUBO \eqref{DMPFLP-QUBO} is $\sum_{t \in T} \sum_{i \in I} \sum_{j \in J} c_{ijt}x_{ijt} + \sum_{t \in T} \sum_{i \in I} g_{it}z_{it}' + \sum_{t \in T} \sum_{i \in I} h_{it} z_{it}''$. Hence, the statement follows from Corollary \ref{theorem:PenPos}.
\end{proof}


\section{Numerical Experiments}
\label{sec:Numerical_Experiments}
This section presents a comprehensive computational study of the proposed QUBO formulations for the $p$-Median Problem, \eqref{p-Median-QUBO}, and the Fixed-Charge Facility Location Problem, \eqref{FCFLP1-QUBO} and \eqref{FCFLP2-QUBO}, evaluating the performance of both quantum and classical solution approaches. All computations were conducted on a machine equipped with an Intel Core Ultra 7 225U processor (2.00 GHz) and 32GB of RAM. The implementations were carried out in Python 3.11.6.
\newline
\newline
In addition to the quantum algorithms QAOA (Section \ref{sec:Preliminaires_QAOA}) and WS-QAOA (Section \ref{sec:Preliminaries_WS_QAOA}), we applied classical heuristics to the corresponding QUBO instances to provide performance benchmarks. Specifically, we considered the following classical heuristics:
\begin{enumerate}
	\item \textbf{Simulated Annealing}: Simulated Annealing is a stochastic metaheuristic inspired by the physical annealing process in metallurgy,
	introduced by \cite{KirkppatrickSimulatedAnnealing}. Starting from a random initial solution, the algorithm iteratively explores neighboring solutions. Moves that improve the objective are always accepted, whereas worsening moves are accepted with a probability that decreases over time, enabling the algorithm to escape local minima. We employed the D-Wave implementation \cite{DWaveSampler} for QUBOs.
	\item \textbf{Tabu Search}: Tabu Search is a deterministic local-search heuristic, introduced in \cite{Glover:TabuSearch}. It explores the solution space while maintaining a dynamic tabu list that prevents cycling by prohibiting recently visited solutions. Starting from an initial solution, at each iteration, the algorithm selects the best neighbor not prohibited by the tabu list, thereby enabling exploration beyond local optimality. We used the implementation provided via D-Wave \cite{DWaveSampler} for QUBOs.
\end{enumerate}
We use the following notation to refer to the individual methods and parameter settings:
\begin{itemize}
	\item \texttt{Gurobi}: Exact solution of the original mixed-integer (nonlinear) programming formulation using \texttt{Gurobi 12.0.3} \cite{Gurobi} via \texttt{gurobipy 12.0.3}.
	\item \texttt{SA-20/100/500}: Simulated Annealing with parameter \texttt{num\_reads} $= 20, 100,$ and $500$, respectively.
	\item \texttt{Tabu-0/50/250}: Tabu Search with parameter \texttt{num\_restarts} $= 0, 50$, and $250$, respectively.
	\item \texttt{QAOA-1/2/3}: Qiskit QAOA with one, two, or three layers using Qiskit's \texttt{AerSimulator}.
	\item \texttt{WS-QAOA-1/2/3}: Warm-Start QAOA with one, two, or three layers using Qiskit's \texttt{AerSimulator}.
	\item \texttt{Qrisp-1/2/3}: QAOA with one, two, or three layers, implemented using \texttt{Qrisp}'s internal \texttt{solve\_QUBO} method and simulated using Qiskit's \texttt{AerSimulator}.
\end{itemize}
All quantum experiments were implemented using the IBM software package Qiskit \cite{IBMQuantum, qiskit2024}, specifically \texttt{qiskit 2.2.3} \cite{qiskit2024}, together with \texttt{qiskit-optimization 0.7.0}, \texttt{qiskit-algorithms 0.4.0}, and \texttt{qiskit-aer 0.17.2} as well as \texttt{Qrisp} \cite{qrisp} via \texttt{qrisp 0.7.13}. For the methods \texttt{QAOA-1/2/3} and \texttt{WS-QAOA-1/2/3} we used the classical optimizer \texttt{COBYLA} \cite{Powell:COBYLA, SciPy, Zhang_2023} with \texttt{maxiter} $=50$ and \texttt{tol} $= 0.0001$. The \texttt{Qiskit PassManager} was configured with \texttt{optimization\_level} = 1. 
\newline
\newline
Because the objective of this computational study is to evaluate the algorithmic properties of the proposed QUBO formulations and the quantum optimization algorithms, we use the Qiskit \texttt{AerSimulator} in an ideal, noise-free configuration. This allows us to isolate the intrinsic performance of the quantum algorithms without confounding hardware noise, which currently limits the size and reliability of experiments on real quantum devices \cite{PreskillNISQ, Pellow_Jarman_2024_Noise, Quek2024_Noise}. Consequently, the results should be interpreted as proof-of-concept demonstrations rather than hardware-dependent performance benchmarks.
We focus primarily on the quality of the sampled solutions, specifically, their feasibility, their ratio of the obtained objective value to the global optimum, and the probability of sampling high-quality or optimal solutions.
\newline
\newline
All quantum algorithms were executed with \texttt{shots} $= 8000$, meaning that each quantum circuit was run and measured 8000 times. For \texttt{QAOA-1/2/3}, we used the \texttt{QAOA} method provided by \texttt{qiskit-optimization}. For \texttt{WS-QAOA-1/2/3}, we also used the \texttt{QAOA} method provided by \texttt{qiskit-optimization}, but manually constructed the initial state and mixer quantum circuit following \cite{Egger2021warmstartingquantum} and the procedure in Section \ref{sec:Preliminaries_WS_QAOA}. The parameter $\varepsilon$, controlling the radius of the $\ell_{\infty}$-norm ball around $(\frac{1}{2},...,\frac{1}{2})$, was set to $\varepsilon = 0.1$, corresponding to a radius of $0.4$. 
\newline
For \texttt{Qrisp-1/2/3}, we used the internal \texttt{solve\_QUBO} method from \texttt{Qrisp} using \texttt{qrisp 0.7.13} \cite{qrisp}. As backend, we selected \texttt{QiskitBackend} with the Qiskit \texttt{AerSimulator}, and we set \texttt{max\_iter} $= 50$.
\newline
\newline
The \texttt{AerSimulator} backends for \texttt{QAOA-1/2/3} and \texttt{WS-QAOA-1/2/3} were instantiated using the \texttt{SamplerV2} from \texttt{qiskit-aer} interface, with \texttt{seed} $= 123$, ensuring reproducibility and consistent sampling across experiments.
For both methods \texttt{QAOA-1/2/3} and \texttt{WS-QAOA-1/2/3}, the initial point for the variational parameters is $\beta_i = \frac{\pi}{4}$ and $\gamma_i = \frac{\pi}{8}$ for $i = 1,...,p$, where $p$ denotes the number of layers.
\newline
\newline
Both classical heuristics, Simulated Annealing and Tabu Search, were applied directly to the QUBO formulations using \texttt{dimod 0.12.21}, \texttt{dwave-system 1.34.0}, and the \texttt{dwave-ocean-sdk 9.1.0} \cite{DWaveSampler}. 
\newline
\newline
In the course of the numerical investigation of the quantum algorithm WS-QAOA, we introduce several warm-start strategies in Section \ref{sec:Exp_WarmStarting}. There, we consider optimization problems of the following types. First, we address a nonlinear optimization problem with simple box constraints; that is, we minimize a function that is at least twice continuously differentiable over a set defined solely by lower and upper bounds on the decision variables. Second, we solve semidefinite programs (SDPs). For the aforementioned type of nonlinear optimization problem, we use the \texttt{L-BFGS-B} algorithm \cite{LBFGSB1, LBFGSB2} from \texttt{SciPy} \cite{SciPy} via \texttt{scipy 1.15.3}. The SDPs are modeled using \texttt{Picos 2.6.2} \cite{PICOS}, and we use the commercial solver \texttt{MOSEK 11.0.30} \cite{mosek} to compute their solutions. For both solvers, we use the respective default settings.
\newline
\newline
Each QUBO instance was provided in the canonical form $x^TQx + c$, where $Q$ denotes the upper-triangular coefficient matrix and $c$ the scalar offset. Linear coefficients $q^Tx$ in \eqref{QUBO} were incorporated in the diagonal entries of $Q$, using $q_ix_i = Q_{ii}x_i^2 = Q_{ii}x_i$, since $x_i^2 = x_i$ for binary $x_i\in \{0,1\}$. The penalty parameters were selected as described in the Corollaries \ref{korollar:p_median_penalty}, and \ref{korollar:fcflp} in Section \ref{sec:QUBO_Formulations}.

\subsection{Test instances}
\label{sec:Exp_TestInstances}
Since we consider only very small test instances with $I = \{1,...,n\},$ $n = 3$ and $n = 4$ for the $p$-Median Problem, and $n = 3$ for the FCFLP, we directly report the complete instance data. All instances are constructed according to the prinicple that assignments costs $c_{ij} \geq 0$ for $i \neq j$ may take values in a large range, whereas self-service costs $c_{ii} \geq 0$ are substantially smaller or even zero, reflecting the reduced cost when a facility serves a client at the same location.
\newline 
The full numerical data for all test instances are provided in the Tables \ref{tab:data_p_median_3}, \ref{tab:data_p_median_4}, and \ref{tab:data_fcflp_3}. In these tables, vectors and matrices are represented using bracket notation. A vector $v \in \R^n$ is written as $[v_1,...,v_n]$. A matrix $A \in \R^{m \times n}$ is represented as a nested list of its rows, $[[a_{11},...,a_{1n}],[a_{21},...,a_{2n}],...,[a_{m1},...,a_{mn}]]$, where the $i$-th inner list corresponds to the $i$-th row of the matrix.
\begin{table}[ht]
	\centering
	\caption{$p$-Median Problem: data for $n = 3$}
	\label{tab:data_p_median_3}
	\begin{tabular}{@{}lll@{}}
		\toprule
		Instance & Demand $d\quad$  & Assignment cost $c$ \\
		\midrule
		1 & $[9,8,4]$ &  $[[1,9,1],[8,0,2],[8,7,0]]$ \\
		2 & $[9,7,4]$ & $[[1,5,3],[2,0,6],[9,4,0]]$\\
		3 & $[3,5,3]$ & $[[1,2,4],[6,0,3],[4,3,0]]$ \\
		4 & $[7,7,8]$ & $[[1,6,3],[3,1,9],[5,2,0]]$  \\
		5 & $[7,3,5]$ & $[[0,5,9],[4,1,9],[7,3,1]]$ \\
		6 & $[4,4,9]$ & $[[1,4,2],[3,1,6],[6,1,1]]$ \\
		7 & $[8,6,6]$ & $[[1,5,4],[5,1,6],[8,5,0]]$  \\
		8 & $[9,3,5]$ & $[[1,7,5],[4,1,7],[9,4,1]]$  \\
		9 & $[6,5,7]$ & $[[1,6,5],[3,1,2],[7,2,1]]$  \\
		10 & $[8,5,7]$ & $[[0,1,9],[2,0,1],[5,5,0]]$ \\
		\bottomrule
	\end{tabular}
\end{table}
\begin{table}[ht]
	\centering
	\caption{$p$-Median Problem: data for $n = 4$}
	\label{tab:data_p_median_4}
	\begin{tabular}{@{}lll@{}}
		\toprule
		Instance & Demand $d$  & Assignment cost $c$ \\
		\midrule
		1 & $[4,4,13,11]$ &  $[[2,11,13,6], [14,0,15,11], [5,14,1,6], [5,12,15,2]]$ \\
		2 & $[6,15,8,12]$ &  $[[0,15,3,2], [2,1,5,7], [3,13,1,6], [5,3,15,2]]$\\
		3 & $[7,11,12,6]$ &  $[[1,9,9,6], [3,1,12,12], [10,11,1,15], [3,3,6,2]]$\\
		4 & $[7,9,4,4]$ &  $[[0,5,3,11], [12,1,14,11], [16,5,1,3], [13,4,12,1]]$  \\
		5 & $[9,11,15,8]$ &  $[[1,11,6,11], [11,1,15,10], [10,2,2,10], [11,12,9,2]]$ \\
		6 & $[10,12,13,16]$ &  $[[0,9,12,16], [16,0,4,2], [6,12,0,10], [10,2,5,0]]$ \\
		7 & $[9,7,9,15]$ &  $[[2,8,7,4], [10,2,8,12], [9,15,2,16], [9,13,16,0]]$  \\
		8 & $[16,10,16,12]$ &  $[[2,3,8,16], [10,1,14,3], [2,12,2,9], [4,10,12,2]]$  \\
		9 & $[8,10,7,15]$ &  $[[1,2,15,15], [11,1,7,11], [10,2,0,13], [16,7,13,2]]$  \\
		10 & $[6,8,11,5]$ &  $[[2,7,13,2], [15,0,3,8], [6,12,1,2], [2,7,13,2]]$ \\
		\bottomrule
	\end{tabular}
\end{table}
\begin{table}[ht]
	\centering
	\caption{FCFLP: data for $n = 3$}
	\label{tab:data_fcflp_3}
	\begin{tabular}{@{}lllll@{}}
		\toprule
		Instance & Demand $d$  & Assignment cost $c$ & Fixed cost $f$ & Capacity $q$ \\
		\midrule
		1 & $[3,8,10]$ &  $[[0,4,4],[9,0,2],[5,5,0]]$ & $[25,9,17]$ & $[12,10,10]$\\
		2 & $[7,10,6]$ &  $[[0,2,7],[1,0,6],[3,7,0]]$ & $[17,17,1]$ & $[10,13,12]$  \\
		3 & $[3,8,9]$ &  $[[1,3,5],[7,1,4],[3,8,1]]$ & $[16,9,2]$ & $[12,13,11]$ \\
		4 & $[8,10,3]$ &  $[[0,1,9],[9,0,5],[2,4,0]]$ & $[9,2,3]$ & $[15,10,16]$ \\
		5 & $[6,5,8]$ &  $[[1,4,4],[8,0,6],[8,8,1]]$ & $[10,17,26]$ & $[8,13,9]$ \\
		6 & $[8,3,3]$ &  $[[0,2,4],[2,0,8],[7,8,1]]$ & $[8,23,11]$ & $[13,9,10]$ \\
		7 & $[6,10,7]$ &  $[[1,5,4],[3,1,1],[8,7,1]]$ & $[16,2,23]$ & $[10,14,12]$ \\
		8 & $[4,8,10]$ &  $[[0,4,5],[2,1,2],[2,9,1]]$ & $[26,13,11]$ & $[13,14,10]$\\
		9 & $[4,10,3]$ &  $[[1,6,4],[8,1,4],[9,6,1]]$ & $[6,19,1]$ & $[14,10,10]$ \\
		10 &$[7,5,9]$ &  $[[1,5,1],[7,1,8],[8,8,0]]$ & $[2,6,15]$ & $[13,10,9]$\\
		\bottomrule
	\end{tabular}
\end{table}
\newline
According to Lemma \ref{lemma:p-median_qubits} on the number of binary variables in the QUBO formulation \eqref{p-Median-QUBO} of the $p$-Median Problem, 12 qubits are required for $n = 3$ and 20 qubits for $n = 4$. Similarly, by Lemma \ref{lemma:fcflp_qubits} concerning the number of binary variables in the QUBO formulations \eqref{FCFLP1-QUBO} and \eqref{FCFLP2-QUBO} of the FCFLP, 24 qubits are needed for all test instances listed in Table \ref{tab:data_fcflp_3}, except for test instance 4, which requires 25 qubits.
\subsection{Warm-Starting}
\label{sec:Exp_WarmStarting}
In this section, we describe the procedure for generating warm-starting points for WS-QAOA (\texttt{WS-QAOA-1/2/3}). The general idea follows the same approach outlined in Section \ref{sec:Preliminaries_WS_QAOA} and \cite{Egger2021warmstartingquantum}: we first compute a \textit{good} point, and then project this point onto the $\ell_{\infty}$-norm ball centered at $(\frac{1}{2},...,\frac{1}{2})$ with an appropriate radius.
\newline
In the following, we present four strategies for warm-starting the WS-QAOA algorithm. The first two constitute classical warm-start strategies introduced by \cite{Egger2021warmstartingquantum}. The remaining two strategies, based on the LP relaxation of the FCFLP, are novel to the best of our knowledge. Each of these strategies allows us to compute a \textit{good} point, which is then projected onto the $\ell_{\infty}$-norm ball, as specified in the numerical experiment setting above.
\subsubsection{R: Continuous relaxation of QUBO}
\label{sec:WarmStarting_R}
The first warm-start strategy consists of considering the QUBO $\min_{x \in \{0,1\}^n} x^TQx + c$ and its continuous relaxation $\min_{x \in [0,1]^n} x^TQx + c$ \cite{Egger2021warmstartingquantum}. In general, this relaxation yields a nonconvex continuous optimization problem, implying that only a local minimum can be computed efficiently. To obtain a local minimum, we apply the \texttt{L-BFGS-B} method as described above. Here, we use as starting point for the \texttt{L-BFGS-B} algorithm the point $(\frac{1}{2},...,\frac{1}{2}) \in \R^n$. In the following, we append the suffix \texttt{-R} to \texttt{WS-QAOA-1/2/3} to denote this warm-start variant.
\subsubsection{S: Semidefinite programming relaxation of QUBO as quadratically constrained quadratic program}
\label{sec:WarmStarting_S}
For the second warm-start strategy, we first reformulate the QUBO $\min_{x \in \{0,1\}^n} x^TQx + c$ , where $Q$ is a symmetric real-valued matrix, as a \textit{quadratically constrained quadratic program (QCQP)}. The QUBO is equivalent to the following QCQP:
\begin{subequations} \label{QCQP}
	\begin{align}
		\min_{x \in \R^n} \quad & x^TQx + c \\
		\mbox{s.t.} \quad & x_i^2 - x_i = 0, &\forall i = 1,...,n.
	\end{align}
\end{subequations}
The corresponding standard semidefinite programming (SDP) relaxation is given by
\begin{subequations}\label{SDP-relax}
	\begin{align}
		\inf_{Y \in S^{n+1}} \quad & \big\langle \begin{pmatrix}
			c & 0 \\
			0 & Q
		\end{pmatrix}, Y \big\rangle_F \\
		\mbox{s.t.} \quad & Y_{11} = 1,  \\
		& \big\langle \begin{pmatrix}
			0 & -\frac{1}{2}e_i^T \\
			-\frac{1}{2}e_i & \diag(e_i)
		\end{pmatrix}, Y \big\rangle_F = 0, \quad \forall i = 1,...,n, \\
		& Y \in \mathcal{S}^{n+1}.
	\end{align}
\end{subequations}
Here, $S^{n+1}$ denotes the set of real-valued symmetric matrices in $\R^{(n+1) \times (n+1)}$, $\mathcal{S}^{n+1} \subset S^{n+1}$ denotes the \textit{postive semidefinite cone (PSD cone)}, $e_i \in \R^n$ is the $i$-th canonical basis vector, $\diag(v) \in \R^{n \times n}$ is the diagonal matrix with entries given by the vector $v \in \R^n$, and $\langle \cdot, \cdot \rangle_F$ denotes the Frobenius inner product. We refer to \cite{QCQP1} and \cite{QCQPShor} for further details on the SDP relaxation of a QCQP.
\newline
We solve the semidefinite program by using \texttt{Picos} and \texttt{MOSEK} as described above. From the optimal solution $X^* \in S^{n+1}$, we obtain the warm-start point by extracting the vector $x^* = (X_{1,2}^*,...,X_{1,n+1}^*)$. In the following, we append the suffix \texttt{-S} to \texttt{WS-QAOA-1/2/3} to denote this warm-start variant.
\subsubsection{L: Linear programming relaxation of integer program}
\label{sec:WarmStarting_L}
In this strategy, we compute a warm-start for the QUBO based on the integer program from which the QUBO is derived. 
\newline
\newline
\textbf{$p$-Median Problem: }For the $p$-Median Problem, we solve the linear programming relaxation via \texttt{Gurobi}.
\newline
\newline
\textbf{Fixed-Charge Facility Location Problem: }For the FCFLP, we first solve the LP relaxation of the aggregated or disaggregated formulation via \texttt{Gurobi}, yielding the solution vectors $\bar{x}$ and $\bar{y}$ with components in $[0,1]$. To incorporate account for the slack variables in the warm-start, we compute for each $i \in I$ the logarithmic binary encoding of the residual capacity, given by
\begin{align*}
	\sum_{k = 1}^{\lceil \log_2(q_i + 1) \rceil}2^{k-1}\bar{z}_{ik} = \lfloor q_i - \sum_{j \in J}d_j\bar{x}_{ij}\rfloor.
\end{align*}
The warm-start point is then formed by concatenating $(\bar{x}, \bar{y}, \bar{z})$. 
\newline
Naturally, this warm-start strategy is applicable only when the QUBO arises from an underlying integer program; however, this is precisely the setting in which most current research is conducted. Therefore, this LP-based warm-start strategy is transferable to general integer programs. We solve the LP relaxation using \texttt{Gurobi} as described above. In the following, we append the suffix \texttt{-L} to \texttt{WS-QAOA-1/2/3} to denote this warm-start variant.
\subsubsection{C: Combination of continuous relaxation and LP relaxation}
\label{sec:WarmStarting_C}
This strategy is a combined approach. We first compute a warm-start point using the LP-based strategy. That is, we solve the LP relaxation, then, for the FCFLP, compute the logarithmic binary representation of the residual capacity to construct a reasonable warm-start point including the slack variables, and use this point as the initial point for \texttt{L-BFGS-B}. The point at which \texttt{L-BFGS-B} terminates is then used as warm-start point. In the following, we append the suffix \texttt{-C} to \texttt{WS-QAOA-1/2/3} to denote this warm-start variant.

\subsection{Computational results}
\label{sec:Exp_Results}
To evaluate both QUBO formulation quality and algorithmic performance across classical heuristics and quantum approaches, we measure solution quality via feasibility, optimal solution probability, and approximation ratio.
\newline
\newline
Both quantum algorithms QAOA and WS-QAOA return a finite set of sampled solutions. In the spirit of a proof-of-concept computational study, we select the sample with the lowest objective value as the representative solution of the method, directly reflecting the intrinsic sampling behavior of the quantum algorithm with respect to the QUBO energy landscape. As our focus is on evaluating the algorithmic performance of the quantum algorithms and the underlying QUBO formulations, we do not apply feasibility-repair procedures or post-processing heuristics. Feasibility is therefore evaluated explicitly, and infeasible outcomes provide meaningful information about the sampling characteristics.
\newline
\newline
For each QUBO formulation and test instance set, we report one table and two types of plots that summarize the results across the ten test instances:
\begin{enumerate}
	\item Feasibility table: showing, for each method, the number of instances (out of ten) for which the obtained solution is feasible.
	\item Objective-value ratio plot: a line plot illustrating the relative objective quality across all test instances, restricted to feasible points. For each \texttt{METHOD} and instance $i = 1,...,10$, we compute
	\begin{align*}
		r_{\texttt{METHOD}}^i \coloneqq \dfrac{z_{\texttt{METHOD}}^i}{z_{\texttt{Gurobi}}^i},
	\end{align*}
	where $z_{\texttt{METHOD}}^i$ denotes the value obtained by evaluating the solution returned by \texttt{METHOD} in the original objective function and $z_{\texttt{Gurobi}}^i$ the globally optimal objective value computed via \texttt{Gurobi}. A value of $r_{\texttt{METHOD}}^i = 1$ indicates that the global optimum was obtained by the respective method on instance $i = 1,...,10$. Infeasible points are omitted and appear as white gaps in the corresponding lines.
	\item Relative frequency plot: a grouped bar plot showing, for each instance and quantum method, the relative frequency of sampling the solution that yields the lowest objective value of the QUBO. In contrast to the previous plot, infeasible samples are also included here to illustrate the underlying sampling behavior of the algorithms.
\end{enumerate}
Together, these plots provide a comprehensive qualitative assessment of each method's robustness, solution quality, and sampling characteristics across problem instances and QUBO formulations.
\subsubsection{$p$-Median Problem}
\label{sec:Exp_Results_p_median}
We consider the $p$-Median Problem \eqref{p-Median-problem} with $\lvert I \rvert = \lvert J \rvert = \{1,...,n\}$, where $n = 3, p = 1, 2$, and $n = 4, p = 2$, using the test instance data in Tables \ref{tab:data_p_median_3} and \ref{tab:data_p_median_4}.
\newline
A complete summary of the results is provided in Tables \ref{tab:p_median_3_1}, \ref{tab:p_median_3_2}, and \ref{tab:p_median_4_2} (feasibility tables); Figures \ref{fig:p_median_3_1}, \ref{fig:p_median_3_2}, and \ref{fig:p_median_4_2} (objective value ratio plots and relative frequency plots).
\begin{table}[h!]
	\centering
	\caption{Frequencies of feasible solutions for $n = 3$ and $p = 1$}
	\label{tab:p_median_3_1}
	\begin{tabular}{@{}l r l r l r l r@{}}
		\toprule
		\texttt{METHOD} & Count & \texttt{METHOD} & Count  & \texttt{METHOD} & Count &\texttt{METHOD} & Count   \\
		\midrule
		\texttt{SA-20} & 10  & \texttt{Tabu-0} & 10  &  \texttt{QAOA-1} & 10  &  \texttt{Qrisp-1} & 10  \\
		\texttt{SA-100} & 10 & 	\texttt{Tabu-50} & 10 & \texttt{QAOA-2} & 10 &   \texttt{Qrisp-2} & 10 \\
		\texttt{SA-500} & 10 &  \texttt{Tabu-250} & 10 &  \texttt{QAOA-3} & 10 & 	 \texttt{Qrisp-3} & 10   \\
		\texttt{WS-QAOA-1-R} & 9 & \texttt{WS-QAOA-1-S} & 10 & \texttt{WS-QAOA-1-L} & 10 & \texttt{WS-QAOA-1-C} & 10 
		\\
		\texttt{WS-QAOA-2-R} & 10 & \texttt{WS-QAOA-2-S} & 10 & \texttt{WS-QAOA-2-L} & 10 & \texttt{WS-QAOA-2-C} & 10 
		\\
		\texttt{WS-QAOA-3-R} & 10 & \texttt{WS-QAOA-3-S} & 10 & \texttt{WS-QAOA-3-L} & 10 & \texttt{WS-QAOA-3-C} & 10 
		\\
		\bottomrule
	\end{tabular}
\end{table}
\begin{figure}[h!]
	\centering
	\begin{subfigure}{0.48\linewidth}
		\centering
		\resizebox{\linewidth}{!}{\includegraphics{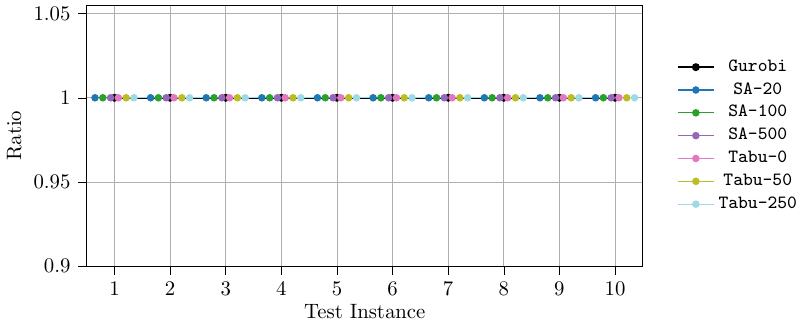}}
		\caption{Objective value ratios for classical heuristics}
		\label{fig:p_median_3_1_obj_val_h}
	\end{subfigure}
	\hfill
	\begin{subfigure}{0.48\linewidth}
		\centering
		\resizebox{\linewidth}{!}{\includegraphics{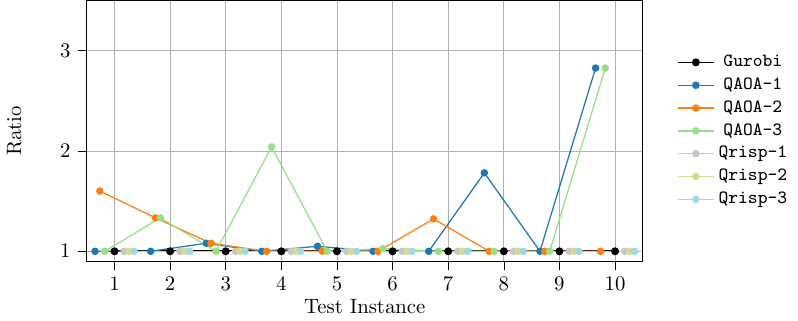}}
		\caption{Objective value ratios for quantum algorithms (QAOA)}
		\label{fig:p_median_3_1_obj_val_qaoa}
	\end{subfigure}
	
	\vspace{0.25cm}
	
	\begin{subfigure}{0.48\linewidth}
		\centering
		\resizebox{\linewidth}{!}{\includegraphics{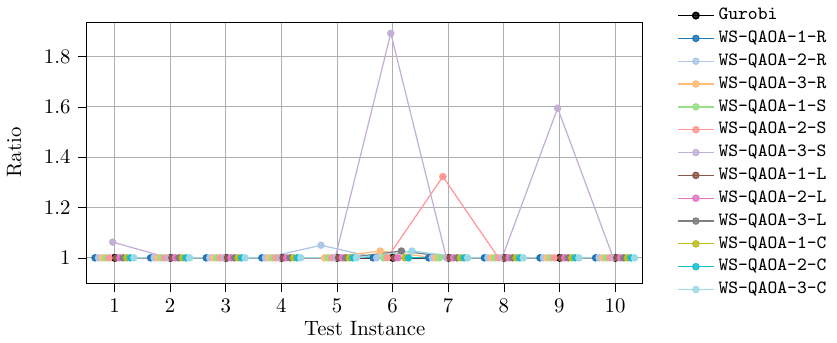}}
		\caption{Objective value ratios for quantum algorithms (WS-QAOA)}
		\label{fig:p_median_3_1_obj_val_ws_qaoa}
	\end{subfigure}
	\hfill
	\begin{subfigure}{0.48\linewidth}
		\centering
		\resizebox{\linewidth}{!}{\includegraphics{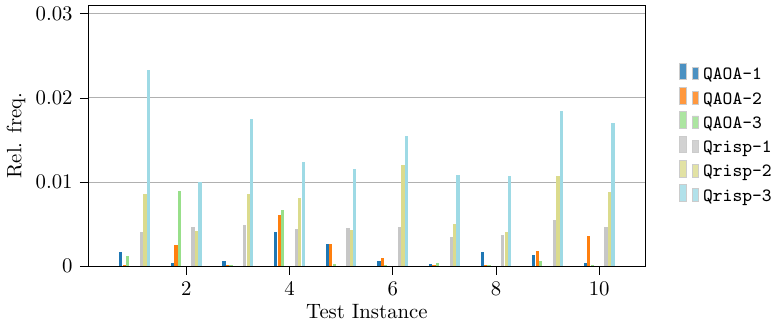}}
		\caption{Relative frequency plot (QAOA)}
		\label{fig:p_median_3_1_prob_qaoa}
	\end{subfigure}
	\vspace{0.25cm}
	\begin{subfigure}{0.48\linewidth}
		\centering
		\resizebox{\linewidth}{!}{\includegraphics{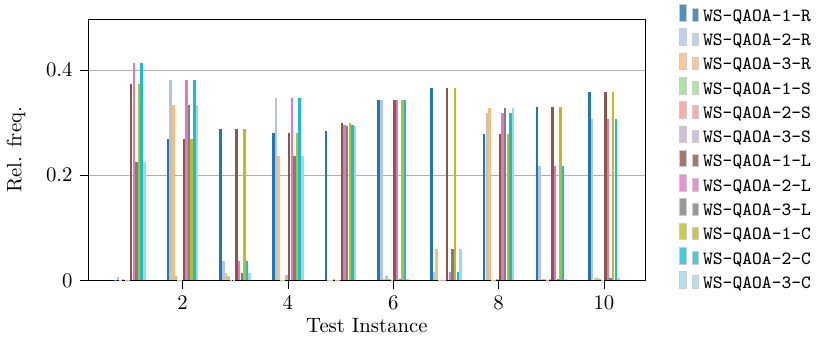}}
		\caption{Relative frequency plot (WS-QAOA)}
		\label{fig:p_median_3_1_prob_ws_qaoa}
	\end{subfigure}
	\caption{$p$-Median Problem ($n = 3$ and $p = 1$)}
	\label{fig:p_median_3_1}
\end{figure}
\begin{table}[h!]
	\centering
	\caption{Frequencies of feasible solutions for $n = 3$ and $p = 2$}
	\label{tab:p_median_3_2}
	\begin{tabular}{@{}l r l r l r l r@{}}
		\toprule
		\texttt{METHOD} & Count & \texttt{METHOD} & Count & \texttt{METHOD} & Count & \texttt{METHOD} & Count \\
		\midrule
		\texttt{SA-20} & 10 &  \texttt{Tabu-0} & 10 &    \texttt{QAOA-1} & 10 &  \texttt{Qrisp-1} & 10 \\
		\texttt{SA-100} & 10 &  \texttt{Tabu-50} & 10  & \texttt{QAOA-2} & 10  & \texttt{Qrisp-2} & 10  \\
		\texttt{SA-500} & 10 & 	\texttt{Tabu-250} & 10 &  \texttt{QAOA-3} & 10 &  \texttt{Qrisp-3} & 10  \\
		\texttt{WS-QAOA-1-R} & 10 & \texttt{WS-QAOA-1-S} & 10 & \texttt{WS-QAOA-1-L} & 10 & \texttt{WS-QAOA-1-C} & 10 
		\\
		\texttt{WS-QAOA-2-R} & 10 & \texttt{WS-QAOA-2-S} & 10 & \texttt{WS-QAOA-2-L} & 10 & \texttt{WS-QAOA-2-C} & 10 
		\\
		\texttt{WS-QAOA-3-R} & 10 & \texttt{WS-QAOA-3-S} & 10 & \texttt{WS-QAOA-3-L} & 10 & \texttt{WS-QAOA-3-C} & 10 
		\\
		\bottomrule
	\end{tabular}
\end{table}
\begin{figure}[h!]
	\centering
	
	\begin{subfigure}{0.48\linewidth}
		\centering
		\resizebox{\linewidth}{!}{\includegraphics{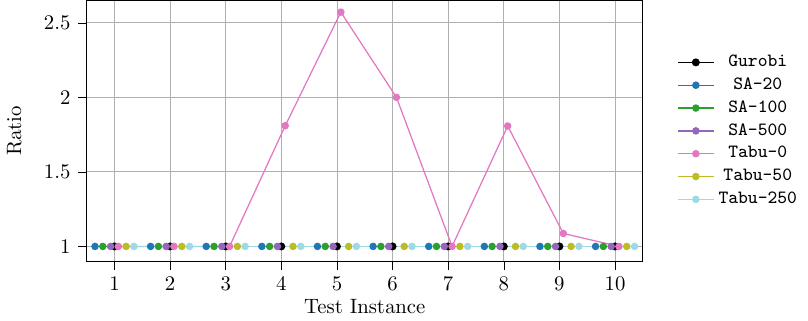}}
		\caption{Objective value ratios for classical heuristics}
		\label{fig:p_median_3_2_obj_val_h}
	\end{subfigure}
	\hfill
	\begin{subfigure}{0.48\linewidth}
		\centering
		\resizebox{\linewidth}{!}{\includegraphics{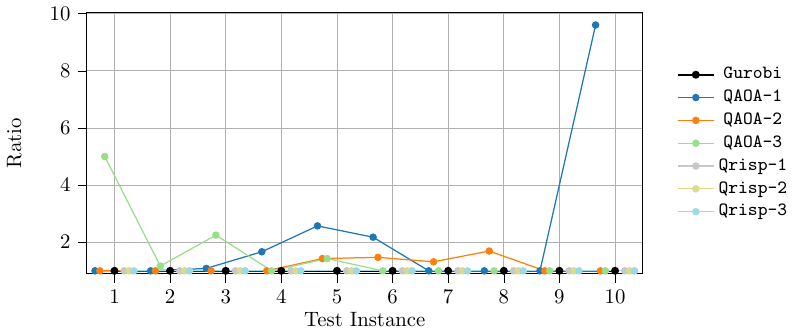}}
		\caption{Objective value ratios for quantum algorithms (QAOA)}
		\label{fig:p_median_3_2_obj_val_qaoa}
	\end{subfigure}
	
	\vspace{0.25cm}
	\begin{subfigure}{0.48\linewidth}
		\centering
		\resizebox{\linewidth}{!}{\includegraphics{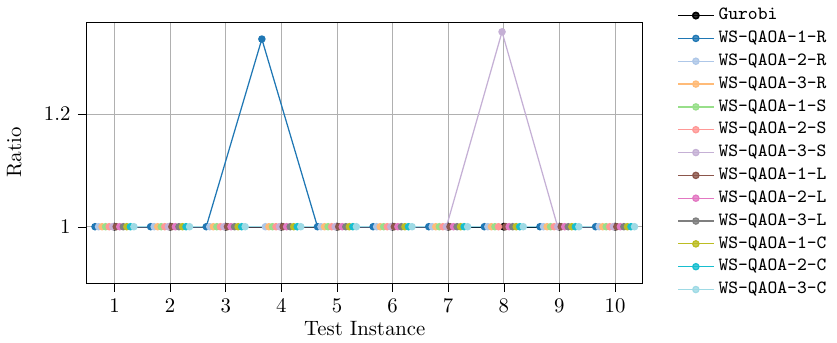}}
		\caption{Objective value ratios for quantum algorithms (WS-QAOA)}
		\label{fig:p_median_3_2_obj_val_ws_qaoa}
	\end{subfigure}
	\hfill	
	\begin{subfigure}{0.48\linewidth}
		\centering
		\resizebox{\linewidth}{!}{\includegraphics{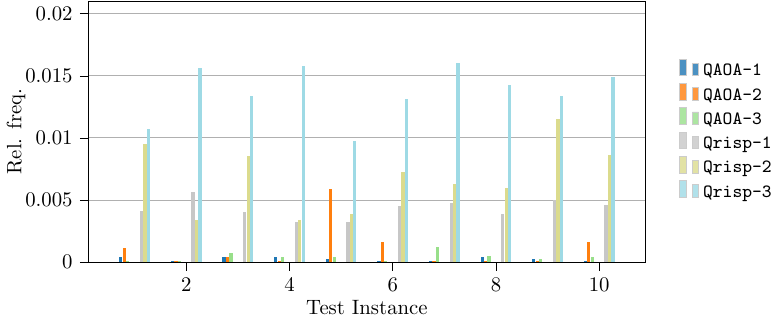}}
		\caption{Relative frequency plot (QAOA)}
		\label{fig:p_median_3_2_prob_ws_qaoa}
	\end{subfigure}
	\vspace{0.25cm}
	
	\begin{subfigure}{0.48\linewidth}
		\centering
		\resizebox{\linewidth}{!}{\includegraphics{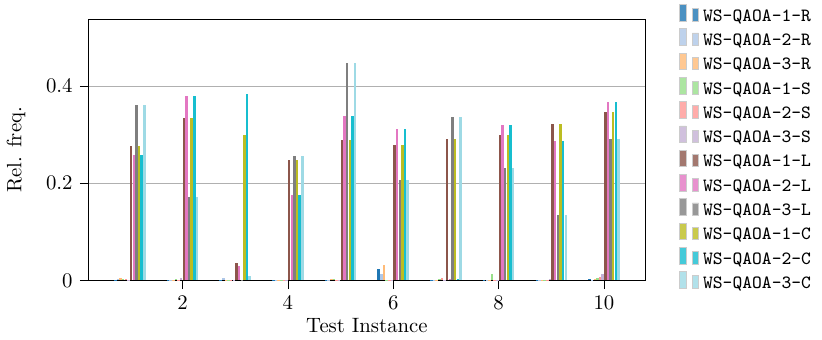}}
		\caption{Relative frequency plot (WS-QAOA)}
		\label{fig:p_median_3_2_prob_qaoa}
	\end{subfigure}
	\caption{$p$-Median Problem ($n = 3$ and $p = 2$)}
	\label{fig:p_median_3_2}
\end{figure}

\begin{table}[h!]
	\centering
	\caption{Frequencies of feasible solutions for $n = 4$ and $p = 2$}
	\label{tab:p_median_4_2}
	\begin{tabular}{@{}l r l r l r l r@{}}
		\toprule
		\texttt{METHOD} & Count &  \texttt{METHOD} & Count & \texttt{METHOD} & Count &\texttt{METHOD} & Count   \\
		\midrule
		\texttt{SA-20} & 10 &  \texttt{Tabu-0} & 10 &  \texttt{QAOA-1} & 6 &   \texttt{Qrisp-1} & 10 \\
		\texttt{SA-100} & 10 &  \texttt{Tabu-50} & 10 & \texttt{QAOA-2} & 4 &  \texttt{Qrisp-2} & 10  \\
		\texttt{SA-500} & 10 &  \texttt{Tabu-250} & 10 &  \texttt{QAOA-3} & 5 &  \texttt{Qrisp-3} & 10  \\
		\texttt{WS-QAOA-1-R} & 2 & \texttt{WS-QAOA-1-S} & 9 & \texttt{WS-QAOA-1-L} & 10 & \texttt{WS-QAOA-1-C} & 10 
		\\
		\texttt{WS-QAOA-2-R} & 3 & \texttt{WS-QAOA-2-S} & 10 & \texttt{WS-QAOA-2-L} & 10 & \texttt{WS-QAOA-2-C} & 10 
		\\
		\texttt{WS-QAOA-3-R} & 7 & \texttt{WS-QAOA-3-S} & 10 & \texttt{WS-QAOA-3-L} & 10 & \texttt{WS-QAOA-3-C} & 10 
		\\
		\bottomrule
	\end{tabular}
\end{table}

\begin{figure}[h!]
	\centering
	
	\begin{subfigure}{0.48\linewidth}
		\centering
		\resizebox{\linewidth}{!}{\includegraphics{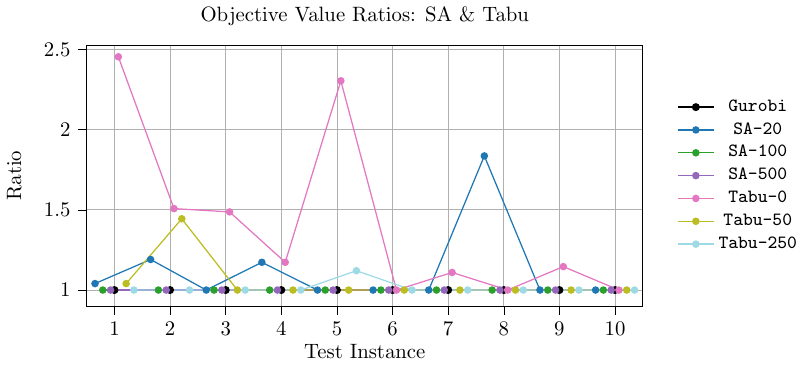}}
		\caption{Objective value ratios for classical heuristics}
		\label{fig:p_median_4_2_obj_val_h}
	\end{subfigure}
	\hfill
	\begin{subfigure}{0.48\linewidth}
		\centering
		\resizebox{\linewidth}{!}{\includegraphics{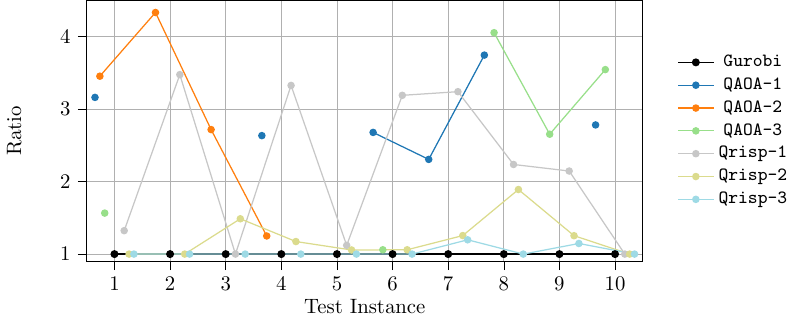}}
		\caption{Objective value ratios for quantum algorithms (QAOA)}
		\label{fig:p_median_4_2_obj_val_qaoa}
	\end{subfigure}
	
	\vspace{0.25cm}
	
	\begin{subfigure}{0.48\linewidth}
		\centering
		\resizebox{\linewidth}{!}{\includegraphics{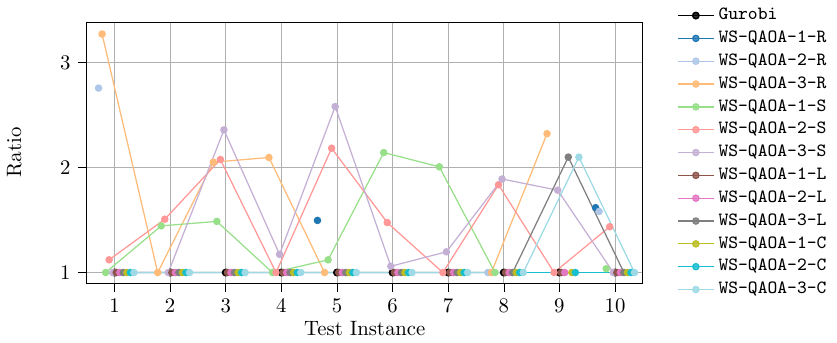}}
		\caption{Objective value ratios for quantum algorithms (WS-QAOA)}
		\label{fig:p_median_4_2_obj_val_ws_qaoa}
	\end{subfigure}
	\hfill

	\begin{subfigure}{0.48\linewidth}
		\centering
		\resizebox{\linewidth}{!}{\includegraphics{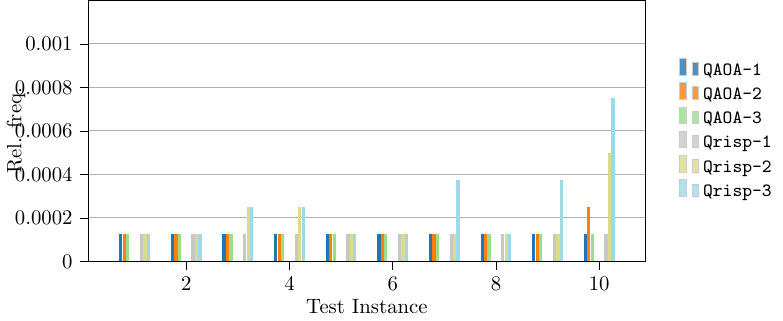}}
		\caption{Relative frequency plot (QAOA)}
		\label{fig:p_median_4_2_prob_qaoa}
	\end{subfigure}
	
	\vspace{0.25cm}

	\begin{subfigure}{0.48\linewidth}
		\centering
		\resizebox{\linewidth}{!}{\includegraphics{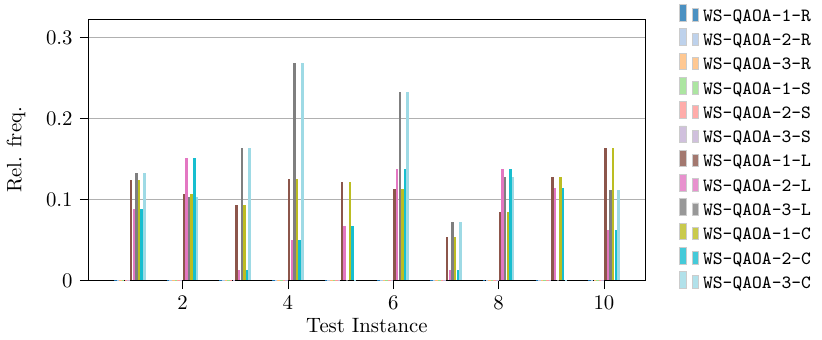}}
		\caption{Relative frequency plot (WS-QAOA)}
		\label{fig:p_median_4_2_prob_ws_qaoa}
	\end{subfigure}
	\caption{$p$-Median Problem ($n = 4$ and $p = 2$)}
	\label{fig:p_median_4_2}
\end{figure}
For $n = 3$, only \texttt{WS-QAOA-1-R} fails once to return a feasible point. Overall, the obtained solutions are of very high quality, whereas \texttt{QAOA-1/2/3} produce the worst solutions. We further observe a clear increase in the relative frequency for \texttt{Qrisp-1/2/3} with an increasing number of layers, exceeding the relative frequencies of \texttt{QAOA-1/2/3}, despite both approaches being based on standard QAOA. This difference likely arises from several implementation details, such as the initialization of the variational parameters and the stochastic behavior of the quantum simulator.
\newline
For WS-QAOA, the variants using continuous or SDP relaxations to compute warm-start points tend to produce slightly inferior solutions. In particular, the relative frequencies for \texttt{WS-QAOA-1/2/3-R} and \texttt{WS-QAOA-1/2/3-S} are noticeably lower than those of the LP-based variants.
\newline
For $n = 4$, the situation changes significantly. Classical heuristics begin to struggle in consistently identifying high-quality solutions. \texttt{SA-20/100/500} benefits from increasing the \texttt{num\_reads} parameter. Tabu Search shows a similar effect. The solutions obtained from WS-QAOA tend to yield better objective values than those obtained with standard QAOA methods. Notably, \texttt{Qrisp-3} again yields very high-quality solutions. Both \texttt{QAOA-1/2/3} and \texttt{WS-QAOA-1/2/3-R} now occasionally return infeasible points. For \texttt{WS-QAOA-1/2/3-R}, this behavior can be attributed directly to the poor quality of the computed warm-start point.
\newline
Once again, the LP-based WS-QAOA variants perform very well, returning the global optimum almost always.
\newline
We note that for all test instances of the $p$-Median Problem, the LP relaxation already produced integral solutions, solving the corresponding $p$-Median Problem to optimality. Consequently, the strong performance of WS-QAOA is not surprising, especially given that the parameter $\varepsilon = 0.1$, which specifies the $\ell_{\infty}$-norm ball used to construct the warm-start point, was chosen to be relatively small.
\newline
Nevertheless, although the underlying integer problems are simple, as the LP relaxation is integral, this property does not carry over to the QUBO formulation. Consequently, computing a warm-start point based on the integer programming formulation, when the QUBO is derived from it, is highly advantageous. We additionally note that, in this case, solving the QUBO is of little practical interest, as the problem has already been solved by the LP relaxation. These observations therefore primarily concern the algorithmic behavior of WS-QAOA and the associated warm-start strategies.
\newline
Subsequent experiments on the FCFLP illustrate the broader applicability of LP-based warm-start strategies proposed in Section \ref{sec:WarmStarting_L} (FCFLP) and \ref{sec:WarmStarting_C}. In particular, they deliver superior performance when the LP relaxations are not integral and even auxiliary variables are required to construct the associated QUBO formulation.


\subsubsection{Fixed-Charge Facility Location Problem}
\label{sec:Exp_Results_fcflp}
For the FCFLP, we consider both the aggregated formulation \eqref{FCFLP-1} and the disaggregated formulation \eqref{FCFLP-2}, with $\lvert I \rvert = \lvert J \rvert = \{1,...,n\}$ and $n = 3$, using the test instance data provided in Table \ref{tab:data_fcflp_3}.
\newline
A complete summary of the results is provided in Tables \ref{tab:fcflp1_3} and \ref{tab:fcflp2_3} (feasibility tables), and Figures \ref{fig:fcflp1_3} and \ref{fig:fcflp2_3} (objective value ratio plots and relative frequency plots).
\begin{table}[h!]
	\centering
	\caption{Aggregated formulation: frequencies of feasible solutions for $n =  3$}
	\label{tab:fcflp1_3}
	\begin{tabular}{@{}l r l r l r l r@{}}
		\toprule
		\texttt{METHOD} & Count &   \texttt{METHOD} & Count &  \texttt{METHOD} & Count &  \texttt{METHOD} & Count  \\
		\midrule
		\texttt{SA-20} & 9 & \texttt{Tabu-0} & 0 &  \texttt{QAOA-1} & 5 &  \texttt{Qrisp-1} & 2 \\
		\texttt{SA-100} & 10 &  \texttt{Tabu-50} & 4 & \texttt{QAOA-2} & 3  &  \texttt{Qrisp-2} & 3 \\
		\texttt{SA-500} & 10 &   \texttt{Tabu-250} & 4 & \texttt{QAOA-3} & 1 &   \texttt{Qrisp-3} & 4 \\
		\texttt{WS-QAOA-1-R} & 0 & \texttt{WS-QAOA-1-S} & 4 & \texttt{WS-QAOA-1-L} & 5 & \texttt{WS-QAOA-1-C} & 6 
		\\
		\texttt{WS-QAOA-2-R} & 0 & \texttt{WS-QAOA-2-S} & 4 & \texttt{WS-QAOA-2-L} & 5 & \texttt{WS-QAOA-2-C} & 5 
		\\
		\texttt{WS-QAOA-3-R} & 1 & \texttt{WS-QAOA-3-S} & 1 & \texttt{WS-QAOA-3-L} & 3 & \texttt{WS-QAOA-3-C} & 5 
		\\
		\bottomrule
	\end{tabular}
\end{table}
\begin{figure}[h!]
	\centering

	\begin{subfigure}{0.48\linewidth}
		\centering
		\resizebox{\linewidth}{!}{\includegraphics{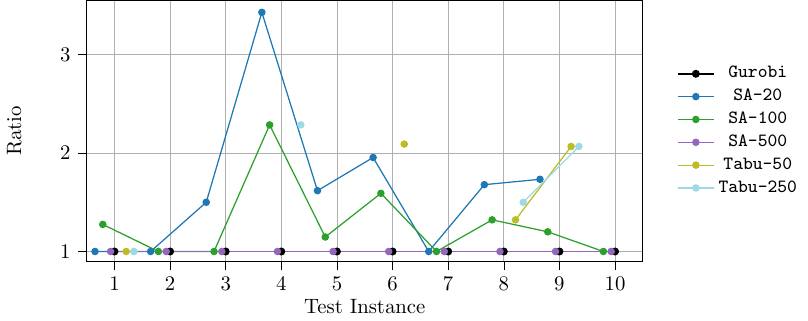}}
		\caption{Objective value ratios for classical heuristics}
		\label{fig:fcflp1_obj_val_h}
	\end{subfigure}
	\hfill
	\begin{subfigure}{0.48\linewidth}
		\centering
		\resizebox{\linewidth}{!}{\includegraphics{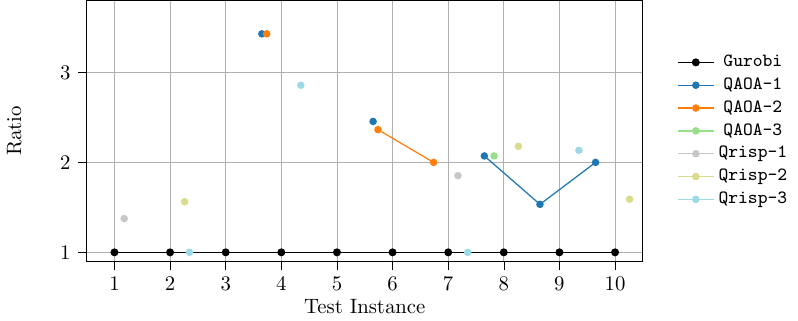}}
		\caption{Objective value ratios for quantum algorithms (QAOA)}
		\label{fig:fcflp1_3_obj_val_qaoa}
	\end{subfigure}
	
	\vspace{0.5cm}
	
	\begin{subfigure}{0.48\linewidth}
		\centering
		\resizebox{\linewidth}{!}{\includegraphics{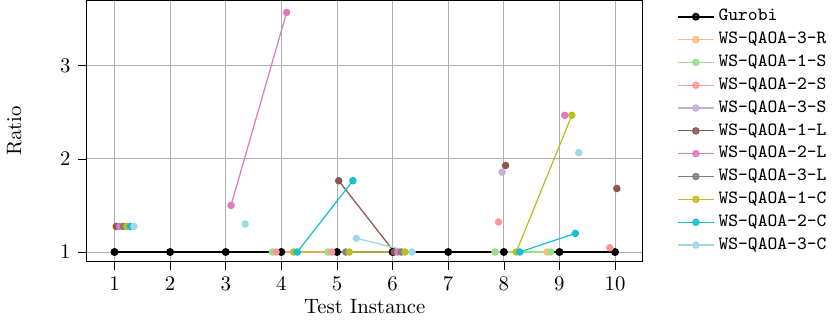}}
		\caption{Objective value ratios for classical heuristics (WS-QAOA)}
		\label{fig:fcflp1_obj_val_ws_qaoa}
	\end{subfigure}
	\hfill
	\begin{subfigure}{0.48\linewidth}
		\centering
		\resizebox{\linewidth}{!}{\includegraphics{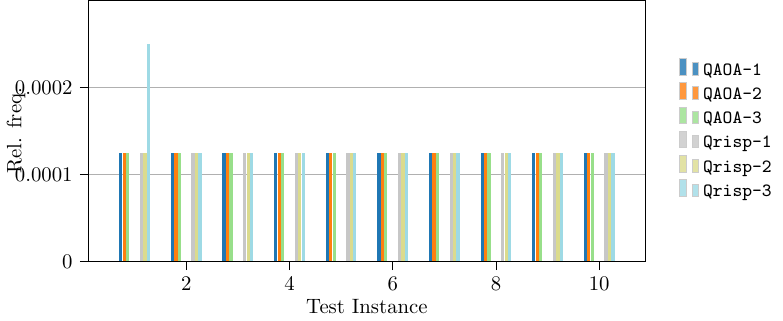}}
		\caption{Relative frequency plot (QAOA)}
		\label{fig:fcflp1_3_prob_qaoa}
	\end{subfigure}

	\begin{subfigure}{0.48\linewidth}
		\centering
		\resizebox{\linewidth}{!}{\includegraphics{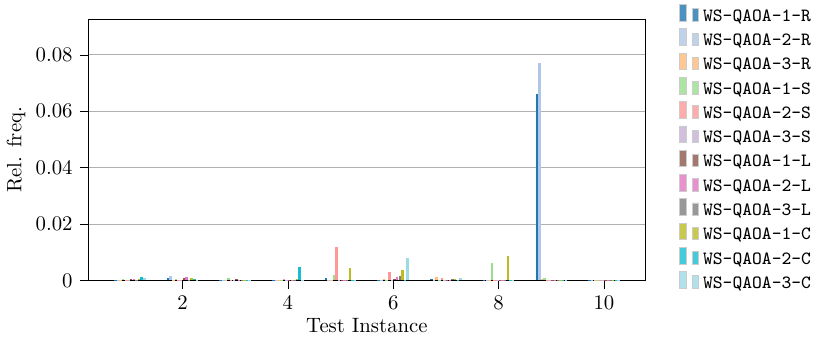}}
		\caption{Relative frequency plot (WS-QAOA)}
		\label{fig:fcflp1_3_prob_ws_qaoa}
	\end{subfigure}
	\caption{FCFLP (aggregated formulation and $n = 3$)}
	\label{fig:fcflp1_3}
\end{figure}

\begin{table}[h!]
	\centering
	\caption{Disaggregated formulation: frequencies of feasible solutions for $n =  3$}
	\label{tab:fcflp2_3}
	\begin{tabular}{@{}l r l r l r l r@{}}
		\toprule
		\texttt{METHOD} & Count & \texttt{METHOD} & Count & \texttt{METHOD} & Count & \texttt{METHOD} & Count \\
		\midrule
		\texttt{SA-20} & 9 &  \texttt{Tabu-0} & 0  & \texttt{QAOA-1} & 2  &  \texttt{Qrisp-1} & 1 \\
		\texttt{SA-100} & 10 &	\texttt{Tabu-50} & 1 & \texttt{QAOA-2} & 1 &  \texttt{Qrisp-2} & 3  \\
		\texttt{SA-500} & 10 &  \texttt{Tabu-250} & 3 &   \texttt{QAOA-3} & 3 &  \texttt{Qrisp-3} & 2   \\
		\texttt{WS-QAOA-1-R} & 0 & \texttt{WS-QAOA-1-S} & 3 & \texttt{WS-QAOA-1-L} & 5 & \texttt{WS-QAOA-1-C} & 5 
		\\
		\texttt{WS-QAOA-2-R} & 0 & \texttt{WS-QAOA-2-S} & 2 & \texttt{WS-QAOA-2-L} & 5 & \texttt{WS-QAOA-2-C} & 3 
		\\
		\texttt{WS-QAOA-3-R} & 2 & \texttt{WS-QAOA-3-S} & 2 & \texttt{WS-QAOA-3-L} & 4 & \texttt{WS-QAOA-3-C} & 5
		\\
		\bottomrule
	\end{tabular}
\end{table}
\begin{figure}[h!]
	\centering
	
	\begin{subfigure}{0.48\linewidth}
		\centering
		\resizebox{\linewidth}{!}{\includegraphics{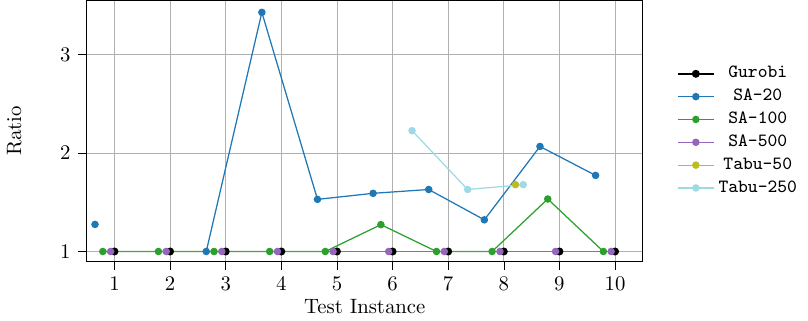}}
		\caption{Objective value ratios for classical heuristics}
		\label{fig:fcflp2_obj_val_h}
	\end{subfigure}
	\hfill
	\begin{subfigure}{0.48\linewidth}
		\centering
		\resizebox{\linewidth}{!}{\includegraphics{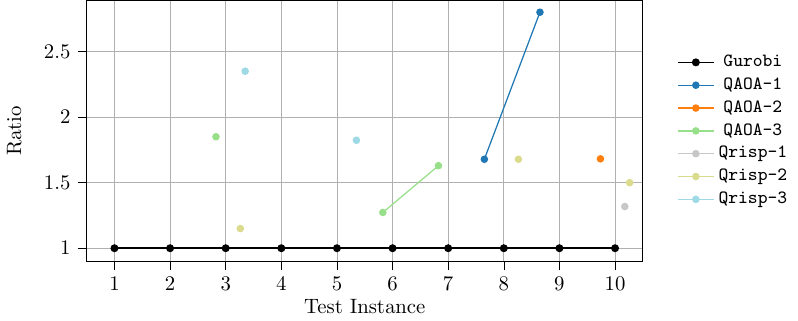}}
		\caption{Objective value ratios for quantum algorithms (QAOA)}
		\label{fig:fcflp2_3_obj_val_qaoa} 
	\end{subfigure}
	
	\vspace{0.25cm}
	
	\begin{subfigure}{0.48\linewidth}
		\centering
		\resizebox{\linewidth}{!}{\includegraphics{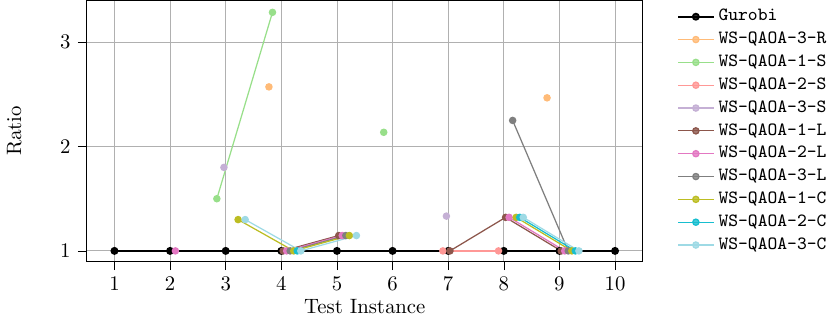}}
		\caption{Objective value ratios for quantum algorithms (WS-QAOA)}
		\label{fig:fcflp2_3_obj_val_ws_qaoa}
	\end{subfigure}
	\hfill
	\begin{subfigure}{0.48\linewidth}
		\centering
		\resizebox{\linewidth}{!}{\includegraphics{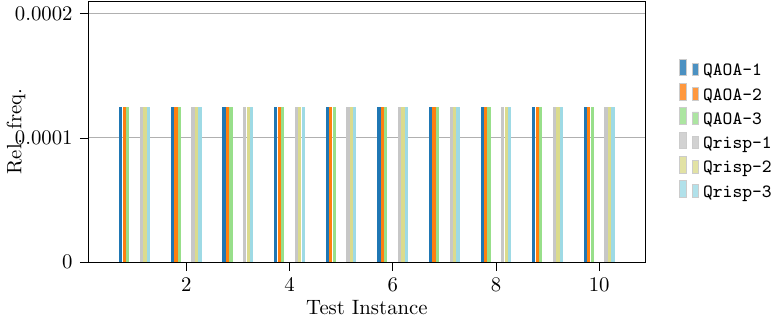}}
		\caption{Relative frequency plot (QAOA)}
		\label{fig:fcflp2_3_prob_qaoa}
	\end{subfigure}
	
	\vspace{0.25cm}
	
	\begin{subfigure}{0.48\linewidth}
		\centering
		\resizebox{\linewidth}{!}{\includegraphics{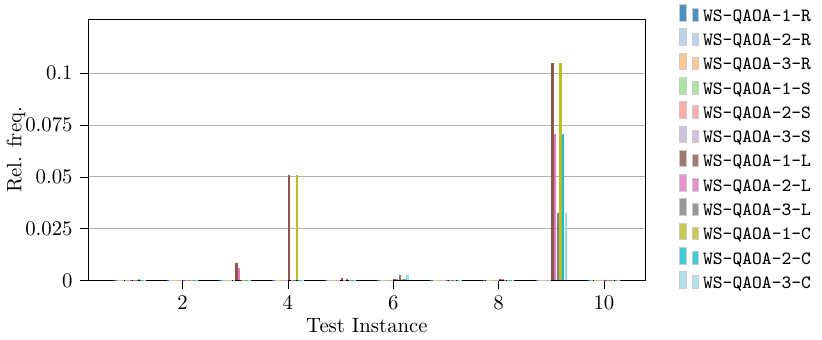}}
		\caption{Relative frequency plot (WS-QAOA)}
		\label{fig:fcflp2_3_prob_ws_qaoa}
	\end{subfigure}
	\caption{FCFLP (disaggregated formulation and $n = 3$)}
	\label{fig:fcflp2_3}
\end{figure}
The results differ markedly once again. Among the classical heuristics, \texttt{Tabu-0/50/250} exhibits comparatively weaker performance, whereas \texttt{SA-20/100/500} consistently identifies high-quality solutions, with objective values improving as the \texttt{num\_reads} parameter increases. Although Simulated Annealing is not a quantum algorithm, its strong performance demonstrates the general suitability of the QUBO formulations.
\newline
All quantum algorithms exhibit weaker performance overall. Among them, \texttt{WS-QAOA-1/2/3-R} performs worst, returning only a single feasible point for the aggregated formulation and only two feasible points for the disaggregated formulation. In contrast, the LP-based WS-QAOA variants \texttt{WS-QAOA-1/2/3-L} and \texttt{WS-QAOA-1/2/3-C} achieve the strongest results across both formulations, outperforming WS-QAOA initialized from both classical warm-start strategies, the continuous relaxation and semidefinite programming relaxation. We remark that \texttt{WS-QAOA-1/2/3-S} performs better than \texttt{WS-QAOA-1/2/3-R}. The LP relaxation of the disaggregated formulation is integral for test instances 4 and 9, explaining the peaks in relative frequencies observed in Figure \ref{fig:fcflp2_3_prob_ws_qaoa}.
\newline
Notably, both \texttt{QAOA-1/2/3} and \texttt{Qrisp-1/2/3} yield higher-quality solutions for the aggregated formulation compared to the disaggregated formulation.
\newline
Overall, the aggregated formulation produces slightly better results across most methods. For the aggregated formulation, \texttt{WS-QAOA-1/2/3-C} even outperforms \texttt{WS-QAOA-1/2/3-L}, indicating that the additional use of the local solver \texttt{L-BFGS-B} effectively improves the warm-start point derived from the LP relaxation and the binary encoding of the residual capacity. 
\newline
\newline
We conclude the numerical experiments by noting that the LP-based warm-start strategy introduced in Section \ref{sec:WarmStarting_L} (FCFLP) and the combined warm-start strategy introduced in Section \ref{sec:WarmStarting_C}, which integrates the LP-based approach with the continuous relaxation, constitute two novel and highly promising warm-start strategies for WS-QAOA. These approaches significantly outperform classical warm-start strategies (continuous relaxation with local solver and SDP relaxation) for the FCFLP and demonstrate significant potential for applying WS-QAOA to solve QUBOs derived from integer programs. We further emphasize that both warm-start strategies can be generalized and applied to general integer programs and their corresponding QUBO formulations.

\section{Conclusion}
We have introduced novel QUBO formulations for classical optimization problems arising in location science, network design, and logistics, which can serve as benchmark problems for the development of quantum optimization algorithms and the evaluation of quantum hardware. For general discrete optimization problems, we proposed a penalty parameter that ensures global optimizers of the QUBO correspond to global optimizers of the original integer program.
\newline
For the Fixed-Charge Facility Location Problem (FCFLP), we introduced two LP-based warm-start strategies for WS-QAOA that efficiently produce high-quality warm-start points. These warm-start approaches are broadly applicable to QUBOs derived from general integer programs. The numerical experiments have demonstrated that, on selected instances, gate-based quantum algorithms can outperform classical heuristics.
\newline
Overall, quantum optimization represents a promising complementary tool for discrete optimization, particularly as both quantum hardware and algorithms continue to advance.

\section*{Declarations}

\subsection*{Availability of data and materials}
All data generated or analysed during this study are included in this published article.
\subsection*{Competing interests}
The authors declare that they have no competing interests.
\subsection*{Funding}
The authors gratefully acknowledge funding from the BMFTR under the \textit{QuSol: Quantum Optimization Solver Kit} project.
\subsection*{Authors' contributions}
F.P.B.: Conceptualization, Data curation, Investigation, Methodology, Project administration, Software, Visualization, Writing - original draft, Writing - review \& editing.
S.N.: Conceptualization, Funding acquisition, Methodology, Project administration, Supervision, Writing - review \& editing. All authors read and approved the final manuscript.
\subsection*{Acknowledgements}
Not applicable.

\bibliography{sn-bibliography}


\begin{thebibliography}{62}
\ifx \bisbn   \undefined \def \bisbn  #1{ISBN #1}\fi
\ifx \binits  \undefined \def \binits#1{#1}\fi
\ifx \bauthor  \undefined \def \bauthor#1{#1}\fi
\ifx \batitle  \undefined \def \batitle#1{#1}\fi
\ifx \bjtitle  \undefined \def \bjtitle#1{#1}\fi
\ifx \bvolume  \undefined \def \bvolume#1{\textbf{#1}}\fi
\ifx \byear  \undefined \def \byear#1{#1}\fi
\ifx \bissue  \undefined \def \bissue#1{#1}\fi
\ifx \bfpage  \undefined \def \bfpage#1{#1}\fi
\ifx \blpage  \undefined \def \blpage #1{#1}\fi
\ifx \burl  \undefined \def \burl#1{\textsf{#1}}\fi
\ifx \doiurl  \undefined \def \doiurl#1{\url{https://doi.org/#1}}\fi
\ifx \betal  \undefined \def \betal{\textit{et al.}}\fi
\ifx \binstitute  \undefined \def \binstitute#1{#1}\fi
\ifx \binstitutionaled  \undefined \def \binstitutionaled#1{#1}\fi
\ifx \bctitle  \undefined \def \bctitle#1{#1}\fi
\ifx \beditor  \undefined \def \beditor#1{#1}\fi
\ifx \bpublisher  \undefined \def \bpublisher#1{#1}\fi
\ifx \bbtitle  \undefined \def \bbtitle#1{#1}\fi
\ifx \bedition  \undefined \def \bedition#1{#1}\fi
\ifx \bseriesno  \undefined \def \bseriesno#1{#1}\fi
\ifx \blocation  \undefined \def \blocation#1{#1}\fi
\ifx \bsertitle  \undefined \def \bsertitle#1{#1}\fi
\ifx \bsnm \undefined \def \bsnm#1{#1}\fi
\ifx \bsuffix \undefined \def \bsuffix#1{#1}\fi
\ifx \bparticle \undefined \def \bparticle#1{#1}\fi
\ifx \barticle \undefined \def \barticle#1{#1}\fi
\bibcommenthead
\ifx \bconfdate \undefined \def \bconfdate #1{#1}\fi
\ifx \botherref \undefined \def \botherref #1{#1}\fi
\ifx \url \undefined \def \url#1{\textsf{#1}}\fi
\ifx \bchapter \undefined \def \bchapter#1{#1}\fi
\ifx \bbook \undefined \def \bbook#1{#1}\fi
\ifx \bcomment \undefined \def \bcomment#1{#1}\fi
\ifx \oauthor \undefined \def \oauthor#1{#1}\fi
\ifx \citeauthoryear \undefined \def \citeauthoryear#1{#1}\fi
\ifx \endbibitem  \undefined \def \endbibitem {}\fi
\ifx \bconflocation  \undefined \def \bconflocation#1{#1}\fi
\ifx \arxivurl  \undefined \def \arxivurl#1{\textsf{#1}}\fi
\csname PreBibitemsHook\endcsname

\bibitem[\protect\citeauthoryear{Nemhauser and
  Wolsey}{1988}]{NemhauserWolsey:IntegerCombinatorialOptimization}
\begin{bbook}
\bauthor{\bsnm{Nemhauser}, \binits{G.}},
\bauthor{\bsnm{Wolsey}, \binits{L.}}:
\bbtitle{Integer and Combinatorial Optimization}.
John Wiley \& Sons, Ltd
(\byear{1988}).
\doiurl{10.1002/9781118627372}
\end{bbook}
\endbibitem

\bibitem[\protect\citeauthoryear{Wolsey}{2020}]{Wolsey:IntegerProgramming}
\begin{bbook}
\bauthor{\bsnm{Wolsey}, \binits{L.}}:
\bbtitle{Integer Programming}.
John Wiley \& Sons, Ltd
(\byear{2020}).
\doiurl{10.1002/9781119606475}
\end{bbook}
\endbibitem

\bibitem[\protect\citeauthoryear{Conforti et~al.}{2014}]{Conforti2014}
\begin{bbook}
\bauthor{\bsnm{Conforti}, \binits{M.}},
\bauthor{\bsnm{Cornu{\'e}jols}, \binits{G.}},
\bauthor{\bsnm{Zambelli}, \binits{G.}}:
\bbtitle{Integer Programming}.
\bpublisher{Springer},
\blocation{Cham}
(\byear{2014}).
\doiurl{10.1007/978-3-319-11008-0}
\end{bbook}
\endbibitem

\bibitem[\protect\citeauthoryear{Drezner and
  Hamacher}{2002}]{DreznerHamacher:FacilityLocation}
\begin{bbook}
\bauthor{\bsnm{Drezner}, \binits{Z.}},
\bauthor{\bsnm{Hamacher}, \binits{H.W.}}:
\bbtitle{Faciliy Location: Applications and Theory}.
\bpublisher{Springer},
\blocation{Berlin, Heidelberg}
(\byear{2002})
\end{bbook}
\endbibitem

\bibitem[\protect\citeauthoryear{Nickel and
  Puerto}{2005}]{NickelPuerto:LocationTheory}
\begin{bbook}
\bauthor{\bsnm{Nickel}, \binits{S.}},
\bauthor{\bsnm{Puerto}, \binits{J.}}:
\bbtitle{Location Theory: A Unified Approach}.
\bpublisher{Springer},
\blocation{Berlin, Heidelberg}
(\byear{2005}).
\doiurl{10.1007/3-540-27640-8}
\end{bbook}
\endbibitem

\bibitem[\protect\citeauthoryear{Laporte
  et~al.}{2019}]{LaporteNickelLocationScience}
\begin{bbook}
\bauthor{\bsnm{Laporte}, \binits{G.}},
\bauthor{\bsnm{Nickel}, \binits{S.}},
\bauthor{\bsnm{Saldanha-da-Gama}, \binits{F.}}:
\bbtitle{Location Science}.
\bpublisher{Springer},
\blocation{Cham}
(\byear{2019}).
\doiurl{10.1007/978-3-030-32177-2}
\end{bbook}
\endbibitem

\bibitem[\protect\citeauthoryear{Born and
  Fock}{1928}]{Born1928:QuantumAdiabaticTheorem}
\begin{barticle}
\bauthor{\bsnm{Born}, \binits{M.}},
\bauthor{\bsnm{Fock}, \binits{V.}}:
\batitle{Beweis des adiabatensatzes}.
\bjtitle{Zeitschrift f{\"u}r Physik}
\bvolume{51}(\bissue{3}),
\bfpage{165}--\blpage{180}
(\byear{1928})
\doiurl{10.1007/BF01343193}
\end{barticle}
\endbibitem

\bibitem[\protect\citeauthoryear{Kato}{1950}]{Kato:QuantumAdiabaticTheorem}
\begin{barticle}
\bauthor{\bsnm{Kato}, \binits{T.}}:
\batitle{On the adiabatic theorem of quantum mechanics}.
\bjtitle{Journal of the Physical Society of Japan}
\bvolume{5}(\bissue{6}),
\bfpage{435}--\blpage{439}
(\byear{1950})
\doiurl{10.1143/JPSJ.5.435}
\end{barticle}
\endbibitem

\bibitem[\protect\citeauthoryear{Farhi
  et~al.}{2000}]{farhi:AdiabaticQuantumOptimization}
\begin{botherref}
\oauthor{\bsnm{Farhi}, \binits{E.}},
\oauthor{\bsnm{Goldstone}, \binits{J.}},
\oauthor{\bsnm{Gutmann}, \binits{S.}},
\oauthor{\bsnm{Sipser}, \binits{M.}}:
Quantum Computation by Adiabatic Evolution
(2000).
\url{https://arxiv.org/abs/quant-ph/0001106}
\end{botherref}
\endbibitem

\bibitem[\protect\citeauthoryear{Farhi et~al.}{2014}]{farhi2014QAOA}
\begin{botherref}
\oauthor{\bsnm{Farhi}, \binits{E.}},
\oauthor{\bsnm{Goldstone}, \binits{J.}},
\oauthor{\bsnm{Gutmann}, \binits{S.}}:
A Quantum Approximate Optimization Algorithm
(2014).
\url{https://arxiv.org/abs/1411.4028}
\end{botherref}
\endbibitem

\bibitem[\protect\citeauthoryear{Lucas}{2014}]{LucasIsing}
\begin{barticle}
\bauthor{\bsnm{Lucas}, \binits{A.}}:
\batitle{Ising formulations of many {NP} problems}.
\bjtitle{Frontiers in Physics}
(\byear{2014})
\doiurl{10.3389/fphy.2014.00005}
\end{barticle}
\endbibitem

\bibitem[\protect\citeauthoryear{Karp}{1972}]{Karp1972}
\begin{bbook}
\bauthor{\bsnm{Karp}, \binits{R.M.}}:
In: \beditor{\bsnm{Miller}, \binits{R.E.}},
\beditor{\bsnm{Thatcher}, \binits{J.W.}},
\beditor{\bsnm{Bohlinger}, \binits{J.D.}} (eds.)
\bbtitle{Reducibility among Combinatorial Problems},
pp. \bfpage{85}--\blpage{103}.
\bpublisher{Springer},
\blocation{Boston, MA}
(\byear{1972}).
\doiurl{10.1007/978-1-4684-2001-2\_9}
\end{bbook}
\endbibitem

\bibitem[\protect\citeauthoryear{Abbas
  et~al.}{2024}]{Abbas:ChallengesandOpportunities}
\begin{barticle}
\bauthor{\bsnm{Abbas}, \binits{A.}},
\bauthor{\bsnm{Ambainis}, \binits{A.}},
\bauthor{\bsnm{Augustino}, \binits{B.}},
\bauthor{\bsnm{B{\"a}rtschi}, \binits{A.}},
\bauthor{\bsnm{Buhrman}, \binits{H.}},
\bauthor{\bsnm{Coffrin}, \binits{C.}},
\bauthor{\bsnm{Cortiana}, \binits{G.}},
\bauthor{\bsnm{Dunjko}, \binits{V.}},
\bauthor{\bsnm{Egger}, \binits{D.J.}},
\bauthor{\bsnm{Elmegreen}, \binits{B.G.}},
\bauthor{\bsnm{Franco}, \binits{N.}},
\bauthor{\bsnm{Fratini}, \binits{F.}},
\bauthor{\bsnm{Fuller}, \binits{B.}},
\bauthor{\bsnm{Gacon}, \binits{J.}},
\bauthor{\bsnm{Gonciulea}, \binits{C.}},
\bauthor{\bsnm{Gribling}, \binits{S.}},
\bauthor{\bsnm{Gupta}, \binits{S.}},
\bauthor{\bsnm{Hadfield}, \binits{S.}},
\bauthor{\bsnm{Heese}, \binits{R.}},
\bauthor{\bsnm{Kircher}, \binits{G.}},
\bauthor{\bsnm{Kleinert}, \binits{T.}},
\bauthor{\bsnm{Koch}, \binits{T.}},
\bauthor{\bsnm{Korpas}, \binits{G.}},
\bauthor{\bsnm{Lenk}, \binits{S.}},
\bauthor{\bsnm{Marecek}, \binits{J.}},
\bauthor{\bsnm{Markov}, \binits{V.}},
\bauthor{\bsnm{Mazzola}, \binits{G.}},
\bauthor{\bsnm{Mensa}, \binits{S.}},
\bauthor{\bsnm{Mohseni}, \binits{N.}},
\bauthor{\bsnm{Nannicini}, \binits{G.}},
\bauthor{\bsnm{O'Meara}, \binits{C.}},
\bauthor{\bsnm{Tapia}, \binits{E.P.}},
\bauthor{\bsnm{Pokutta}, \binits{S.}},
\bauthor{\bsnm{Proissl}, \binits{M.}},
\bauthor{\bsnm{Rebentrost}, \binits{P.}},
\bauthor{\bsnm{Sahin}, \binits{E.}},
\bauthor{\bsnm{Symons}, \binits{B.C.B.}},
\bauthor{\bsnm{Tornow}, \binits{S.}},
\bauthor{\bsnm{Valls}, \binits{V.}},
\bauthor{\bsnm{Woerner}, \binits{S.}},
\bauthor{\bsnm{Wolf-Bauwens}, \binits{M.L.}},
\bauthor{\bsnm{Yard}, \binits{J.}},
\bauthor{\bsnm{Yarkoni}, \binits{S.}},
\bauthor{\bsnm{Zechiel}, \binits{D.}},
\bauthor{\bsnm{Zhuk}, \binits{S.}},
\bauthor{\bsnm{Zoufal}, \binits{C.}}:
\batitle{Challenges and opportunities in quantum optimization}.
\bjtitle{Nature Reviews Physics}
\bvolume{6}(\bissue{12}),
\bfpage{718}--\blpage{735}
(\byear{2024})
\doiurl{10.1038/s42254-024-00770-9}
\end{barticle}
\endbibitem

\bibitem[\protect\citeauthoryear{Egger
  et~al.}{2021}]{Egger2021warmstartingquantum}
\begin{barticle}
\bauthor{\bsnm{Egger}, \binits{D.J.}},
\bauthor{\bsnm{Mare{\v{c}}ek}, \binits{J.}},
\bauthor{\bsnm{Woerner}, \binits{S.}}:
\batitle{Warm-starting quantum optimization}.
\bjtitle{{Quantum}}
\bvolume{5},
\bfpage{479}
(\byear{2021})
\doiurl{10.22331/q-2021-06-17-479}
\end{barticle}
\endbibitem

\bibitem[\protect\citeauthoryear{Preskill}{2018}]{PreskillNISQ}
\begin{barticle}
\bauthor{\bsnm{Preskill}, \binits{J.}}:
\batitle{Quantum {C}omputing in the {NISQ} era and beyond}.
\bjtitle{{Quantum}}
\bvolume{2},
\bfpage{79}
(\byear{2018})
\doiurl{10.22331/q-2018-08-06-79}
\end{barticle}
\endbibitem

\bibitem[\protect\citeauthoryear{Pellow-Jarman
  et~al.}{2024}]{Pellow_Jarman_2024_Noise}
\begin{botherref}
\oauthor{\bsnm{Pellow-Jarman}, \binits{A.}},
\oauthor{\bsnm{McFarthing}, \binits{S.}},
\oauthor{\bsnm{Sinayskiy}, \binits{I.}},
\oauthor{\bsnm{Park}, \binits{D.K.}},
\oauthor{\bsnm{Pillay}, \binits{A.}},
\oauthor{\bsnm{Petruccione}, \binits{F.}}:
The effect of classical optimizers and ansatz depth on qaoa performance in
  noisy devices.
Scientific Reports
\textbf{14}(1)
(2024)
\doiurl{10.1038/s41598-024-66625-6}
\end{botherref}
\endbibitem

\bibitem[\protect\citeauthoryear{Quek et~al.}{2024}]{Quek2024_Noise}
\begin{barticle}
\bauthor{\bsnm{Quek}, \binits{Y.}},
\bauthor{\bsnm{Stilck~Fran{\c{c}}a}, \binits{D.}},
\bauthor{\bsnm{Khatri}, \binits{S.}},
\bauthor{\bsnm{Meyer}, \binits{J.J.}},
\bauthor{\bsnm{Eisert}, \binits{J.}}:
\batitle{Exponentially tighter bounds on limitations of quantum error
  mitigation}.
\bjtitle{Nature Physics}
\bvolume{20}(\bissue{10}),
\bfpage{1648}--\blpage{1658}
(\byear{2024})
\doiurl{10.1038/s41567-024-02536-7}
\end{barticle}
\endbibitem

\bibitem[\protect\citeauthoryear{Javadi-Abhari et~al.}{2024}]{qiskit2024}
\begin{botherref}
\oauthor{\bsnm{Javadi-Abhari}, \binits{A.}},
\oauthor{\bsnm{Treinish}, \binits{M.}},
\oauthor{\bsnm{Krsulich}, \binits{K.}},
\oauthor{\bsnm{Wood}, \binits{C.J.}},
\oauthor{\bsnm{Lishman}, \binits{J.}},
\oauthor{\bsnm{Gacon}, \binits{J.}},
\oauthor{\bsnm{Martiel}, \binits{S.}},
\oauthor{\bsnm{Nation}, \binits{P.D.}},
\oauthor{\bsnm{Bishop}, \binits{L.S.}},
\oauthor{\bsnm{Cross}, \binits{A.W.}},
\oauthor{\bsnm{Johnson}, \binits{B.R.}},
\oauthor{\bsnm{Gambetta}, \binits{J.M.}}:
Quantum computing with {Q}iskit
(2024).
\doiurl{10.48550/arXiv.2405.08810}
\end{botherref}
\endbibitem

\bibitem[\protect\citeauthoryear{{IBM Quantum}}{2025}]{IBMQuantum}
\begin{botherref}
\oauthor{\bsnm{{IBM Quantum}}}:
(2025).
\url{https://www.ibm.com/quantum}
\end{botherref}
\endbibitem

\bibitem[\protect\citeauthoryear{Glover
  et~al.}{2022}]{QUBO:Kochenberger_Tutorial}
\begin{barticle}
\bauthor{\bsnm{Glover}, \binits{F.}},
\bauthor{\bsnm{Kochenberger}, \binits{G.}},
\bauthor{\bsnm{Hennig}, \binits{R.}},
\bauthor{\bsnm{Du}, \binits{Y.}}:
\batitle{Quantum bridge analytics i: a tutorial on formulating and using qubo
  models}.
\bjtitle{Annals of Operations Research}
\bvolume{314}(\bissue{1}),
\bfpage{141}--\blpage{183}
(\byear{2022})
\doiurl{10.1007/s10479-022-04634-2}
\end{barticle}
\endbibitem

\bibitem[\protect\citeauthoryear{Bayerstadler
  et~al.}{2021}]{QuantumApplications}
\begin{barticle}
\bauthor{\bsnm{Bayerstadler}, \binits{A.}},
\bauthor{\bsnm{Becquin}, \binits{G.}},
\bauthor{\bsnm{Binder}, \binits{J.}},
\bauthor{\bsnm{Botter}, \binits{T.}},
\bauthor{\bsnm{Ehm}, \binits{H.}},
\bauthor{\bsnm{Ehmer}, \binits{T.}},
\bauthor{\bsnm{Erdmann}, \binits{M.}},
\bauthor{\bsnm{Gaus}, \binits{N.}},
\bauthor{\bsnm{Harbach}, \binits{P.}},
\bauthor{\bsnm{Hess}, \binits{M.}},
\bauthor{\bsnm{Klepsch}, \binits{J.}},
\bauthor{\bsnm{Leib}, \binits{M.}},
\bauthor{\bsnm{Luber}, \binits{S.}},
\bauthor{\bsnm{Luckow}, \binits{A.}},
\bauthor{\bsnm{Mansky}, \binits{M.}},
\bauthor{\bsnm{Mauerer}, \binits{W.}},
\bauthor{\bsnm{Neukart}, \binits{F.}},
\bauthor{\bsnm{Niedermeier}, \binits{C.}},
\bauthor{\bsnm{Palackal}, \binits{L.}},
\bauthor{\bsnm{Pfeiffer}, \binits{R.}},
\bauthor{\bsnm{Polenz}, \binits{C.}},
\bauthor{\bsnm{Sepulveda}, \binits{J.}},
\bauthor{\bsnm{Sievers}, \binits{T.}},
\bauthor{\bsnm{Standen}, \binits{B.}},
\bauthor{\bsnm{Streif}, \binits{M.}},
\bauthor{\bsnm{Strohm}, \binits{T.}},
\bauthor{\bsnm{Utschig-Utschig}, \binits{C.}},
\bauthor{\bsnm{Volz}, \binits{D.}},
\bauthor{\bsnm{Weiss}, \binits{H.}},
\bauthor{\bsnm{Winter}, \binits{F.}},
\bauthor{\bsnm{Technology}, \binits{Q.}},
\bauthor{\bsnm{--~QUTAC}, \binits{A.C.}}:
\batitle{Industry quantum computing applications}.
\bjtitle{EPJ Quantum Technology}
\bvolume{8}(\bissue{1}),
\bfpage{25}
(\byear{2021})
\doiurl{10.1140/epjqt/s40507-021-00114-x}
\end{barticle}
\endbibitem

\bibitem[\protect\citeauthoryear{Egger et~al.}{2020}]{QUBO:Egger_Finance}
\begin{barticle}
\bauthor{\bsnm{Egger}, \binits{D.J.}},
\bauthor{\bsnm{Gambella}, \binits{C.}},
\bauthor{\bsnm{Marecek}, \binits{J.}},
\bauthor{\bsnm{McFaddin}, \binits{S.}},
\bauthor{\bsnm{Mevissen}, \binits{M.}},
\bauthor{\bsnm{Raymond}, \binits{R.}},
\bauthor{\bsnm{Simonetto}, \binits{A.}},
\bauthor{\bsnm{Woerner}, \binits{S.}},
\bauthor{\bsnm{Yndurain}, \binits{E.}}:
\batitle{Quantum computing for finance: State-of-the-art and future prospects}.
\bjtitle{IEEE Transactions on Quantum Engineering}
\bvolume{1},
\bfpage{1}--\blpage{24}
(\byear{2020})
\doiurl{10.1109/TQE.2020.3030314}
\end{barticle}
\endbibitem

\bibitem[\protect\citeauthoryear{Brandhofer
  et~al.}{2022}]{QUBO:Brandhofer_Portfolio}
\begin{barticle}
\bauthor{\bsnm{Brandhofer}, \binits{S.}},
\bauthor{\bsnm{Braun}, \binits{D.}},
\bauthor{\bsnm{Dehn}, \binits{V.}},
\bauthor{\bsnm{Hellstern}, \binits{G.}},
\bauthor{\bsnm{H{\"u}ls}, \binits{M.}},
\bauthor{\bsnm{Ji}, \binits{Y.}},
\bauthor{\bsnm{Polian}, \binits{I.}},
\bauthor{\bsnm{Bhatia}, \binits{A.S.}},
\bauthor{\bsnm{Wellens}, \binits{T.}}:
\batitle{Benchmarking the performance of portfolio optimization with qaoa}.
\bjtitle{Quantum Information Processing}
\bvolume{22}(\bissue{1}),
\bfpage{25}
(\byear{2022})
\doiurl{10.1007/s11128-022-03766-5}
\end{barticle}
\endbibitem

\bibitem[\protect\citeauthoryear{Hodson et~al.}{2019}]{QUBO:Hodson_portfolio}
\begin{botherref}
\oauthor{\bsnm{Hodson}, \binits{M.}},
\oauthor{\bsnm{Ruck}, \binits{B.}},
\oauthor{\bsnm{Ong}, \binits{H.}},
\oauthor{\bsnm{Garvin}, \binits{D.}},
\oauthor{\bsnm{Dulman}, \binits{S.}}:
Portfolio rebalancing experiments using the Quantum Alternating Operator Ansatz
(2019).
\url{https://arxiv.org/abs/1911.05296}
\end{botherref}
\endbibitem

\bibitem[\protect\citeauthoryear{Bochkarev et~al.}{2025}]{QUBO:Bochkarev}
\begin{barticle}
\bauthor{\bsnm{Bochkarev}, \binits{A.}},
\bauthor{\bsnm{Heese}, \binits{R.}},
\bauthor{\bsnm{Jäger}, \binits{S.}},
\bauthor{\bsnm{Schiewe}, \binits{P.}},
\bauthor{\bsnm{Schöbel}, \binits{A.}}:
\batitle{Quantum computing for discrete optimization: A highlight of three
  technologies}.
\bjtitle{European Journal of Operational Research}
(\byear{2025})
\doiurl{10.1016/j.ejor.2025.07.063}
\end{barticle}
\endbibitem

\bibitem[\protect\citeauthoryear{Amaro et~al.}{2022}]{QUBO:Scheduling}
\begin{barticle}
\bauthor{\bsnm{Amaro}, \binits{D.}},
\bauthor{\bsnm{Rosenkranz}, \binits{M.}},
\bauthor{\bsnm{Fitzpatrick}, \binits{N.}},
\bauthor{\bsnm{Hirano}, \binits{K.}},
\bauthor{\bsnm{Fiorentini}, \binits{M.}}:
\batitle{A case study of variational quantum algorithms for a job shop
  scheduling problem}.
\bjtitle{EPJ Quantum Technology}
\bvolume{9}(\bissue{1}),
\bfpage{5}
(\byear{2022})
\doiurl{10.1140/epjqt/s40507-022-00123-4}
\end{barticle}
\endbibitem

\bibitem[\protect\citeauthoryear{Bo{\.{z}}ejko
  et~al.}{2024}]{QUBO:BinaryKnapsack_Dwave}
\begin{barticle}
\bauthor{\bsnm{Bo{\.{z}}ejko}, \binits{W.}},
\bauthor{\bsnm{Burduk}, \binits{A.}},
\bauthor{\bsnm{Pempera}, \binits{J.}},
\bauthor{\bsnm{Uchro{\'{n}}ski}, \binits{M.}},
\bauthor{\bsnm{Wodecki}, \binits{M.}}:
\batitle{Optimal solving of a binary knapsack problem on a d-wave quantum
  machine and its implementation in production systems}.
\bjtitle{Annals of Operations Research}
(\byear{2024})
\doiurl{10.1007/s10479-024-06025-1}
\end{barticle}
\endbibitem

\bibitem[\protect\citeauthoryear{Cattelan and
  Yarkoni}{2024}]{QUBO:Cattelan_Routing}
\begin{barticle}
\bauthor{\bsnm{Cattelan}, \binits{M.}},
\bauthor{\bsnm{Yarkoni}, \binits{S.}}:
\batitle{Modeling routing problems in qubo with application to ride-hailing}.
\bjtitle{Scientific Reports}
\bvolume{14}(\bissue{1}),
\bfpage{19768}
(\byear{2024})
\doiurl{10.1038/s41598-024-70649-3}
\end{barticle}
\endbibitem

\bibitem[\protect\citeauthoryear{Mohanty et~al.}{2023}]{QUBO:Routing_2}
\begin{barticle}
\bauthor{\bsnm{Mohanty}, \binits{N.}},
\bauthor{\bsnm{Behera}, \binits{B.K.}},
\bauthor{\bsnm{Ferrie}, \binits{C.}}:
\batitle{Analysis of the vehicle routing problem solved via hybrid quantum
  algorithms in the presence of noisy channels}.
\bjtitle{IEEE Transactions on Quantum Engineering}
\bvolume{4},
\bfpage{1}--\blpage{14}
(\byear{2023})
\doiurl{10.1109/TQE.2023.3303989}
\end{barticle}
\endbibitem

\bibitem[\protect\citeauthoryear{Harwood et~al.}{2021}]{QUBO:Routing2021}
\begin{barticle}
\bauthor{\bsnm{Harwood}, \binits{S.}},
\bauthor{\bsnm{Gambella}, \binits{C.}},
\bauthor{\bsnm{Trenev}, \binits{D.}},
\bauthor{\bsnm{Simonetto}, \binits{A.}},
\bauthor{\bsnm{Bernal~Neira}, \binits{D.}},
\bauthor{\bsnm{Greenberg}, \binits{D.}}:
\batitle{Formulating and solving routing problems on quantum computers}.
\bjtitle{IEEE Transactions on Quantum Engineering}
\bvolume{2},
\bfpage{1}--\blpage{17}
(\byear{2021})
\doiurl{10.1109/TQE.2021.3049230}
\end{barticle}
\endbibitem

\bibitem[\protect\citeauthoryear{von Th\"unen}{1842}]{vonThunen1842}
\begin{bbook}
\bauthor{\bsnm{Th\"unen}, \binits{J.H.}}:
\bbtitle{The Isolated State}.
\bpublisher{Pergamon Press},
\blocation{Oxford}
(\byear{1842}).
\bcomment{Originally published 1842, translated 1966}
\end{bbook}
\endbibitem

\bibitem[\protect\citeauthoryear{Weber}{1909}]{Weber1909}
\begin{bbook}
\bauthor{\bsnm{Weber}, \binits{A.}}:
\bbtitle{Theory of the Location of Industries}.
\bpublisher{University of Chicago Press},
\blocation{Chicago}
(\byear{1909}).
\bcomment{Originally published 1909, translated 1929}
\end{bbook}
\endbibitem

\bibitem[\protect\citeauthoryear{Eiselt and Marianov}{2011}]{Eiselt2011}
\begin{bbook}
\bauthor{\bsnm{Eiselt}, \binits{H.A.}},
\bauthor{\bsnm{Marianov}, \binits{V.}}:
\bbtitle{Foundations of Location Analysis}.
\bpublisher{Springer},
\blocation{New York, NY}
(\byear{2011}).
\doiurl{10.1007/978-1-4419-7572-0}
\end{bbook}
\endbibitem

\bibitem[\protect\citeauthoryear{Laporte
  et~al.}{2019}]{LocationScience:Introduction}
\begin{bbook}
\bauthor{\bsnm{Laporte}, \binits{G.}},
\bauthor{\bsnm{Nickel}, \binits{S.}},
\bauthor{\bsnm{Saldanha-da-Gama}, \binits{F.}}:
In: \beditor{\bsnm{Laporte}, \binits{G.}},
\beditor{\bsnm{Nickel}, \binits{S.}},
\beditor{\bsnm{Gama}, \binits{F.}} (eds.)
\bbtitle{Introduction to Location Science},
pp. \bfpage{1}--\blpage{21}.
\bpublisher{Springer},
\blocation{Cham}
(\byear{2019}).
\doiurl{10.1007/978-3-030-32177-2\_1}
\end{bbook}
\endbibitem

\bibitem[\protect\citeauthoryear{Bazaraa et~al.}{2006}]{Bazaraa}
\begin{bbook}
\bauthor{\bsnm{Bazaraa}, \binits{M.S.}},
\bauthor{\bsnm{Sherali}, \binits{H.D.}},
\bauthor{\bsnm{Shetty}, \binits{C.M.}}:
\bbtitle{Nonlinear Programming}.
John Wiley \& Sons, Ltd
(\byear{2006}).
\doiurl{10.1002/0471787779}
\end{bbook}
\endbibitem

\bibitem[\protect\citeauthoryear{Hakimi}{1964}]{HakimiPMedianProblem1}
\begin{barticle}
\bauthor{\bsnm{Hakimi}, \binits{S.L.}}:
\batitle{Optimum locations of switching centers and the absolute centers and
  medians of a graph}.
\bjtitle{Operations Research}
\bvolume{12}(\bissue{3}),
\bfpage{450}--\blpage{459}
(\byear{1964}).
Accessed 2025-09-04
\end{barticle}
\endbibitem

\bibitem[\protect\citeauthoryear{Mar{\'i}n and
  Pelegr{\'i}n}{2019}]{LocationScience:Chapter2pMedian}
\begin{bbook}
\bauthor{\bsnm{Mar{\'i}n}, \binits{A.}},
\bauthor{\bsnm{Pelegr{\'i}n}, \binits{M.}}:
In: \beditor{\bsnm{Laporte}, \binits{G.}},
\beditor{\bsnm{Nickel}, \binits{S.}},
\beditor{\bsnm{Gama}, \binits{F.}} (eds.)
\bbtitle{p-Median Problems},
pp. \bfpage{25}--\blpage{50}.
\bpublisher{Springer},
\blocation{Cham}
(\byear{2019}).
\doiurl{10.1007/978-3-030-32177-2\_2}
\end{bbook}
\endbibitem

\bibitem[\protect\citeauthoryear{Hakimi}{1965}]{HakimiPMedianProblem2}
\begin{barticle}
\bauthor{\bsnm{Hakimi}, \binits{S.L.}}:
\batitle{Optimum distribution of switching centers in a communication network
  and some related graph theoretic problems}.
\bjtitle{Operations Research}
\bvolume{13}(\bissue{3}),
\bfpage{462}--\blpage{475}
(\byear{1965}).
Accessed 2025-09-04
\end{barticle}
\endbibitem

\bibitem[\protect\citeauthoryear{{\c{C}}al{\i}k
  et~al.}{2019}]{LocationScience:Chapter3pCenter}
\begin{bbook}
\bauthor{\bsnm{{\c{C}}al{\i}k}, \binits{H.}},
\bauthor{\bsnm{Labb{\'e}}, \binits{M.}},
\bauthor{\bsnm{Yaman}, \binits{H.}}:
In: \beditor{\bsnm{Laporte}, \binits{G.}},
\beditor{\bsnm{Nickel}, \binits{S.}},
\beditor{\bsnm{Gama}, \binits{F.}} (eds.)
\bbtitle{p-Center Problems},
pp. \bfpage{51}--\blpage{65}.
\bpublisher{Springer},
\blocation{Cham}
(\byear{2019}).
\doiurl{10.1007/978-3-030-32177-2\_3}
\end{bbook}
\endbibitem

\bibitem[\protect\citeauthoryear{Fern{\'a}ndez and
  Landete}{2019}]{LocationScience:ChapterFCFLP}
\begin{bbook}
\bauthor{\bsnm{Fern{\'a}ndez}, \binits{E.}},
\bauthor{\bsnm{Landete}, \binits{M.}}:
In: \beditor{\bsnm{Laporte}, \binits{G.}},
\beditor{\bsnm{Nickel}, \binits{S.}},
\beditor{\bsnm{Gama}, \binits{F.}} (eds.)
\bbtitle{Fixed-Charge Facility Location Problems},
pp. \bfpage{67}--\blpage{98}.
\bpublisher{Springer},
\blocation{Cham}
(\byear{2019}).
\doiurl{10.1007/978-3-030-32177-2\_4}
\end{bbook}
\endbibitem

\bibitem[\protect\citeauthoryear{Fisher et~al.}{1986}]{Fisher:GAP}
\begin{barticle}
\bauthor{\bsnm{Fisher}, \binits{M.L.}},
\bauthor{\bsnm{Jaikumar}, \binits{R.}},
\bauthor{\bsnm{Wassenhove}, \binits{L.N.V.}}:
\batitle{A multiplier adjustment method for the generalized assignment
  problem}.
\bjtitle{Management Science}
\bvolume{32}(\bissue{9}),
\bfpage{1095}--\blpage{1103}
(\byear{1986}).
Accessed 2025-11-10
\end{barticle}
\endbibitem

\bibitem[\protect\citeauthoryear{Nickel}{2001}]{Nickel:DOMP}
\begin{bchapter}
\bauthor{\bsnm{Nickel}, \binits{S.}}:
\bctitle{Discrete ordered weber problems}.
In: \beditor{\bsnm{Fleischmann}, \binits{B.}},
\beditor{\bsnm{Lasch}, \binits{R.}},
\beditor{\bsnm{Derigs}, \binits{U.}},
\beditor{\bsnm{Domschke}, \binits{W.}},
\beditor{\bsnm{Rieder}, \binits{U.}} (eds.)
\bbtitle{Operations Research Proceedings},
pp. \bfpage{71}--\blpage{76}.
\bpublisher{Springer},
\blocation{Berlin, Heidelberg}
(\byear{2001}).
\doiurl{10.1007/978-3-642-56656-1\_12}
\end{bchapter}
\endbibitem

\bibitem[\protect\citeauthoryear{Puerto and
  Rodr{\'i}guez-Ch{\'i}a}{2019}]{LocationScience:ChapterOMP}
\begin{bbook}
\bauthor{\bsnm{Puerto}, \binits{J.}},
\bauthor{\bsnm{Rodr{\'i}guez-Ch{\'i}a}, \binits{A.M.}}:
In: \beditor{\bsnm{Laporte}, \binits{G.}},
\beditor{\bsnm{Nickel}, \binits{S.}},
\beditor{\bsnm{Gama}, \binits{F.}} (eds.)
\bbtitle{Ordered Median Location Problems},
pp. \bfpage{261}--\blpage{302}.
\bpublisher{Springer},
\blocation{Cham}
(\byear{2019}).
\doiurl{10.1007/978-3-030-32177-2\_10}
\end{bbook}
\endbibitem

\bibitem[\protect\citeauthoryear{Boland et~al.}{2006}]{Nickel:DOMPExact}
\begin{barticle}
\bauthor{\bsnm{Boland}, \binits{N.}},
\bauthor{\bsnm{Domínguez-Marín}, \binits{P.}},
\bauthor{\bsnm{Nickel}, \binits{S.}},
\bauthor{\bsnm{Puerto}, \binits{J.}}:
\batitle{Exact procedures for solving the discrete ordered median problem}.
\bjtitle{Computers \& Operations Research}
\bvolume{33}(\bissue{11}),
\bfpage{3270}--\blpage{3300}
(\byear{2006})
\doiurl{10.1016/j.cor.2005.03.025} .
\bcomment{Part Special Issue: Operations Research and Data Mining}
\end{barticle}
\endbibitem

\bibitem[\protect\citeauthoryear{Dom{\'i}nguez-Mar{\'i}n
  et~al.}{2005}]{Nickel:DOMPHeuristic}
\begin{barticle}
\bauthor{\bsnm{Dom{\'i}nguez-Mar{\'i}n}, \binits{P.}},
\bauthor{\bsnm{Nickel}, \binits{S.}},
\bauthor{\bsnm{Hansen}, \binits{P.}},
\bauthor{\bsnm{Mladenovi{\'{c}}}, \binits{N.}}:
\batitle{Heuristic procedures for solving the discrete ordered median problem}.
\bjtitle{Annals of Operations Research}
\bvolume{136}(\bissue{1}),
\bfpage{145}--\blpage{173}
(\byear{2005})
\doiurl{10.1007/s10479-005-2043-3}
\end{barticle}
\endbibitem

\bibitem[\protect\citeauthoryear{McCormick}{1976}]{McCormick1976}
\begin{barticle}
\bauthor{\bsnm{McCormick}, \binits{G.P.}}:
\batitle{Computability of global solutions to factorable nonconvex programs:
  Part i --- convex underestimating problems}.
\bjtitle{Mathematical Programming}
\bvolume{10}(\bissue{1}),
\bfpage{147}--\blpage{175}
(\byear{1976})
\doiurl{10.1007/BF01580665}
\end{barticle}
\endbibitem

\bibitem[\protect\citeauthoryear{Nickel and
  da~Gama}{2015}]{LocationScience:ChapterMultiPeriod}
\begin{bbook}
\bauthor{\bsnm{Nickel}, \binits{S.}},
\bauthor{\bsnm{Gama}, \binits{F.S.}}:
In: \beditor{\bsnm{Laporte}, \binits{G.}},
\beditor{\bsnm{Nickel}, \binits{S.}},
\beditor{\bsnm{Gama}, \binits{F.}} (eds.)
\bbtitle{Multi-Period Facility Location},
pp. \bfpage{289}--\blpage{310}.
\bpublisher{Springer},
\blocation{Cham}
(\byear{2015}).
\doiurl{10.1007/978-3-319-13111-5\_11}
\end{bbook}
\endbibitem

\bibitem[\protect\citeauthoryear{Wesolowsky and
  Truscott}{1975}]{WesolowskyTruscottMultiPeriod}
\begin{barticle}
\bauthor{\bsnm{Wesolowsky}, \binits{G.O.}},
\bauthor{\bsnm{Truscott}, \binits{W.G.}}:
\batitle{The multiperiod location-allocation problem with relocation of
  facilities}.
\bjtitle{Management Science}
\bvolume{22}(\bissue{1}),
\bfpage{57}--\blpage{65}
(\byear{1975}).
Accessed 2025-09-04
\end{barticle}
\endbibitem

\bibitem[\protect\citeauthoryear{Kirkpatrick
  et~al.}{1983}]{KirkppatrickSimulatedAnnealing}
\begin{barticle}
\bauthor{\bsnm{Kirkpatrick}, \binits{S.}},
\bauthor{\bsnm{Gelatt}, \binits{C.D.}},
\bauthor{\bsnm{Vecchi}, \binits{M.P.}}:
\batitle{Optimization by simulated annealing}.
\bjtitle{Science}
\bvolume{220}(\bissue{4598}),
\bfpage{671}--\blpage{680}
(\byear{1983})
\doiurl{10.1126/science.220.4598.671}
\end{barticle}
\endbibitem

\bibitem[\protect\citeauthoryear{{D-Wave Systems}}{2025}]{DWaveSampler}
\begin{botherref}
\oauthor{\bsnm{{D-Wave Systems}}}:
DWaveSamplers.
\url{https://docs.dwavequantum.com/en/latest/ocean/api\_ref\_samplers/index.html}.
Accessed: 7 November 2025
(2025)
\end{botherref}
\endbibitem

\bibitem[\protect\citeauthoryear{Glover}{1986}]{Glover:TabuSearch}
\begin{barticle}
\bauthor{\bsnm{Glover}, \binits{F.}}:
\batitle{Future paths for integer programming and links to artificial
  intelligence}.
\bjtitle{Computers \& Operations Research}
\bvolume{13}(\bissue{5}),
\bfpage{533}--\blpage{549}
(\byear{1986})
\doiurl{10.1016/0305-0548(86)90048-1} .
\bcomment{Applications of Integer Programming}
\end{barticle}
\endbibitem

\bibitem[\protect\citeauthoryear{{Gurobi Optimization, LLC}}{2025}]{Gurobi}
\begin{botherref}
\oauthor{\bsnm{{Gurobi Optimization, LLC}}}:
{Gurobi Optimizer Reference Manual}
(2025).
\url{https://www.gurobi.com}
\end{botherref}
\endbibitem

\bibitem[\protect\citeauthoryear{Seidel et~al.}{2024}]{qrisp}
\begin{botherref}
\oauthor{\bsnm{Seidel}, \binits{R.}},
\oauthor{\bsnm{Bock}, \binits{S.}},
\oauthor{\bsnm{Zander}, \binits{R.}},
\oauthor{\bsnm{Petrič}, \binits{M.}},
\oauthor{\bsnm{Steinmann}, \binits{N.}},
\oauthor{\bsnm{Tcholtchev}, \binits{N.}},
\oauthor{\bsnm{Hauswirth}, \binits{M.}}:
Qrisp: A Framework for Compilable High-Level Programming of Gate-Based Quantum
  Computers
(2024).
\url{https://arxiv.org/abs/2406.14792}
\end{botherref}
\endbibitem

\bibitem[\protect\citeauthoryear{Powell}{1994}]{Powell:COBYLA}
\begin{bbook}
\bauthor{\bsnm{Powell}, \binits{M.J.D.}}:
In: \beditor{\bsnm{Gomez}, \binits{S.}},
\beditor{\bsnm{Hennart}, \binits{J.-P.}} (eds.)
\bbtitle{A Direct Search Optimization Method That Models the Objective and
  Constraint Functions by Linear Interpolation},
pp. \bfpage{51}--\blpage{67}.
\bpublisher{Springer},
\blocation{Dordrecht}
(\byear{1994}).
\doiurl{10.1007/978-94-015-8330-5\_4}
\end{bbook}
\endbibitem

\bibitem[\protect\citeauthoryear{Virtanen et~al.}{2020}]{SciPy}
\begin{barticle}
\bauthor{\bsnm{Virtanen}, \binits{P.}},
\bauthor{\bsnm{Gommers}, \binits{R.}},
\bauthor{\bsnm{Oliphant}, \binits{T.E.}},
\bauthor{\bsnm{Haberland}, \binits{M.}},
\bauthor{\bsnm{Reddy}, \binits{T.}},
\bauthor{\bsnm{Cournapeau}, \binits{D.}},
\bauthor{\bsnm{Burovski}, \binits{E.}},
\bauthor{\bsnm{Peterson}, \binits{P.}},
\bauthor{\bsnm{Weckesser}, \binits{W.}},
\bauthor{\bsnm{Bright}, \binits{J.}},
\bauthor{\bsnm{Walt}, \binits{S.J.}},
\bauthor{\bsnm{Brett}, \binits{M.}},
\bauthor{\bsnm{Wilson}, \binits{J.}},
\bauthor{\bsnm{Millman}, \binits{K.J.}},
\bauthor{\bsnm{Mayorov}, \binits{N.}},
\bauthor{\bsnm{Nelson}, \binits{A.R.J.}},
\bauthor{\bsnm{Jones}, \binits{E.}},
\bauthor{\bsnm{Kern}, \binits{R.}},
\bauthor{\bsnm{Larson}, \binits{E.}},
\bauthor{\bsnm{Carey}, \binits{C.J.}},
\bauthor{\bsnm{Polat}, \binits{{\. {I}}.}},
\bauthor{\bsnm{Feng}, \binits{Y.}},
\bauthor{\bsnm{Moore}, \binits{E.W.}},
\bauthor{\bsnm{VanderPlas}, \binits{J.}},
\bauthor{\bsnm{Laxalde}, \binits{D.}},
\bauthor{\bsnm{Perktold}, \binits{J.}},
\bauthor{\bsnm{Cimrman}, \binits{R.}},
\bauthor{\bsnm{Henriksen}, \binits{I.}},
\bauthor{\bsnm{Quintero}, \binits{E.A.}},
\bauthor{\bsnm{Harris}, \binits{C.R.}},
\bauthor{\bsnm{Archibald}, \binits{A.M.}},
\bauthor{\bsnm{Ribeiro}, \binits{A.H.}},
\bauthor{\bsnm{Pedregosa}, \binits{F.}},
\bauthor{\bsnm{Mulbregt}, \binits{P.}},
\bauthor{\bsnm{Vijaykumar}, \binits{A.}},
\bauthor{\bsnm{Bardelli}, \binits{A.P.}},
\bauthor{\bsnm{Rothberg}, \binits{A.}},
\bauthor{\bsnm{Hilboll}, \binits{A.}},
\bauthor{\bsnm{Kloeckner}, \binits{A.}},
\bauthor{\bsnm{Scopatz}, \binits{A.}},
\bauthor{\bsnm{Lee}, \binits{A.}},
\bauthor{\bsnm{Rokem}, \binits{A.}},
\bauthor{\bsnm{Woods}, \binits{C.N.}},
\bauthor{\bsnm{Fulton}, \binits{C.}},
\bauthor{\bsnm{Masson}, \binits{C.}},
\bauthor{\bsnm{H{\"a}ggstr{\"o}m}, \binits{C.}},
\bauthor{\bsnm{Fitzgerald}, \binits{C.}},
\bauthor{\bsnm{Nicholson}, \binits{D.A.}},
\bauthor{\bsnm{Hagen}, \binits{D.R.}},
\bauthor{\bsnm{Pasechnik}, \binits{D.V.}},
\bauthor{\bsnm{Olivetti}, \binits{E.}},
\bauthor{\bsnm{Martin}, \binits{E.}},
\bauthor{\bsnm{Wieser}, \binits{E.}},
\bauthor{\bsnm{Silva}, \binits{F.}},
\bauthor{\bsnm{Lenders}, \binits{F.}},
\bauthor{\bsnm{Wilhelm}, \binits{F.}},
\bauthor{\bsnm{Young}, \binits{G.}},
\bauthor{\bsnm{Price}, \binits{G.A.}},
\bauthor{\bsnm{Ingold}, \binits{G.-L.}},
\bauthor{\bsnm{Allen}, \binits{G.E.}},
\bauthor{\bsnm{Lee}, \binits{G.R.}},
\bauthor{\bsnm{Audren}, \binits{H.}},
\bauthor{\bsnm{Probst}, \binits{I.}},
\bauthor{\bsnm{Dietrich}, \binits{J.P.}},
\bauthor{\bsnm{Silterra}, \binits{J.}},
\bauthor{\bsnm{Webber}, \binits{J.T.}},
\bauthor{\bsnm{Slavi{\v{c}}}, \binits{J.}},
\bauthor{\bsnm{Nothman}, \binits{J.}},
\bauthor{\bsnm{Buchner}, \binits{J.}},
\bauthor{\bsnm{Kulick}, \binits{J.}},
\bauthor{\bsnm{Sch{\"o}nberger}, \binits{J.L.}},
\bauthor{\bsnm{Miranda~Cardoso}, \binits{J.V.}},
\bauthor{\bsnm{Reimer}, \binits{J.}},
\bauthor{\bsnm{Harrington}, \binits{J.}},
\bauthor{\bsnm{Rodr{\'i}guez}, \binits{J.L.C.}},
\bauthor{\bsnm{Nunez-Iglesias}, \binits{J.}},
\bauthor{\bsnm{Kuczynski}, \binits{J.}},
\bauthor{\bsnm{Tritz}, \binits{K.}},
\bauthor{\bsnm{Thoma}, \binits{M.}},
\bauthor{\bsnm{Newville}, \binits{M.}},
\bauthor{\bsnm{K{\"u}mmerer}, \binits{M.}},
\bauthor{\bsnm{Bolingbroke}, \binits{M.}},
\bauthor{\bsnm{Tartre}, \binits{M.}},
\bauthor{\bsnm{Pak}, \binits{M.}},
\bauthor{\bsnm{Smith}, \binits{N.J.}},
\bauthor{\bsnm{Nowaczyk}, \binits{N.}},
\bauthor{\bsnm{Shebanov}, \binits{N.}},
\bauthor{\bsnm{Pavlyk}, \binits{O.}},
\bauthor{\bsnm{Brodtkorb}, \binits{P.A.}},
\bauthor{\bsnm{Lee}, \binits{P.}},
\bauthor{\bsnm{McGibbon}, \binits{R.T.}},
\bauthor{\bsnm{Feldbauer}, \binits{R.}},
\bauthor{\bsnm{Lewis}, \binits{S.}},
\bauthor{\bsnm{Tygier}, \binits{S.}},
\bauthor{\bsnm{Sievert}, \binits{S.}},
\bauthor{\bsnm{Vigna}, \binits{S.}},
\bauthor{\bsnm{Peterson}, \binits{S.}},
\bauthor{\bsnm{More}, \binits{S.}},
\bauthor{\bsnm{Pudlik}, \binits{T.}},
\bauthor{\bsnm{Oshima}, \binits{T.}},
\bauthor{\bsnm{Pingel}, \binits{T.J.}},
\bauthor{\bsnm{Robitaille}, \binits{T.P.}},
\bauthor{\bsnm{Spura}, \binits{T.}},
\bauthor{\bsnm{Jones}, \binits{T.R.}},
\bauthor{\bsnm{Cera}, \binits{T.}},
\bauthor{\bsnm{Leslie}, \binits{T.}},
\bauthor{\bsnm{Zito}, \binits{T.}},
\bauthor{\bsnm{Krauss}, \binits{T.}},
\bauthor{\bsnm{Upadhyay}, \binits{U.}},
\bauthor{\bsnm{Halchenko}, \binits{Y.O.}},
\bauthor{\bsnm{V{\'a}zquez-Baeza}, \binits{Y.}},
\bauthor{\bsnm{1.0~Contributors}, \binits{S.}}:
\batitle{Scipy 1.0: fundamental algorithms for scientific computing in python}.
\bjtitle{Nature Methods}
\bvolume{17}(\bissue{3}),
\bfpage{261}--\blpage{272}
(\byear{2020})
\doiurl{10.1038/s41592-019-0686-2}
\end{barticle}
\endbibitem

\bibitem[\protect\citeauthoryear{Zhang}{2023}]{Zhang_2023}
\begin{botherref}
\oauthor{\bsnm{Zhang}, \binits{Z.}}:
{PRIMA: Reference Implementation for Powell's Methods with Modernization and
  Amelioration}
(2023).
\doiurl{10.5281/zenodo.8052654} .
\url{http://www.libprima.net}
\end{botherref}
\endbibitem

\bibitem[\protect\citeauthoryear{Zhu et~al.}{1997}]{LBFGSB1}
\begin{barticle}
\bauthor{\bsnm{Zhu}, \binits{C.}},
\bauthor{\bsnm{Byrd}, \binits{R.H.}},
\bauthor{\bsnm{Lu}, \binits{P.}},
\bauthor{\bsnm{Nocedal}, \binits{J.}}:
\batitle{Algorithm 778: L-bfgs-b: Fortran subroutines for large-scale
  bound-constrained optimization}.
\bjtitle{ACM Trans. Math. Softw.}
\bvolume{23}(\bissue{4}),
\bfpage{550}--\blpage{560}
(\byear{1997})
\end{barticle}
\endbibitem

\bibitem[\protect\citeauthoryear{Byrd et~al.}{1995}]{LBFGSB2}
\begin{barticle}
\bauthor{\bsnm{Byrd}, \binits{R.H.}},
\bauthor{\bsnm{Lu}, \binits{P.}},
\bauthor{\bsnm{Nocedal}, \binits{J.}},
\bauthor{\bsnm{Zhu}, \binits{C.}}:
\batitle{A limited memory algorithm for bound constrained optimization}.
\bjtitle{SIAM Journal on Scientific Computing}
\bvolume{16}(\bissue{5}),
\bfpage{1190}--\blpage{1208}
(\byear{1995})
\doiurl{10.1137/0916069}
\end{barticle}
\endbibitem

\bibitem[\protect\citeauthoryear{Sagnol and Stahlberg}{2022}]{PICOS}
\begin{barticle}
\bauthor{\bsnm{Sagnol}, \binits{G.}},
\bauthor{\bsnm{Stahlberg}, \binits{M.}}:
\batitle{{PICOS}: A {Python} interface to conic optimization solvers}.
\bjtitle{Journal of Open Source Software}
\bvolume{7}(\bissue{70}),
\bfpage{3915}
(\byear{2022})
\doiurl{10.21105/joss.03915}
\end{barticle}
\endbibitem

\bibitem[\protect\citeauthoryear{{MOSEK ApS}}{2025}]{mosek}
\begin{botherref}
\oauthor{\bsnm{{MOSEK ApS}}}:
MOSEK.
\url{https://www.mosek.com/}
\end{botherref}
\endbibitem

\bibitem[\protect\citeauthoryear{Wang and
  K{\i}l{\i}n{\c{c}}-Karzan}{2022}]{QCQP1}
\begin{barticle}
\bauthor{\bsnm{Wang}, \binits{A.L.}},
\bauthor{\bsnm{K{\i}l{\i}n{\c{c}}-Karzan}, \binits{F.}}:
\batitle{On the tightness of sdp relaxations of qcqps}.
\bjtitle{Mathematical Programming}
\bvolume{193}(\bissue{1}),
\bfpage{33}--\blpage{73}
(\byear{2022})
\doiurl{10.1007/s10107-020-01589-9}
\end{barticle}
\endbibitem

\bibitem[\protect\citeauthoryear{Shor}{1990}]{QCQPShor}
\begin{barticle}
\bauthor{\bsnm{Shor}, \binits{N.Z.}}:
\batitle{Dual quadratic estimates in polynomial and boolean programming}.
\bjtitle{Annals of Operations Research}
\bvolume{25}(\bissue{1}),
\bfpage{163}--\blpage{168}
(\byear{1990})
\doiurl{10.1007/BF02283692}
\end{barticle}
\endbibitem

\end{thebibliography}

\end{document}